\documentclass[11pt]{article}

\usepackage{fullpage,authblk}

\usepackage{titlesec}

\usepackage{ifthen}
\usepackage[funcfont=italic,full]{./complexity/complexity}
\usepackage{subfig}
\usepackage{hyperref}
\usepackage{graphicx}
\usepackage{amssymb}
\usepackage{amsfonts}
\usepackage{lipsum}
\usepackage{epstopdf}
\usepackage{amsopn}
\usepackage{braket}
\usepackage{algorithm}     
\usepackage{algpseudocode} 
\usepackage{enumitem}

\renewcommand{\Re}{\mathbb{R}}
\newcommand{\Q}{\mathbb{Q}}
\newcommand{\Z}{\mathbb{Z}}
\renewcommand{\P}{\mathcal{P}}
\renewcommand{\S}{\mathcal{S}}
\renewcommand{\C}{\mathcal{C}}
\renewcommand{\D}{\mathcal{C}^\circ}
\renewcommand{\E}{\mathcal{E}}
\renewcommand{\K}{\mathcal{K}}
\newcommand{\Qq}{\mathcal{Q}}
\newcommand{\bigmid}{\;\middle|\;}


\DeclareMathOperator{\conv}{conv}
\DeclareMathOperator{\inter}{int}
\DeclareMathOperator{\cone}{cone}
\DeclareMathOperator{\argmin}{argmin}
\DeclareMathOperator{\argmax}{argmax}

\DeclareMathOperator{\st}  {s.\!t.}

\algdef{SE}[DOWHILE]{Do}{doWhile}{\algorithmicdo}[1]{\algorithmicwhile\ #1}

\makeatletter
\newenvironment{mysubequations}[1]
 {%
  \addtocounter{equation}{-1}%
  \begin{subequations}
  \def\@currentlabel{#1}%
 }
 {%
  \end{subequations}\ignorespacesafterend
 }
\makeatother

\usepackage[authoryear]{natbib}
\newcommand{\mycite}[2]{\citet{#2}}
\newcommand{\mycitep}[1]{\citep{#1}}
\usepackage{amsmath}
\usepackage{amsthm}
\newcommand{\myparagraph}[1]{\paragraph{#1.}}

\newcommand{\Sectiref}[1]{Section~\ref{#1}}

\newcommand{\Figurref}[1]{Figure~\ref{#1}}
\newcommand{\Theorref}[1]{Theorem~\ref{#1}}

\newtheorem{theorem}{Theorem}
\newtheorem{proposition}{Proposition}
\newtheorem{definition}{Definition}

\newtheorem{lemma}{Lemma}

\let\nativeeqref\eqref

\setlength{\parindent}{0in}
\setlength{\parskip}{0.1in}

\graphicspath{{fig/}}

\begin{document}

\title{On the Complexity of Inverse Mixed Integer Linear Optimization}
\author{Aykut Bulut\thanks{E-mail: \texttt{aykutblt@gmail.com}}}
\author{Ted K. Ralphs\thanks{E-mail: \texttt{ted@lehigh.edu}}}
\affil{Department of Industrial and Systems Engineering, Lehigh University, USA}

\maketitle

\begin{abstract}
Inverse optimization is the problem of determining the values of missing input
parameters for an associated \emph{forward problem} that are closest to given
estimates and that will make a given \emph{target vector} optimal. This study
is concerned with the relationship of a particular inverse mixed integer
linear optimization problem (MILP) to both the forward problem and the
separation problem associated with its feasible region.
We show that a decision version of the inverse MILP in which a primal bound is
verified is $\coNP$--complete, whereas primal bound verification for the
associated forward problem is $\NPcomplexity$--complete, and that the optimal
value verification problems for both the inverse problem \emph{and} the
associated forward problem are complete for the complexity class
$\Dcomplexity^{\Pcomplexity}$. We also describe a cutting-plane algorithm for
solving inverse MILPs that illustrates the close relationship between the
separation problem for the convex hull of solutions to a given MILP and the
associated inverse problem. The inverse problem is shown to be equivalent to
the separation problem for the radial cone defined by all inequalities that
are both valid for the convex hull of solutions to the forward problem and
binding at the target vector. Thus, the inverse, forward, and separation
problems can be said to be equivalent.

\bigskip\noindent\\
{\bf Keywords:} Inverse optimization, mixed integer linear optimization,
computational complexity, polynomial hierarchy
\end{abstract}

\section{Introduction}
\label{sec:introduction}

In this paper, we study the relationship of the inverse integer linear
optimization problem to both the optimization problem from which it arose,
which we refer to as the \emph{forward problem}, and the associated separation
problem. We show that these three problems have a strong relationship from an
algorithmic standpoint by describing a cutting-plane algorithm for the inverse
problem that uses the forward problem as an oracle and also solves the
separation problem. 
From a complexity standpoint, we show that certain decision versions of these
three problems are all complete for the complexity class
$\Dcomplexity^{\Pcomplexity}$, introduced originally
by~\mycite{Papadimitriou}{Papadimitriou:1982:CF:800070.802199}. Motivated by
this analysis, we argue that the optimal value verification problem is a more
natural decision problem to associate with many optimization problems and that
$\Dcomplexity^{\Pcomplexity}$ may provide a more appropriate class in which to
place difficult discrete optimization problems than the more commonly cited
$\NPcomplexity$-hard.

An \emph{optimization problem} is that of determining a member of a feasible
set (an \emph{optimal solution}) that minimizes the value of a given objective
function. The feasible set is typically described as the points in a vector
space satisfying a given set of equations, inequalities, and disjunctions
(usually in the form of a requirement that the value of a certain element of
the solution take on values in a discrete set).

An \emph{inverse optimization problem}, in contrast, is a related problem in
which the description of the original forward optimization problem is
incomplete (some parameters are missing or cannot be observed), but a full or
partial solution \emph{can} be observed. The goal is to determine values for
the missing parameters with respect to which the given solution would be
optimal for the resulting complete problem. Estimates for the missing
parameters may be given, in which case the goal is to produce a set of
parameters that is as close to the given estimates as possible by some metric.

The forward optimization problem of interest in this paper is the
\emph{mixed integer linear optimization problem}
\begin{equation}
\max_{x \in \S} d^\top x,
\label{eq:milp} \tag{MILP}
\end{equation}
where $d \in \Q^n$ and
\begin{equation*}
\label{eq:set-s}
\S = \left\{x \in \Re ^n \bigmid Ax \leq b\right\} \cap
(\Z ^r \times \Re^{n-r})
\end{equation*}
for $A \in \Q^{m \times n}$, $b \in \Q^m$ and for some nonnegative integer
$r$. In the case when $r = 0$, \eqref{eq:milp} is known simply as
a \emph{linear optimization problem} (LP). 

One can associate a number of different inverse problems with~\eqref{eq:milp},
depending on what parts of the description $(A, b, d)$ are unknown and what
form the objective function of the inverse problem takes. Here, we study the
case in which the objective function $d$ of the forward problem is the unknown
element of the input, but in which $A$ and $b$, along with
a \emph{target vector} $x^0 \in \Q^n$, are given. A feasible solution to the
inverse problem (which we refer to as a \emph{feasible objective}) is any
$\hat{d} \in \Re^n$ for which $\hat{d}^\top x^0 = \max_{x \in \S} \hat{d}^\top
x$.

It is important in the analysis that follows to be precise about the
assumptions on the target vector $x^0$.
Our initial informal statement of the problem implicitly assumed that
$x^0 \in \S$, since otherwise, $x^0$ cannot technically be an optimal
solution, regardless of the objective function chosen. On the other hand,
neither the more precise mathematical definition given in the preceding
paragraph nor the mathematical formulations we introduce shortly require
$x^0 \in \S$ and both can be interpreted even when $x^0 \not\in \S$. As a
practical matter when solving inverse problems in practice, this subtle
distinction is usually not very important, since membership in $\S$ can be
verified in a preprocessing step if necessary. However, in the context of
complexity analysis and in considering the relationship of the inverse problem
to the related separation and optimization problems, this point \emph{is}
important and we return to it. For example, if we do not make this assumption,
the inverse optimization problem can be seen to be equivalent to both the
separation problem and the problem of verifying a given dual bound on the
optimal solution value of an MILP.

For these and other reasons that will become clear, we do not assume
$x^0 \in \S$, but this may make some aspects of what follows a bit ambiguous.
To resolve any ambiguity, we replace $\S$ with the augmented set $\S^+
= \S \cup \{x^0\}$, when appropriate, in the remainder of the paper.

\subsection{Formulations}

We now present several mathematical formulations of what we refer to from now
on as the \emph{inverse mixed integer linear optimization problem} (IMILP). A
straightforward formulation of this problem that explains why we refer to the
general class of problems as ``inverse'' problems is as that of computing the
mathematical inverse of a function that is parametric in some part of the
input to a given problem instance. In this case, the relevant function
is
\begin{equation*}
\phi(d) = \argmax_{x \in \S^+} d^\top x.
\end{equation*}
In terms of the function $\phi$, a feasible objective is any element of
the preimage $\phi^{-1}(x^0)$. To make the IMILP an optimization
problem in itself, we add an objective function $f:\Q^n \rightarrow \Q$, to
obtain the general formulation
\begin{equation}
\min_{d \in \phi^{-1}(x^0)} f(d).  \label{eq:inv} \tag{INV}
\end{equation}
The traditional objective function used for inverse problems in the literature
is $f(d) := \|c - d\|$, the minimum norm distance from $d$ to a given
\emph{estimated objective function} $c \in \Q^n$ (the specific norm is not
important for defining the problem, but we assume a $p$-norm when proving the
formal results). This choice of objective, although standard, has some
nonintuitive properties. First, it is not scale-invariant---scaling a given
feasible objective $\hat{d}$ changes the resulting objective function value.
In other words, $f(\hat{d}) \not= f(\lambda \hat{d})$ for $\lambda \not= 1$,
so $\hat{d}$ and $\lambda \hat{d}$ do not have the same objective function
value, although it is clear that $\hat{d}$ and $\lambda \hat{d}$ are
equivalent solutions in most settings. Second, the objective function value is
always nonnegative. Both these properties have implications we discuss further
below. 

The formulation~\eqref{eq:inv} does not suggest any direct connection to
existing methodology for solving mathematical optimization problems, so we
next discuss several alternative formulations of the problem as a standard
mathematical optimization problem. We first consider the following formulation
of the IMILP as the semi-infinite optimization problem
\begin{equation}
\label{eq:imilp}
\tag{IMILP}
\begin{aligned}
& \min _d  &\quad &\|c-d\|,                 &\quad&    \\
& \st      &      & d^\top x \leq  d^\top x^0 &     &\forall x\in \S.
\end{aligned}
\end{equation}
As in the first formulation, $d$ is a vector of variables, while $c \in \Q^n$
is the given estimated objective function. Note that in~\eqref{eq:imilp}, if
we instead let $x^0$ vary, replacing it with a variable $x$, and interpret $d$
as a fixed objective function, replacing $\|c-d\|$ with the objective $d^\top
x$ of the forward problem~\eqref{eq:milp}, we get a reformulation of the
forward problem~\eqref{eq:milp} itself. This formulation can be made finite,
in the case that $\S$ is bounded, by replacing the possibly infinite set of
inequalities with only those corresponding to the extreme points of
$\conv(\S)$. In the unbounded case, we also need to include inequalities
corresponding to the extreme rays.

Problem~\eqref{eq:imilp} can also be formulated as a conic optimization
problem. Although this is not the traditional way of describing the problem
mathematically, it is arguably the most intuitive representation and is the one
that best highlights the underlying mathematical structure. The following
cones and related sets all play important roles in what follows:
\begin{align*}
& \K := \{ (y,d) \in \Re^{n+1} \mid \|c-d\| \leq y\},\\
& \K(\gamma) := \{ d \in \Re^{n} \mid (\gamma, d) \in \K\},\\
& \K^*(\gamma) := \{x \in \Re ^n \, \mid \, d^\top(x^0-x) \leq 0
\; \forall d \in \K(\gamma) \}, \\
&\C(x^0) := \cone(\S^+ - \{x^0\}) = \cone(\{x-x^0 \mid x \in \S\}),
\textrm{ and}\\
&\D(x^0) := \{d \in \Re^n \mid d^\top (x-x^0) \leq 0 \;\forall x \in \S\}.
\end{align*}
Here, $\K$ is a norm cone, while $\K(\gamma)$ is a ball with center $c$ and
radius $\gamma$ that is a level set of $\K$ and contains vectors whose
objective value in~\eqref{eq:imilp} is at most $\gamma$. $\K^*(\gamma)$
is the set consisting of points, not necessarily in $\S$, that have an objective
function value greater than or equal to that of $x^0$ for \emph{all} the vectors
in $\K(\gamma)$. $\K^*(\gamma)$ can also be seen as the radial cone
obtained by translating the dual of $\cone(\K(\gamma))$ from the origin to
$x^0$ (this is the reason for the slightly abused notation that is typically
used to denote the dual of a cone).

The set $\D(x^0)$ is the set of feasible objectives of~\eqref{eq:imilp} with
target vector $x^0$ and is
precisely the polar of $\C(x^0)$. $\C(x^0)$, on the other hand, is a
translation to the origin of the radial cone that is the intersection of the
half-spaces associated with the facet-defining inequalities valid for
$\conv(\S^+)$ that are binding at $x^0$. Equivalently, it is the radial cone
with vertex at $x^0$ generated by rays $x - x^0$ for all $x \in S$. The
notational dependence on $x^0$ in both sets is for convenience later when
various target vectors will be constructed in the reductions used in the
complexity proofs. We should also point out that $\D(x^0)$ is referred to as
the \emph{normal cone} at $x^0$ in convex analysis and would be denoted as
$N_{\conv(\S^+)}(x^0)$ in the standard notation~\mycitep{rockafellar:70}. Due to
the obvious connections with the theory of polarity in discrete optimization,
however, we maintain our alternative notation here.

Finally, in terms of the cones and sets introduced,~\eqref{eq:imilp} can be
reformulated as 
\begin{equation}
\label{eq:imilp3} \tag{IMILP-C}
\min_{d \in \K(y) \cap \D(x^0)} y.
\end{equation}
\Figurref{fig:intro_ex} illustrates the geometry of the inverse MILP. Here,
$\S$ is the discrete set indicated by the black dots. The estimated objective
function is $c = (0,2)$ and the target vector is $x^0 = (3,1)$. The convex
hull of $\S$ and the cone $\D(x^0)$ (translated to $x^0$) are shaded. The
ellipsoids show the sets of points with a fixed distance to $x^0+c$ for a
given norm. The (unique) optimal solution for this example is vector $d^*$,
which is also the unique point of intersection of $\K(\|c-d^*\|)$ and
$\D(x^0)$. The point $x^0+d^*$ is also illustrated.

\begin{figure}
\centering
\includegraphics{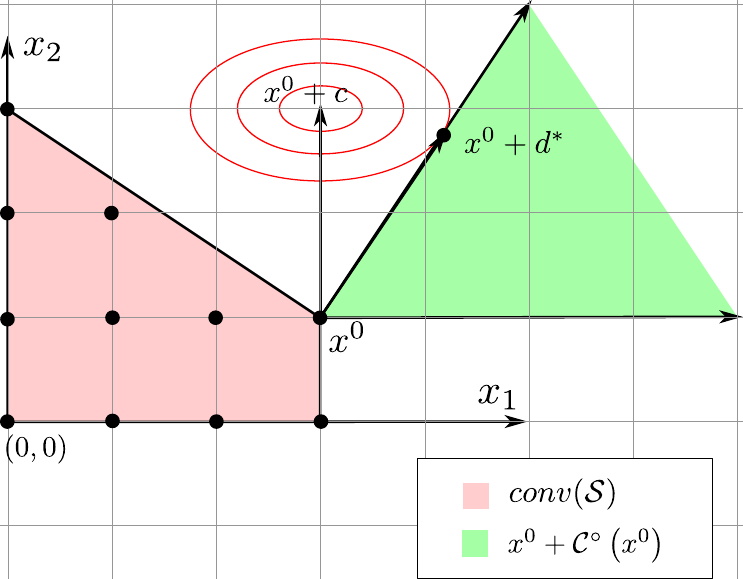}
\caption{Two-dimensional inverse MILP \label{fig:intro_ex}}
\end{figure}

Figure~\ref{fig:sets} presents a more detailed picture of how the various
cones and sets introduced so far are related by displaying sets $\conv(\S)$,
$\D(x^0)$, $\cone(\K(\gamma))$, and $\K^*(\gamma)$ for four different
two-dimensional inverse problems with Euclidean norm.
\begin{figure}[tbh]
\begin{center}
\subfloat[Instance 1\label{fig:sets-1}]{\includegraphics[width=0.43\textwidth]
{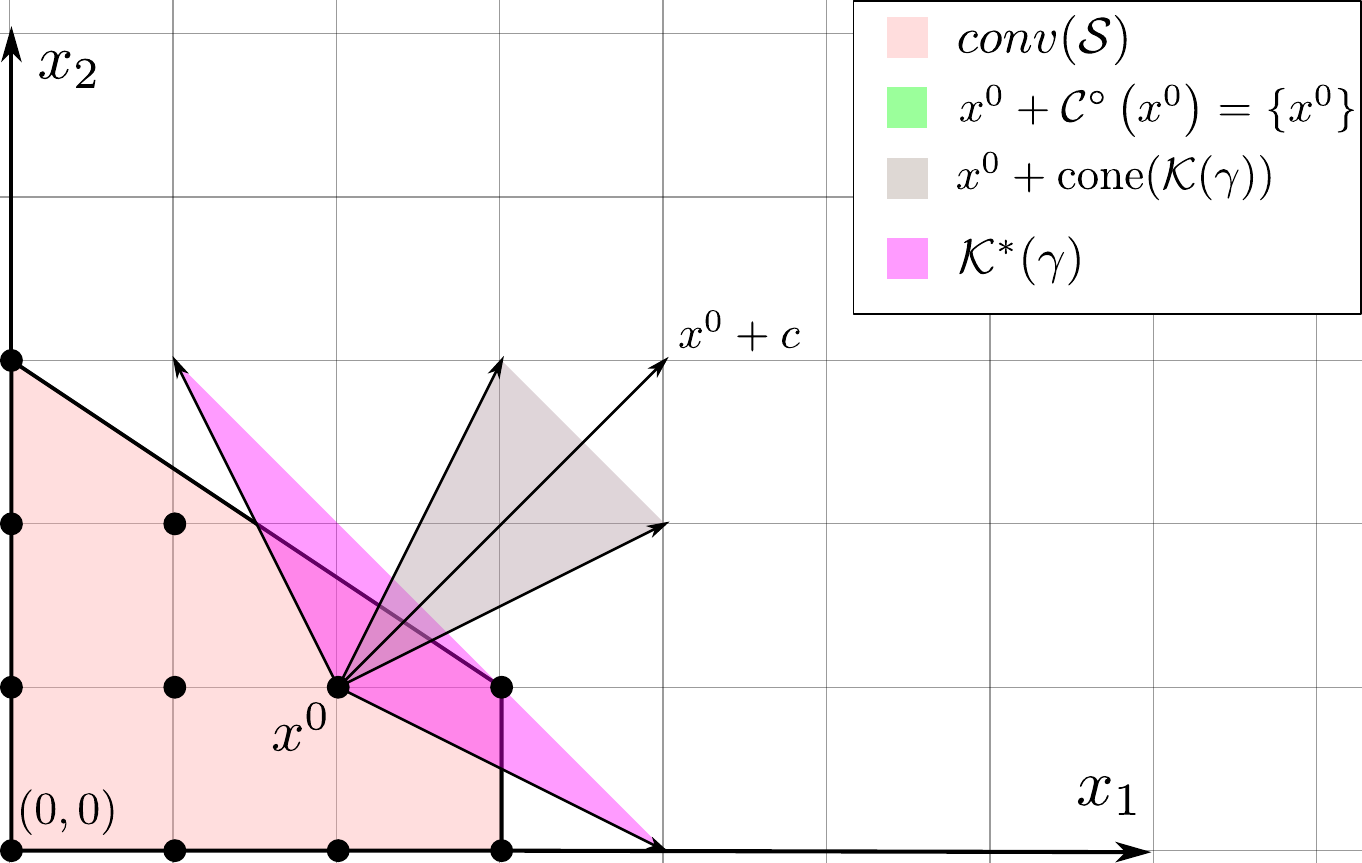}}\;
\subfloat[Instance 2\label{fig:sets-2}]{\includegraphics[width=0.45\textwidth]
{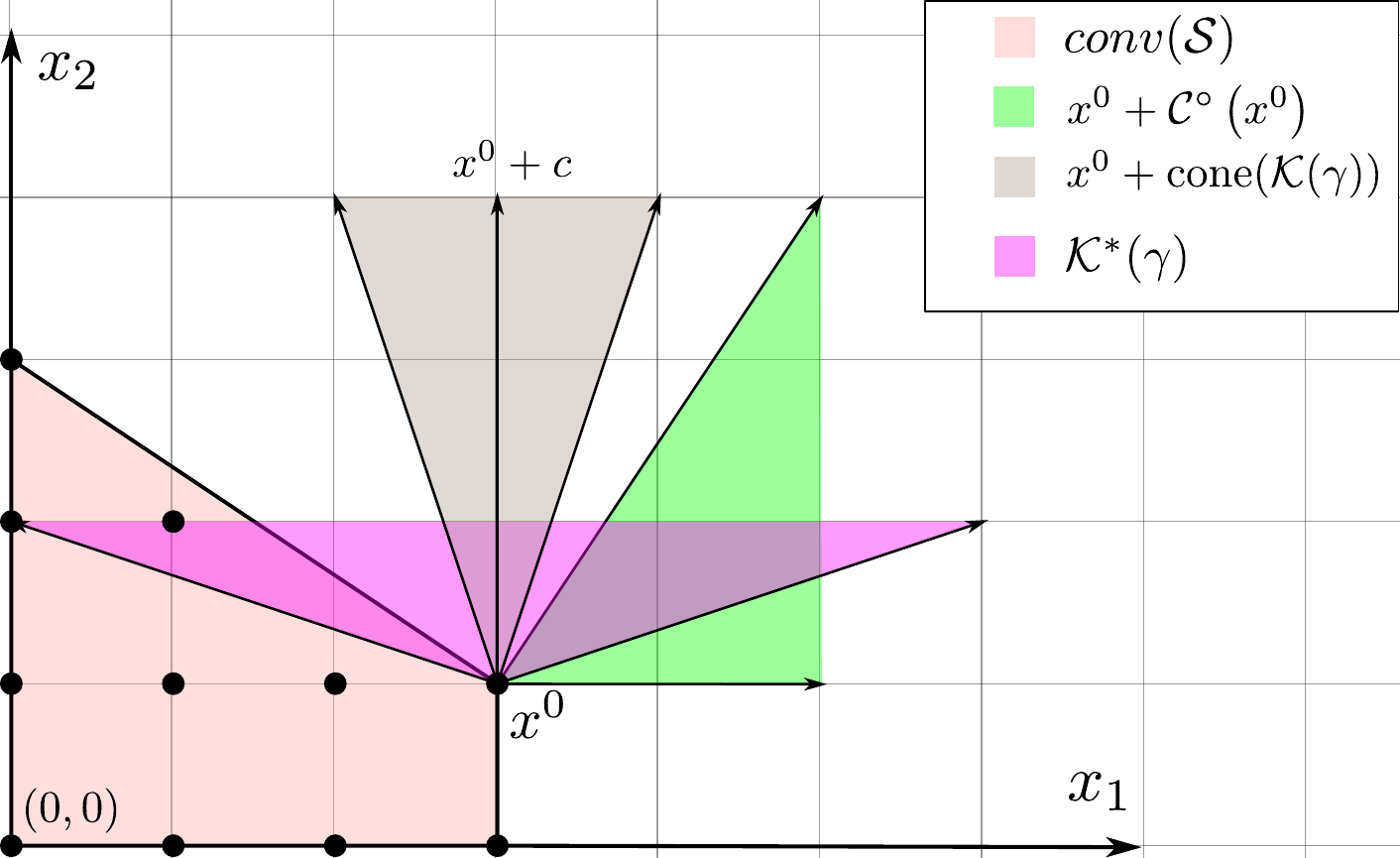}}\\
\subfloat[Instance 3\label{fig:sets-3}]{\includegraphics[width=0.45\textwidth]
{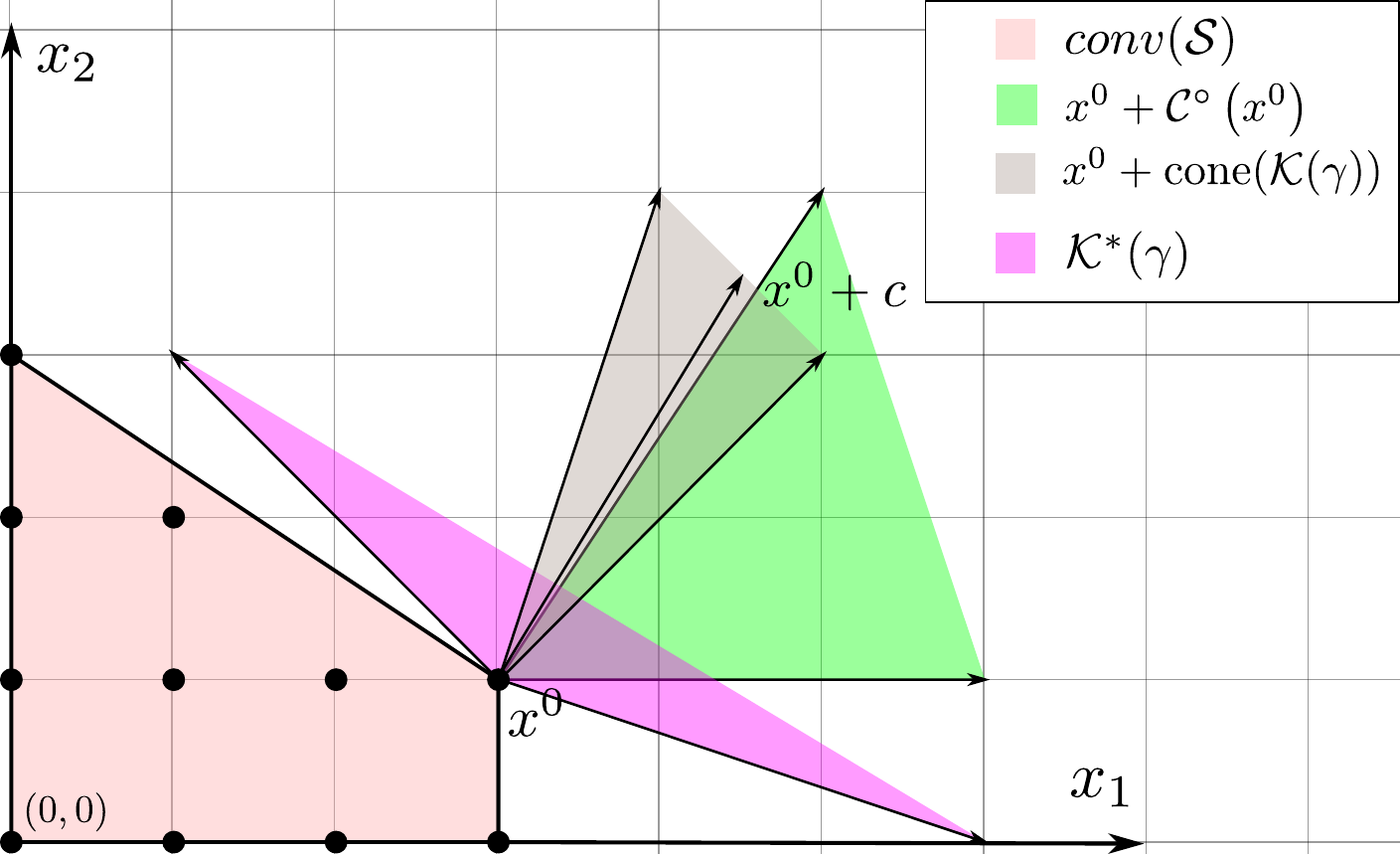}}\;
\subfloat[Instance 4\label{fig:sets-4}]{\includegraphics[width=0.45\textwidth]
{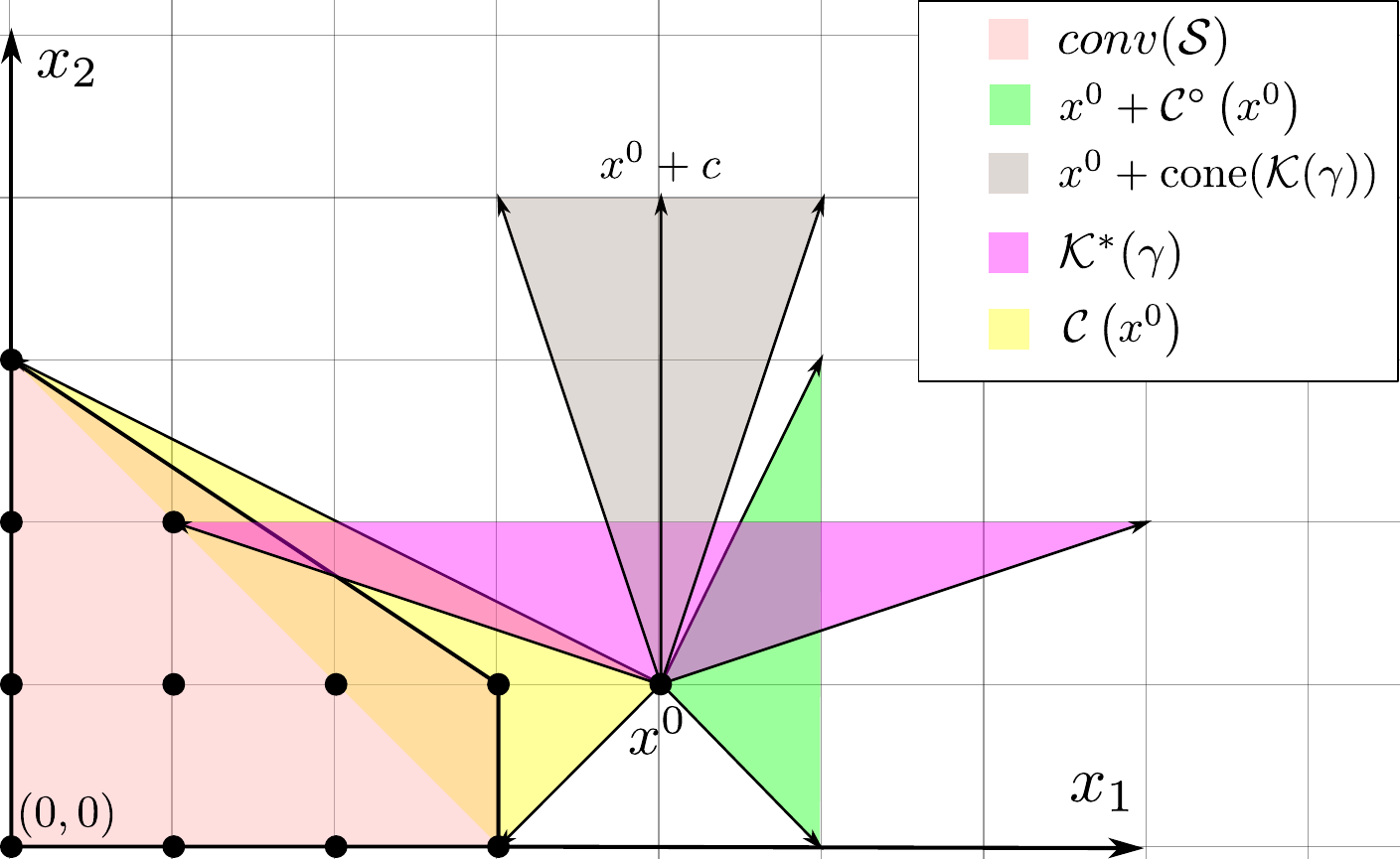}}
\end{center}
\caption{Pictorial illustration of sets $\conv(\S)$, $\D(x^0)$, $\K(\gamma)$,
$\K^*(\gamma)$ for 4 IMILP instances \label{fig:sets}}
\end{figure}
Notions of duality that underlie many of the concepts discussed in the paper
can be seen in the relationships between these sets. $\K^*(\gamma)$ and $\S$
can be thought of as being in the ``primal space'' with respect to the
original problem (the space of primal solutions), whereas the cone $\D(x^0)$
and the ball $\K(\gamma)$ can be thought of as being in the ``dual space,''
the space of directions. In the context of the inverse problem, these roles
are reversed, e.g., $\D(x^0)$ is the set of primal solutions for the inverse
problem.

The interpretation of the feasible region of~\eqref{eq:imilp} as the polar of
$\C(x^0)$ leads to a third formulation in terms of the so-called $1$-polar of
$\conv(\S)$, defined as
\begin{equation*}
\P^1 = \{d \in \Re^n \mid d^\top x \leq 1 \; \forall
x \in \conv(\S)\}.
\end{equation*}
When $\conv(\S)$ is full-dimensional and
$0 \in \inter\left(\conv(\S)\right)$, the 1-polar is the set of all normalized
inequalities valid for $\conv(\S)$ (see~\mycitep{schrijver1986theory} for
definitions). Under these assumptions, \eqref{eq:imilp} can also be
reformulated as
\begin{alignat}{5}
\label{eq:imilp4}
& \min _{d,\rho} &\quad &\|c- \rho d\|      & \quad & \notag \\
& \st   &      & d \in \P ^1,      &       & \notag \\
&       &      & d ^\top x^0 \geq 1, & & \tag{IMILP-1P} \\
&       &      & \rho \geq 0.      &       & \notag
\end{alignat}
In~\eqref{eq:imilp4}, $\rho$ is a multiplier that allows for scaling of the
members of the 1-polar (which are normalized) in order to improve the
objective function value, but otherwise plays no important role. It would seem
to be more natural to require $\|c\| = 1$ or normalize in some other way to
avoid this scaling, but the presence of this scaling variable
highlights that the usual formulation does have this rather unnatural feature.
When $d \in \P^1$, the single constraint $d^\top x^0 \geq 1$ is, in effect,
equivalent to the exponential set of constraints in~\eqref{eq:imilp}, and
ensures $d$ is a feasible objective. Observe also that relaxing the constraint
$d ^\top x^0 \geq 1$ yields a problem similar to the classical separation
problem, but with a different objective function. We revisit this idea
in~\Sectiref{sec:algo}.

Note that when $\conv(\S^+)$ is not full-dimensional, any objective vector in
the subspace orthogonal to the affine space containing $\conv(\S^+)$ is a
feasible objective for the inverse problem. If we let $c_{\S^+}$ be the
projection of $c$ onto the smallest affine space that contains $\S^+$ and
$c_{\S^+} ^{\perp}$ be the projection of $c$ onto the orthogonal subspace, so
that $c = c_{\S^+} + c_{\S^+}^{\perp}$, then whenever $x^0$ is in the relative
interior of $\conv(\S)$, $c_{\S} ^{\perp}$ will be an optimal solution. On the
other hand, when $\conv(\S)$ is full-dimensional, the unique optimal solution
is 0 whenever $x^0$ is in the interior of $\conv(\S)$. 

\subsection{The Separation Problem} The close relationship
between the inverse problem~\eqref{eq:imilp} and the separation problem for
$\conv(\S)$ should already be evident, but we now introduce this idea
formally. Given an $\hat{x} \in \Q^n$, the separation problem for $\conv(\S)$
is to determine whether $\hat{x} \in \conv(\S)$ and, if not, to generate a
hyperplane separating $\hat{x}$ from $\conv(\S)$. When
$\hat{x} \not\in \conv(\S)$ and such a separating hyperplane exists, we can
associate with each such hyperplane a \emph{valid inequality}, defined as
follows.
\begin{definition}
A \emph{valid inequality} for a set $\Qq$ is a pair $(a, b) \in \Q^{n+1}$ such
that $\Qq \subseteq \{x \in \Re^n \mid a^\top x \leq b\}$. The inequality is
said to be \emph{violated by} $\hat{x} \in \Q^n$ if $a^\top \hat{x} > b$.
\end{definition}
Generating a separating hyperplane is equivalent to determining the
existence of $\hat{d} \in \Re^n$ such that
\begin{equation*}
{\hat{d}}^\top \hat{x} > {\hat{d}}^\top x \quad \forall x \in \S.
\end{equation*}
In such a case, $({\hat{d}}^\top, \max_{x \in \S} {\hat{d}}^\top x)$ is
an inequality valid for $\conv(\S)$ that is violated by $\hat{x}$ (therefore
proving that $\hat{x} \not\in \conv(\S)$). On the other hand, $\hat{d}$ is
feasible for~\eqref{eq:imilp} if and only if
\begin{equation*}
{\hat{d}}^\top x \leq {\hat{d}}^\top x^0 \quad  \forall x \in \S,
\end{equation*}
which similarly means that $({\hat{d}}^\top, {\hat{d}}^\top x^0)$ is an
inequality valid for $\conv(\S)$. Thus, feasible solutions to~\eqref{eq:imilp}
can also be viewed as an inequality valid for $\conv(\S)$ binding at $x^0$.
Moreover, when $x^0 \not\in \conv(\S)$, $({\hat{d}}^\top, \max_{x \in \S}
{\hat{d}}^\top x)$ is an inequality valid for $\conv(\S)$ that is violated by
$x^0$. This provides an informal argument for the equivalence
of~\eqref{eq:imilp} and the separation problem. More will be said about this in
the following sections. 

\subsection{Previous Work} There are a range of different flavors of
the inverse optimization problem. The inverse problem we
investigate is to determine objective function coefficients that make a given
solution optimal, but other flavors of inverse optimization include
constructing a missing part of either the coefficient matrix or the right-hand
sides.
\mycite{Hueberger}{Heuberger2004} provides a detailed survey of different
types of inverse combinatorial optimization problems, including types for
which the inverse problem seeks parameters other than objective function
coefficients. A survey of solution procedures for specific combinatorial
problems is provided, as well as a classification of the inverse problems that
are common in the literature. According to this classification, the inverse
problem we study in this paper is an \emph{unconstrained}, \emph{single
feasible objective}, and \emph{unit weight norm} inverse problem.
Our results can be straightforwardly extended to some related cases, such as
multiple given solutions.

\mycite{Cai et al.}{Cai1999} examine an inverse center location problem in
which the aim is to construct part of the coefficient matrix that minimizes the
distances between nodes for a given solution. It is shown that even though the
center location problem is polynomially solvable, this particular inverse
problem is $\NPcomplexity$-hard. This is done by way of a polynomial
transformation of the satisfiability problem to the decision version of the
inverse center location problem. This analysis indicates that the problem of
constructing part of the coefficient matrix is harder than the forward version
of the problem.

\mycite{Huang}{Huang2005} examines the inverse knapsack problem and inverse
integer optimization problems. Pseudopolynomial algorithms for both the
inverse knapsack problem and inverse problems for which the forward problem
has a fixed number of constraints are presented. The latter is achieved by
transforming the inverse problem to a shortest path problem on a directed
graph.

\mycite{Schaefer}{Schaefer2009} studies general inverse integer optimization
problems.
Using super-additive duality, a polyhedral description of the set of all
feasible objective functions is derived. This description has only continuous
variables but an exponential number of constraints. A solution method using
this polyhedral description is proposed.
Finally, \mycite{Wang}{Wang2009114} suggests a cutting-plane algorithm similar
to the one suggested below and presents computational results on several test
problems with an implementation of this algorithm.

The case when the feasible set is an explicitly described polyhedron is
well-studied by \mycite{Ahuja and Orlin}{AhujaSeptember2001}. In their study,
they analyze the shortest path, assignment, minimum cut, and minimum cost flow
problems under the $\ell_1$ and $\ell_{\infty}$ norms in detail. They also
conclude that the inverse optimization problem is polynomially solvable when
the forward problem is polynomially solvable. The present study aims to
generalize the result of Ahuja and Orlin to the case when the forward problem
is not necessarily polynomially solvable, as well as to make connections to
other well-known problems.

In the remainder of the paper, we first introduce a cutting-plane algorithm
for solving~\eqref{eq:imilp} in the case of the $\ell_\infty$ and $\ell_1$
norms. 
In~\Sectiref{sec:algo}, we show that for these norms, the problem can be
expressed as an LP using standard techniques, albeit one with an exponential
number of constraints. The reformulation can then be readily solved using a
standard cutting-plane approach, as observed by~\mycite{Wang}{Wang2009114}. On
the other hand, in~\Sectiref{sec:complexity-IMILP}, we establish the
computational complexity of the problem and show that it is the same for any
$p$-norm. 

\section{A Cutting-plane Algorithm}
\label{sec:algo}

In this section, we describe a basic cutting-plane algorithm for
solving~\eqref{eq:imilp} under the $\ell_1$ and $\ell_{\infty}$ norms. The
algorithm is conceptual in nature and presented in order to illustrate the
relationship of the inverse problem to both the forward problem and the
separation problem. A practical implementation of this algorithm would require
additional sophistication and the development of such an implementation is not
our goal in this paper.

The first step in the algorithm is to formulate~\eqref{eq:imilp} explicitly as
an LP using standard linearization techniques. The objective function of the
inverse MILP under the $\ell_1$ norm can be linearized by the introduction of
variable vector $y$, and associated constraints as shown below.
\begin{mysubequations}{IMILP-L1}
\label{eq:iml1}
\begin{align}
z_{1}^{-1} =
 \min _{d,y,\theta} & \quad \theta                           \nonumber \\
 \st  & \quad \theta = \sum _{i=1} ^n {y_i},  \label{eq:ineq-l1-a} \\
      & \quad c_i - d_i \leq y_i & & \forall i \in \{1,2,\dots,n\},
 \label{eq:ineq-l1-b} \\
      & \quad d_i - c_i \leq y_i & & \forall i \in \{1,2,\dots,n\},
 \label{eq:ineq-l1-c} \\
      & \quad d^\top x \leq d^\top x^0 & & \forall x \in \S.
 \label{eq:opt-ineq-l1}
\end{align}
\end{mysubequations}
For the case of the $\ell_{\infty}$ norm, the variable $\theta$ and two sets
of constraints are introduced to linearize the problem.
\begin{mysubequations}{IMILP-INF}
\label{eq:imli}
\begin{align}
z_{\infty}^{-1} =
\min _{d,\theta} &\quad \theta \nonumber \\
\st  &\quad c_i - d_i \leq \theta      & & \forall i \in \{1,2,\dots,n\},
\label{eq:ineq-inf-a}\\
     &\quad d_i - c_i \leq \theta      & & \forall i \in \{1,2,\dots,n\},
\label{eq:ineq-inf-b}\\
     &\quad d^\top x \leq d^\top x^0 & & \forall x \in \S.
\label{eq:opt-ineq-inf}
\end{align}
\end{mysubequations}
Both~\nativeeqref{eq:iml1} and~\nativeeqref{eq:imli} are continuous,
semi-infinite optimization problems. To obtain a finite problem, one can
replace the inequalities~\eqref{eq:opt-ineq-l1} and~\eqref{eq:opt-ineq-inf}
with constraints~\eqref{eq:ext-points} and~\eqref{eq:ext-rays} involving the
finite set $\E$ of extreme points and $\mathcal{R}$ of rays of the
convex hull of $\S$.
\begin{align}
d^\top x & \leq d^\top x^0 && \forall x \in \E,
\label{eq:ext-points}\\
d^\top r & \leq 0 && \forall r \in \mathcal{R}. \label{eq:ext-rays}
\end{align}
Although constraints~\eqref{eq:ext-points} and~\eqref{eq:ext-rays} yield a
finite formulation, the cardinality of $\E$ and $\mathcal{R}$ may still be
very large and generating them explicitly is likely to be very challenging. It
is thus not practical to construct this mathematical program explicitly via a
priori enumeration. The cutting-plane algorithm avoids explicitly enumerating
the inequalities in the formulation by generating them dynamically in the
standard way.
The algorithm described here is unsophisticated and although versions of it
have appeared in the literature, we describe it again to illustrate the
basic principles at work and to make the connection to a similar existing
algorithm for solving the separation problem.

We describe the algorithm only for the case of~\nativeeqref{eq:imli}, as the
extension to~\nativeeqref{eq:iml1} is straightforward. We assume $\S$ is
bounded, so that $\conv(\S)$ has no extreme rays and $\mathcal{R}=\emptyset$.
As previously observed, \nativeeqref{eq:imli} is an LP with an exponential
class~\eqref{eq:opt-ineq-inf} of inequalities. Nevertheless, the well-known
result of~\mycite{Gr\"otschel et al.}{GroetschelLovaszSchrijver1993} tells us
that~\nativeeqref{eq:imli} can be solved efficiently using a cutting-plane
algorithm, provided we can solve the problem of separating a given point from
the feasible region efficiently. The constraints~\nativeeqref{eq:ineq-inf-a}
and~\nativeeqref{eq:ineq-inf-b} can be explicitly enumerated, so we focus on
generation of constraints~\nativeeqref{eq:opt-ineq-inf}, which means we are
solving the separation problem for set $\D(x^0)$. For an arbitrary
$\hat{d} \in \Re^n$, this separation problem is to either verify that
$\hat{d} \in \D(x^0)$ or determine a hyperplane separating $\hat{d}$ from
$\D(x^0)$.

The question of whether $\hat{d} \in \D(x^0)$ is equivalent to asking whether
$\hat{d}^\top x \leq \hat{d}^\top x^0$ for all $x \in \S$. This can be
answered by determining $x^* \in \argmax_{x \in \S} \hat{d}^\top x$. When
$\hat{d}^\top x^* > \hat{d}^\top x^0$, then $x^*$ yields a new inequality
valid for $\D(x^0)$ that is violated by $\hat{d}$. Otherwise, we have a proof
that $\hat{d} \in \D(x^0)$. Hence, the separation problem for $\D(x^0)$ is
equivalent to the forward problem.

The algorithm alternates between solving a master problem and the separation
problem just described, as usual. The initial master problem is an LP obtained
by relaxing the constraints~\eqref{eq:opt-ineq-inf} in~\nativeeqref{eq:imli}.
After solving the master problem, we attempt to separate its solution from the
set $\D(x^0)$ and either add the violated inequality or terminate, as
appropriate. More formally, the master problem is to determine
\begin{equation}
\label{eq:InvP_k} \tag{\text{Inv$\P_k$}}
\begin{aligned}
\left(d^k, \theta^k\right) \in \argmin_{\left(d,\theta\right)} &\quad \theta \\
\st  &\quad c_i - d_i \leq \theta        & \forall i \in \{1,2,\dots,n\}, \\
     &\quad d_i - c_i \leq \theta        & \forall i \in \{1,2,\dots,n\}, \\
     &\quad d^\top x \leq d^\top x^0 & \forall x \in \E_{k-1} \\
\end{aligned}
\end{equation}
and the separation problem is to determine
\begin{equation}
\label{eq:P_k} \tag{\text{$\P_k$}} x^k \in \argmax _{x \in \S} d^{k\top} x.
\end{equation}
Here, $\E_k = \{x^1, \dots, x^k\}$ are the points in $\S$ generated so far
(which are generally assumed to be extreme points of
$\conv(\S)$, but need not be in general). \eqref{eq:InvP_k} is a relaxation
of~\nativeeqref{eq:imli} 
consisting of only the valid inequalities corresponding to point in $\E_k$.
When~\eqref{eq:P_k} is unbounded, then $x^0$ is in the relative interior of
$\conv(\S)$ and $d=c_{\S} ^{\perp}$ is an optimal solution, as mentioned
earlier. The overall procedure is given in Algorithm~\ref{alg:alg1}.
\begin{algorithm}[tbh]
\caption{Cutting-plane algorithm for~\nativeeqref{eq:imli}}
\begin{algorithmic}[1]
\State $k \leftarrow 0$, $\E^1 \leftarrow \emptyset$.
\Do
  \State $k \leftarrow k+1$.
  \State Solve \eqref{eq:InvP_k} to determine $d^k, \theta^k$.
  \State Solve \eqref{eq:P_k} to determine $x^k$ or show that \eqref{eq:P_k}
  is unbounded.
  \If {\eqref{eq:P_k} unbounded}
    \State $\theta^{*} \leftarrow \left \| c \right\|_{\infty}$, $d^{*} \leftarrow
  0$, STOP. 
  \EndIf
  \State $\E_{k+1} \leftarrow \E_k \cup \{x^k\}$.
\doWhile{$d^{k \top} (x^k - x^0) > 0$}
\State $\theta^{*} \leftarrow \left \| c - d^k \right\|_{\infty}$,
  $d^{*} \leftarrow d^k$, STOP. 
\end{algorithmic}
\label{alg:alg1}
\end{algorithm}

To understand the nature of the algorithm, observe that in iteration $k$, the
master problem is equivalent to an inverse problem in which $\S$ is replaced
$\E_k$. Equivalently, we are replacing $\D(x^0)$ with the restricted set
$\D_k(x^0)$, taken to be the polar of $\C_k(x^0) = \cone(\E_k - \{x^0\})$.
Each member of $\S$ generated corresponds to a member of $\C(x^0)$, so that
the final product of the algorithm is a (partial) description of $\C(x^0)$ and
hence a partial description of $\D(x^0)$. This is analogous to the way in
which a traditional cutting-plane algorithm for solving the original forward
problem generates a partial description of $\conv(\S)$ and highlights the
underlying duality between the H-representation and the V-representation of a
polyhedron (see section~\ref{sec:opt-sep}).

We illustrate by considering a small example. Let
$c=(2,-1)$, $x^0=(0,3)$ and $\S$ be as in Figure~\ref{fig:p}, where both
$x_1$ and $x_2$ are integer and $\conv(\S)$ is shown. The values of $k$, $d^k$,
and $x^k$ in iterations 1--3 are given in Table~\ref{tab:ex}.
\begin{figure}
\centering
\includegraphics{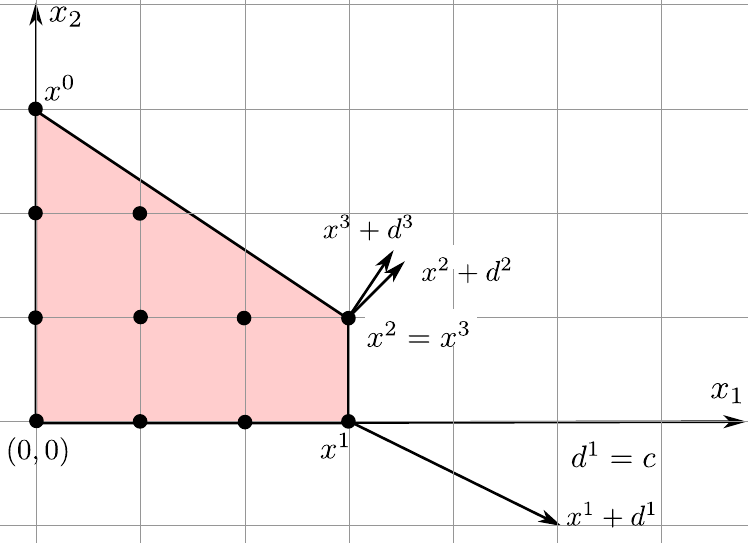}
\caption{Feasible region and iterations of example problem \label{fig:p}}
\end{figure}
\begin{table}[H]
\centering
\begin{tabular}{cccccc}
\hline
                & $k$  & $\E_k$   & $d^k$         & $x^k$     & $\|c-d^k\|_{\infty}$\\
\hline
Initialization  & $1$ & $\emptyset$        & $(2,-1)$      & $(3,0)$  &   $0$ \\
Iteration 1     & $2$ & $\{(3,0)\}$        & $(0.5,0.5)$   & $(3,1)$  & $1.5$\\
Iteration 2     & $3$ & $\{(3,0), (3,1)\}$ & $(0.4,0.6)$   & $(3,1)$  & $1.6$\\\hline
\end{tabular}
\caption{$k$, $d^k$, $x^k$, and $\E_k$ values through iterations \label{tab:ex}}
\end{table}
The (unique) optimal solution of this small example is $d^3=(0.4,0.6)$, and
the optimal value is $\theta^* = \|c-d^3\|_{\infty} = 1.6$.

Figure~\ref{fig:alg1} provides a geometric visualization of another small
example, illustrating how the algorithm would proceed when the set $\S$ is the
collection of integer points inside the polyhedron marked in blue. Here, the
cone $\C_k(x^0)$ is explicitly shown and its expansion can be seen as
generators are added.
\begin{figure}[tbh]
\begin{center}
\subfloat[Iteration 1]{\includegraphics[width=0.33\textwidth]{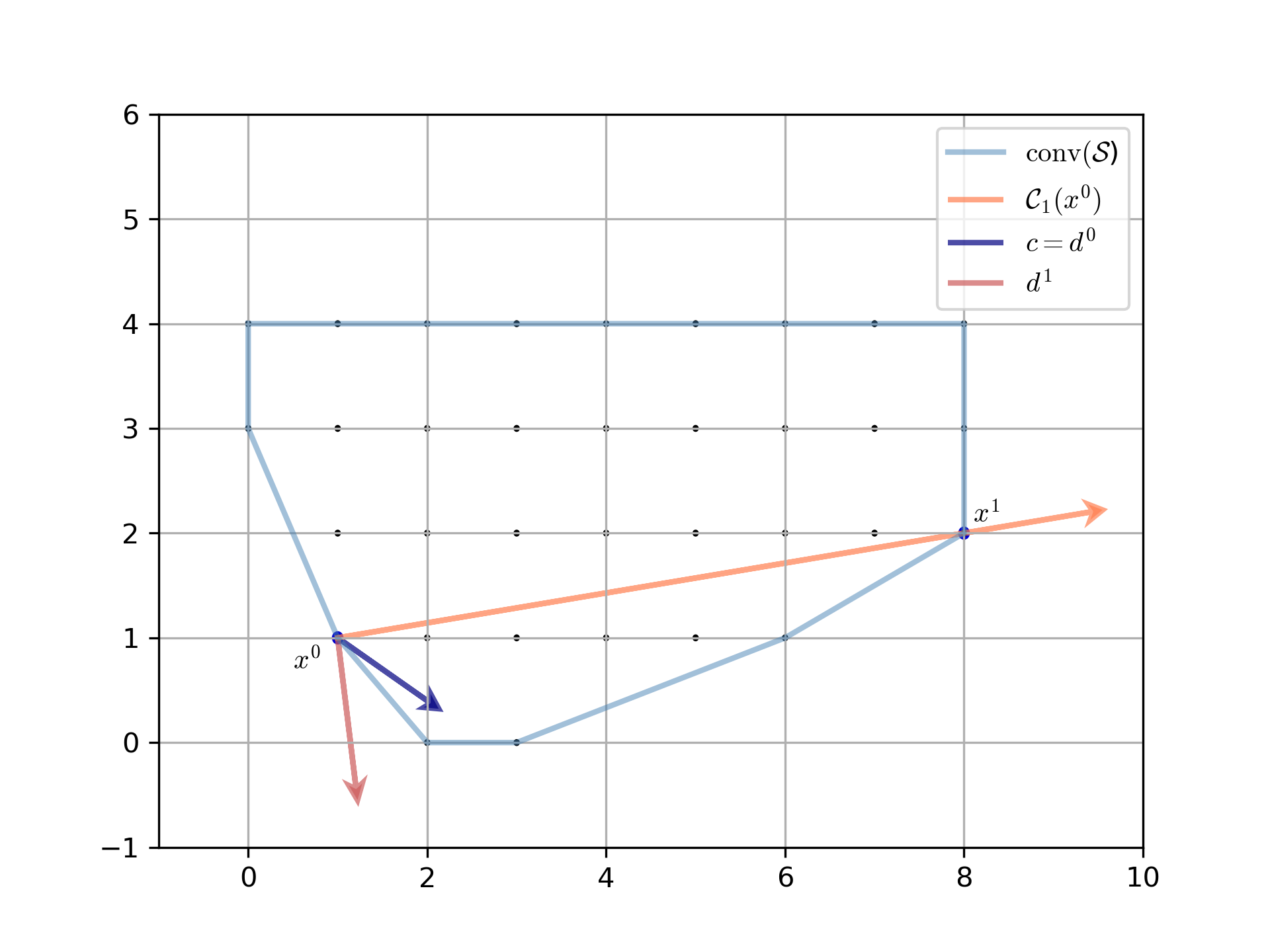}}
\subfloat[Iteration 2]{\includegraphics[width=0.33\textwidth]{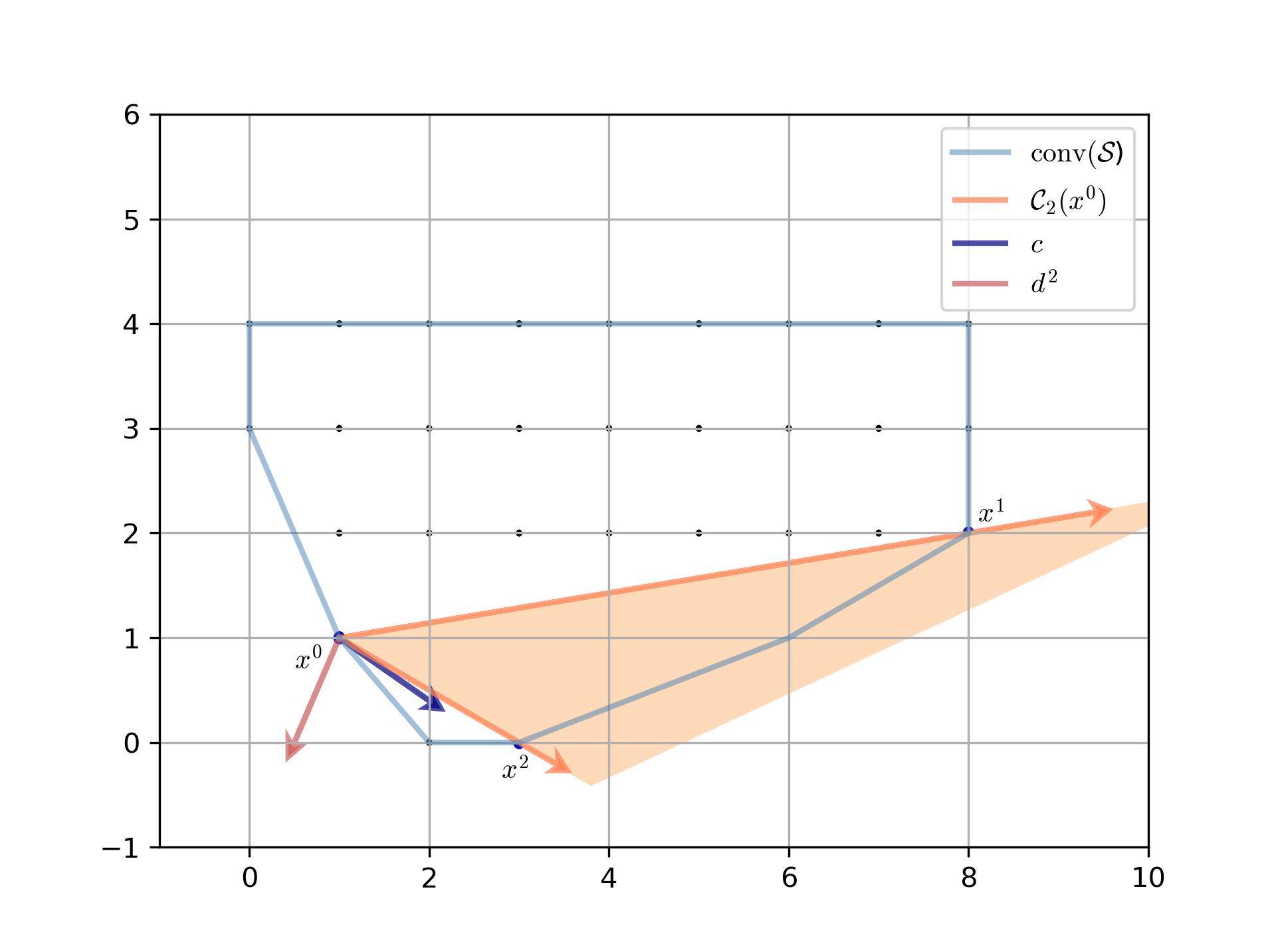}} 
\subfloat[Iteration 3]{\includegraphics[width=0.33\textwidth]{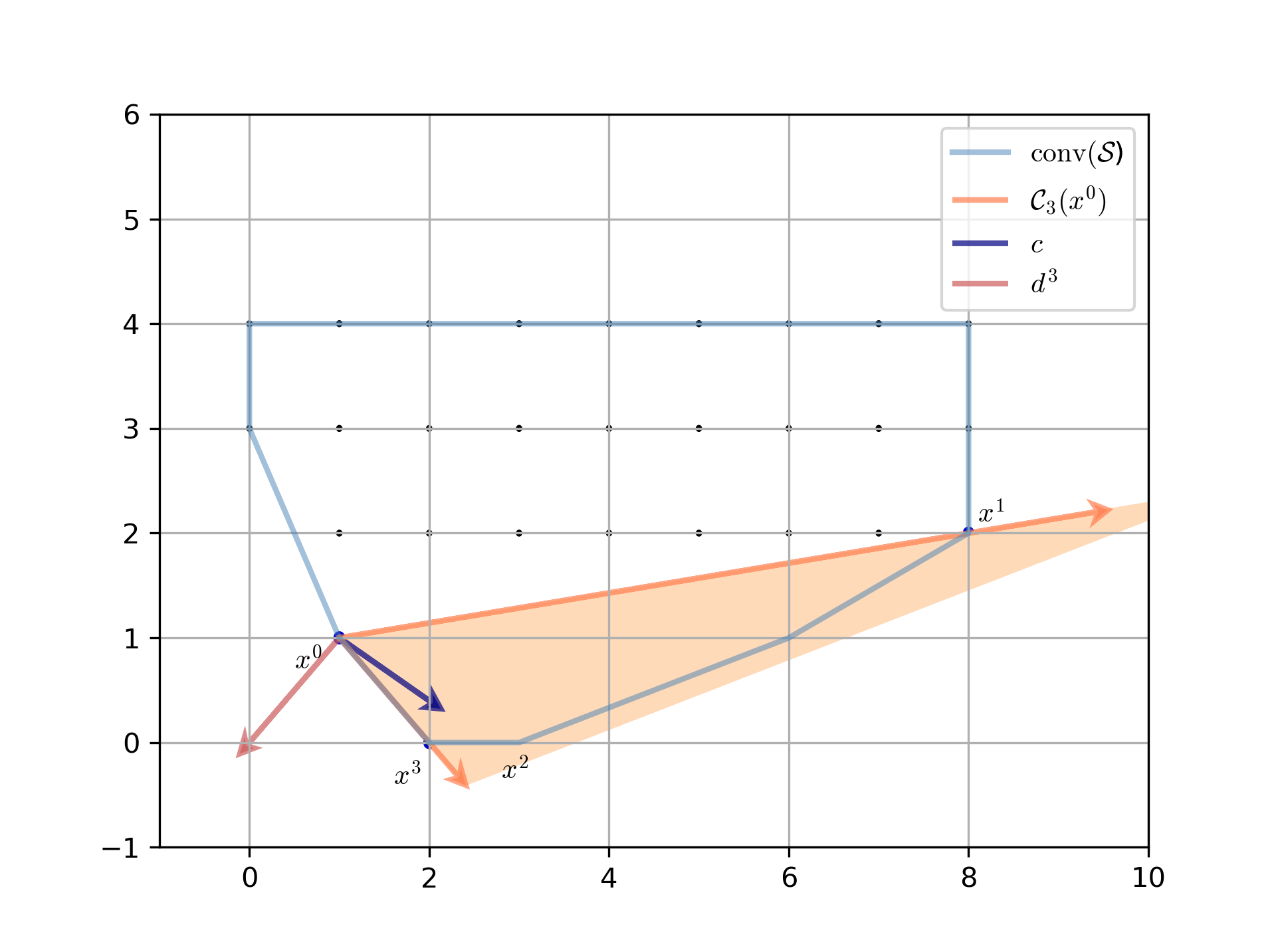}}
\end{center}
\caption{Pictorial illustration of Algorithm~\ref{alg:alg1} \label{fig:alg1}}
\end{figure}

Returning to the relationship of the inverse problem to the separation
problem, observe that an essentially unmodified version of
Algorithm~\ref{alg:alg1} also solves the generic separation problem for
$\conv(\S)$ if we interpret $x^0$ as the point to be separated rather than the
target point. When solving the separation problem,~\eqref{eq:InvP_k} can be
interpreted as the problem of separating $x^0$ from $\conv(\E_k)$. To see
this, note that the dual of~\eqref{eq:InvP_k} is the problem of determining
whether $x^0$ can be expressed as a convex combination of the members of
$\E_k$, i.e., the membership problem for $\conv(\E_k)$. When
$x^0 \not \in \conv(\E_k)$, the Farkas proof of the infeasibility of this LP
is an inequality valid for $\conv(\E_k)$ and violated by $x^0$.~\eqref{eq:P_k}
is then interpreted as the problem of determining whether that same inequality
is valid for the full feasible set $\conv(\S)$, i.e., determining whether
there is a member of $\S$ that violates the inequality, exactly as in the
inverse case.

Figure~\ref{fig:sep} illustrates the application of the algorithm for the
instance from Figure~\ref{fig:alg1}. The only
modification is that we replace the objective function of the master
problem~\eqref{eq:InvP_k} with one measuring the degree of violation of $x^0$,
which is a standard measure of effectiveness for generated valid inequalities.
Even without this modification, a violated valid inequality will be generated,
but the change is to show that the standard separation problem, in which there
is no estimated objective, can also be solved with this algorithm.
Inequalities generated in this way are sometimes called \emph{Fenchel
cuts}~\mycitep{boyd1994}.
\begin{figure}[!h]
\begin{center}
\subfloat[Iteration 1]{\includegraphics[width=0.33\textwidth]{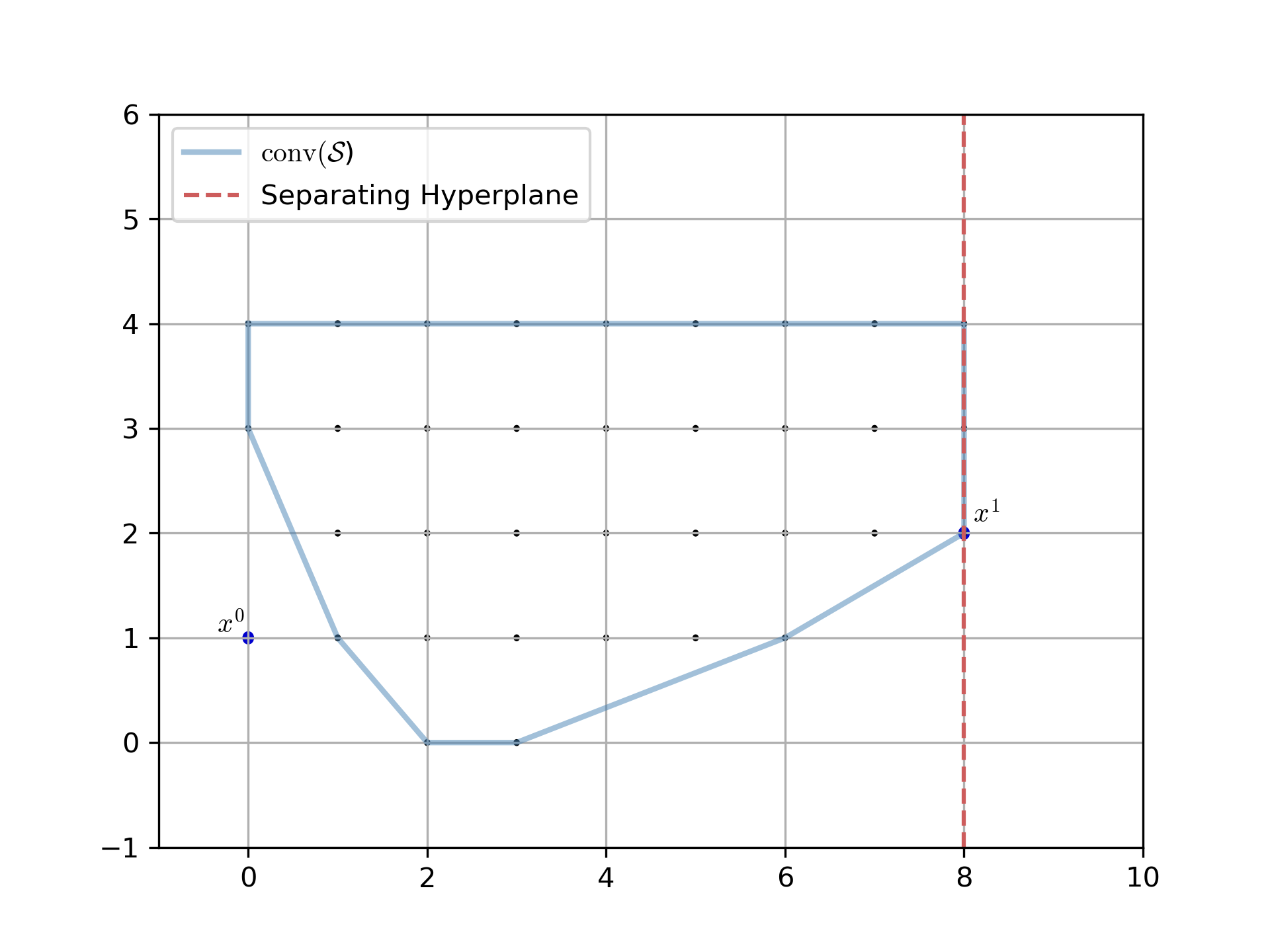}}
\subfloat[Iteration 2]{\includegraphics[width=0.33\textwidth]{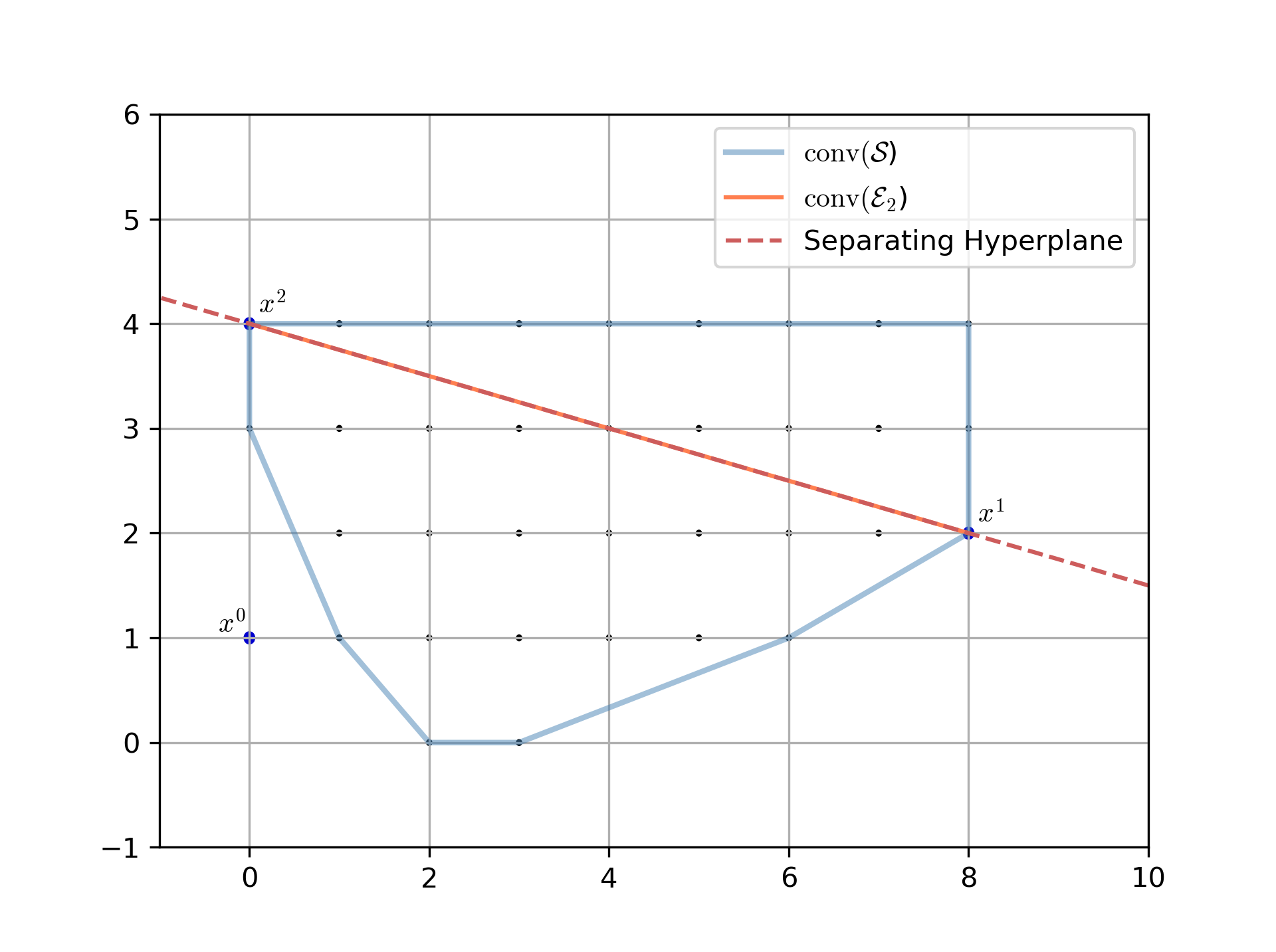}}
\subfloat[Iteration 3]{\includegraphics[width=0.33\textwidth]{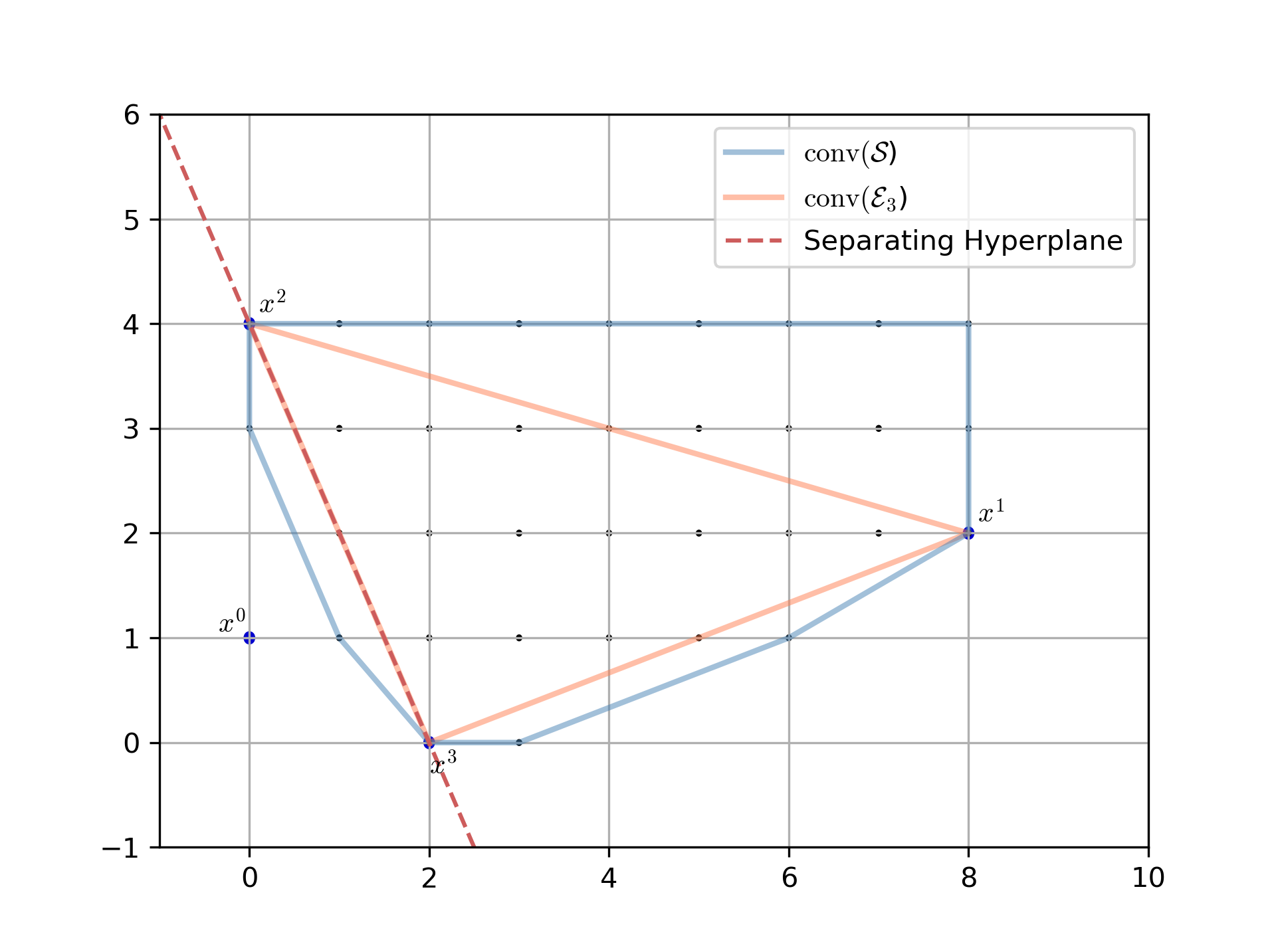}}\\
\subfloat[Iteration 4]{\includegraphics[width=0.33\textwidth]{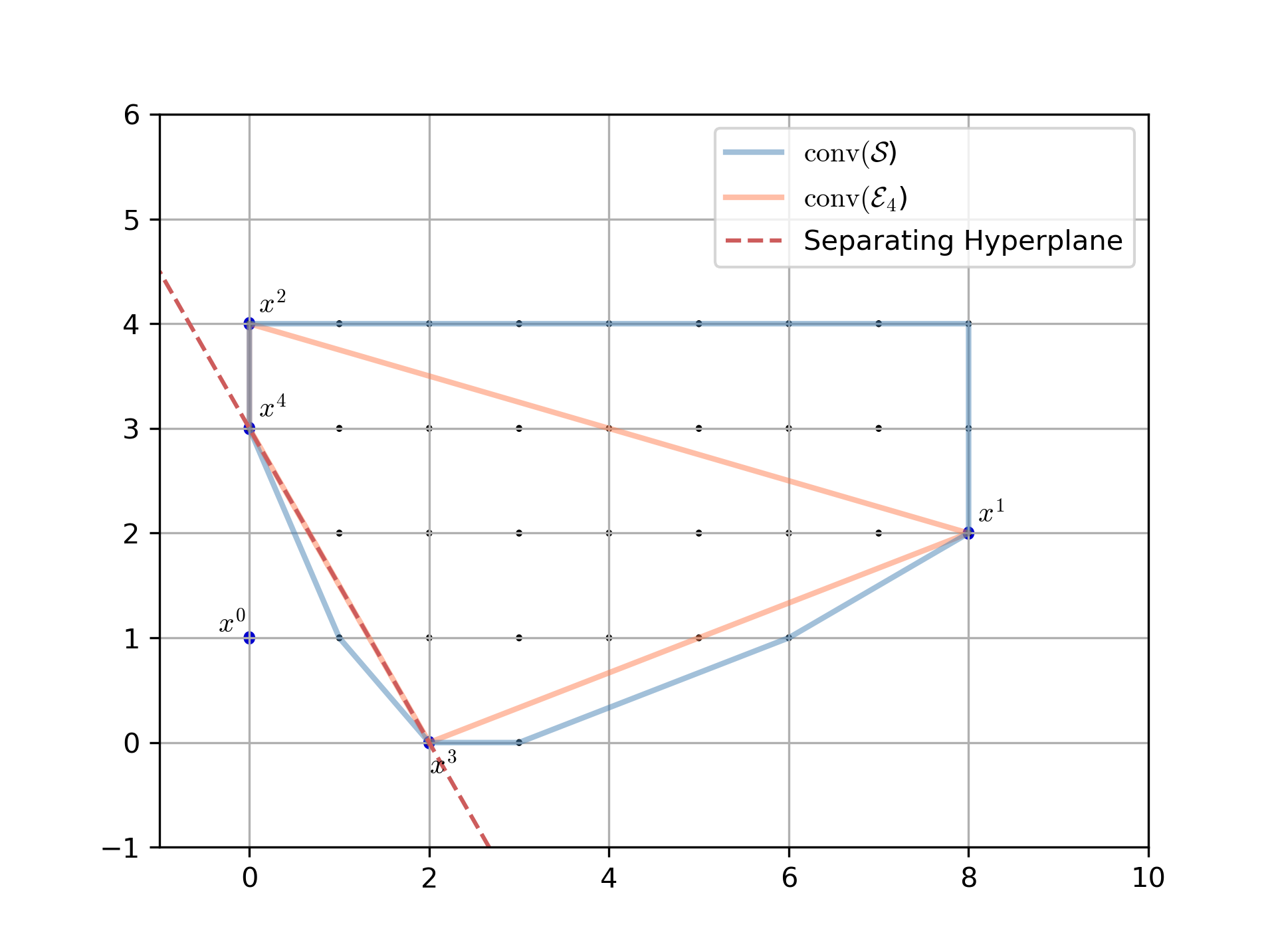}}
\subfloat[Iteration 5]{\includegraphics[width=0.33\textwidth]{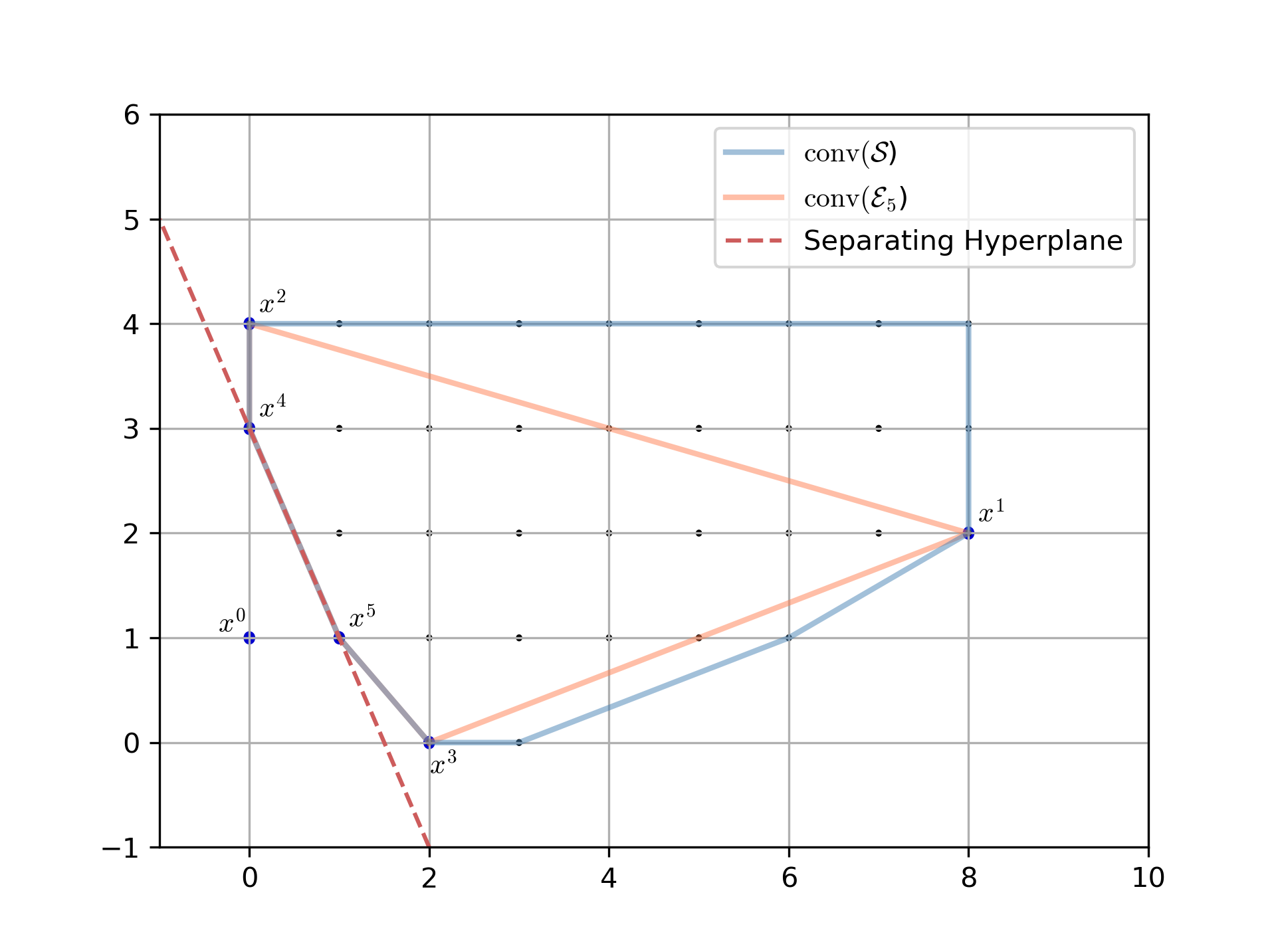}}
\end{center}
\caption{Pictorial illustration of algorithm for generating Fenchel cut
\label{fig:sep}}
\end{figure}

\section{Computational Complexity \label{sec:complexity}}

In this section, we briefly review the major concepts in complexity theory and
the classes into which (the decision versions of) optimizations problems are
generally placed, as well as provide archetypal examples of the kinds of
problems that fall into these classes. We mainly follow the framework
of~\mycite{Garey and Johnson}{Garey:1979:CIG:578533}, but since the material
here will be familiar to most readers, we omit many details and refer the
reader to either~\mycite{Garey and Johnson}{Garey:1979:CIG:578533} or the
sweeping introduction to complexity given by~\mycite{Arora and
Barak}{AroBar07} for a deeper introduction. We provide this brief,
self-contained overview here to emphasize some concepts that are important but
lesser known in the mathematical optimization literature. Among these are the
definitions of the complexity classes $\Dcomplexity^{\Pcomplexity}$ and
$\Delta_2^p$, which play a role in our results below, as well as the
distinction between the polynomial Turing reductions introduced
by~\mycite{Cook}{Cook:1971:CTP:800157.805047} in his seminal work and the
polynomial many-to-one reductions introduced
by~\mycite{Karp}{karp1972reducibility}.

The fundamentals of complexity theory and $\NPcomplexity$-completeness as laid
out in the papers of~\mycite{Cook}{Cook:1971:CTP:800157.805047},
~\mycite{Karp}{karp1972reducibility},~\mycite{Edmonds}{edmonds1971matroids},
and others provide a rigorous framework within which problems arising in
discrete optimization can be analyzed. The origins of the theory can be
traced back to the earlier work on the Entscheidungsproblem
by~\mycite{Turing}{turing1937computable} and perhaps for that reason, it was
originally developed to analyze decision problems, e.g., problems where the
output is YES or NO. Although there exists a theory of complexity that applies
directly to optimization problems~\mycitep{krentel-thesis,KRENTEL1988490,
VOLLMER1995198}), most analyses are done by converting the
optimization problem to an equivalent decision problem form.

The decision problem form typically used for most discrete optimization
problems is that of determining whether a given \emph{primal bound} is valid
(upper bound in the case of minimization or lower bound in the case of
maximization), though we argue later that verification of the optimal value is
more natural. For most problems of current practical interest, verification of
the primal bound is either in the class $\Pcomplexity$ or the class
$\NPcomplexity$. Notable exceptions are the bilevel (and other multilevel)
optimization problems, whose decision versions are in higher levels of the
so-called \emph{polynomial-time hierarchy}\mycitep{Stockmeyer76}.

\subsection{Definitions}

In the framework of~\mycite{Garey and Johnson}{Garey:1979:CIG:578533}, an
algorithm is a procedure implemented using the well-known logic of a
deterministic Turing machine (DTM), a simple model of a computer capable of
sequentially executing a single program (we introduce a ``nondeterministic''
variant below). The input to the algorithm is a string in a
given \emph{alphabet}, which we assume is simply $\{0, 1\}$, since this is the
alphabet on all modern computing devices. As such, the set of all possible
input strings is denoted $\{0, 1\}^*$.
A \emph{problem} (or \emph{problem class}) is defined by describing what set
of input strings (called \emph{instances}) should produce the answer YES. In
other words, each subset $L$ of $\{0, 1\}^*$, formally called
a \emph{language} in complexity theory, defines a different problem for which
algorithms can be developed and analyzed. An algorithm is specified by
describing its implementation as a DTM and is said to \emph{solve} such a
problem if the DTM correctly outputs YES if and only if the input
string is in $L$. In this case, we say the DTM \emph{recognizes} the language
$L$ and that members of $L$ are the instances \emph{accepted} by the DTM.

\myparagraph{Running Time and Complexity} The \emph{running time} of an
algorithm for a given problem is the worst-case number of steps/operations
required by the associated DTM taken over all instances of that
problem. This worst case is usually expressed as a function of the ``size'' of
the input, since the worst case would otherwise be unbounded for any class
with arbitrarily large instances. The size of the input is formally defined to
be its \emph{encoding length}, which is the length of the string representing
the input in the given alphabet. Since we take the alphabet to be $\{0, 1\}$,
the encoding length of an integer $n$ is
\begin{equation*}
\langle n \rangle = 1+ \lceil \log_2(|n| + 1)\rceil.
\end{equation*}
Further, the encoding length of a rational number $r = p/q$ is $\langle
r \rangle = \langle p \rangle + \langle q \rangle$. Encoding lengths play an
important role in the complexity proofs of ~\Sectiref{sec:complexity-IMILP}
below. We discuss these concepts in more detail in Section~\ref{sec:opt-sep},
but refer the reader to the book of~\mycite{Gr\"otschel et
al.}{GroetschelLovaszSchrijver1993} for detailed coverage of definitions and
concepts.

The \emph{computational complexity} of a given problem is the running time
(function) of the ``best'' algorithm, where best is defined by ordering the
running time functions according to their asymptotic growth rate (roughly
speaking, two functions are compared asymptotically by taking the limit of
their ratio as the instance size approaches infinity). For most problems of
practical interest, the exact complexity is not known, so another way of
comparing problems is by placing them into equivalence classes according to
the notion of equivalence yielded by the operation of \emph{reduction}.

\myparagraph{Reduction} A \emph{reduction} is the means by which an algorithm
for one class of problems (specified by, say, language $L_1$) can be used as a
subroutine within an algorithm for another class of problems (specified by,
say, language $L_2$). It is also the means by which problem equivalence and
complexity classes are defined.

In what follows, we refer to two notions of reduction, and the difference
between them is important. The notion that is most relevant in the theory of
NP-completeness is the \emph{polynomial many-to-one reduction} of
\mycite{Karp}{karp1972reducibility}, commonly referred to as \emph{Karp
reduction}. There is a Karp reduction from a problem specified by language
$L_2$ to a problem specified by a language $L_1$ if there exists a mapping
$f:\{0,1\}^* \rightarrow \{0, 1\}^*$ such that
\begin{itemize}
\item $f(x)$ is computable in time polynomial in $\langle x \rangle$ and
\item $x \in L_2$ if and only if $f(x) \in L_1$.
\end{itemize}
Thus, if we have an algorithm (DTM) for recognizing the language
$L_1$ and such a mapping $f$, we implicitly have an algorithm for recognizing
$L_2$. In this case, we say there is a Karp reduction from $L_2$ to $L_1$.

A second notion of reduction is the polynomial Turing reduction, commonly
referred to as \emph{Cook reduction}, introduced
by~\mycite{Cook}{Cook:1971:CTP:800157.805047} in his seminal work. This type
of reduction is defined in terms of \emph{oracles}. An oracle is a conceptual
subroutine that can solve a given problem or class of problems in constant
time. Roughly speaking, the \emph{oracle complexity} of a problem is its
complexity \emph{given the theoretical existence of a certain oracle}. There
is a Cook reduction from a problem specified by language $L_2$ to a problem
specified by language $L_1$ if there is a polynomial-time algorithm for
solving $L_2$ that utilizes an oracle for $L_1$. Hence, the only requirement
is that the number of calls to the oracle must be bounded by a polynomial. The
difference between Karp reduction and Cook reduction is that Karp reduction
can be thought of as allowing only a single call to the oracle as the last
step of the algorithm, whereas Cook reduction allows a polynomial number of
calls to the oracle. There are a range of other notions of reduction that
utilize other different bounds on the number of oracle calls~\mycitep{Kre87}.

Decision problems specified by languages $L_1$ and $L_2$ are said to
be \emph{equivalent} if there is a reduction in both
directions---$L_1$ reduces to $L_2$ and $L_2$ reduces to $L_1$. Equivalence
can be defined using either the Karp or Cook notions of reduction. It is
conjectured (though not known; see~\mycite{Beigel and Fortnow}{BeiFor03}) that
these notions of equivalence are distinct and yield different equivalence
classes of problems. In fact, assuming that $\NPcomplexity \not= \coNP$ (which
is thought to be highly likely), they must be distinct notions, since a
problem specified by any language is trivially seen to be Cook-equivalent to
the problem specified by its complement. To Cook-reduce one problem to the
other, simply solve the complement and negate the answer. Hence, Cook
reduction cannot be used to separate $\NPcomplexity$ from $\coNP$. This
ability to separate $\NPcomplexity$ from $\coNP$ makes Karp reduction a
stronger notion and is part of the rationale for its use as the basis for the
theory of $\NPcomplexity$-completeness in~\mycite{Garey and
Johnson}{Garey:1979:CIG:578533}.

A problem in a complexity class is said to be \emph{complete} for the class if
every other problem in the class can be reduced to it. Informally, this means
that the complete problems are at least as difficult to solve as any other
problem in the class (in a worst-case sense). Completeness of a given problem
in a given complexity class can be shown by providing a reduction from a
problem already known to be a complete problem for the given class. Equivalence,
as described above, is an equivalence relation in the mathematical sense and
can thus be used to define equivalence classes for problems. The complete
problems for a class are exactly those in the largest such equivalence class
that is contained in the class. For the reasons described above, the set of
complete problems is different under Karp and Cook, assuming $\NPcomplexity
\not= \coNP$ (see~\mycite{Lutz and Mayordomo}{LutMay95}). 

\myparagraph{Certificates} Finally, we have the concept of a \emph{certificate}.
A certificate is a string that, when concatenated with the original input
string, forms a (longer) input string to an associated decision problem, which
we informally call the \emph{verification problem}, that yields the same
output as the original one (but can presumably be solved more efficiently). A
certificate can be viewed as a proof of the result of a computation. When
produced by an algorithm for solving the original problem, the certificate
serves to certify the result of that computation after the fact. The
efficiency with which such proofs can be checked is another property of
classes of problems (like the running time) that can be used to partition
problems into classes according to difficulty. We discuss more about the use
of certificates and their formal definition in particular contexts below.

\subsection{Complexity Classes\label{sec:complexity-classes}}

\myparagraph{Class $\Pcomplexity$} The most well-known 
class is $\Pcomplexity$, the class of decision problems that can be solved in
polynomial time on a DTM~\mycitep{Stockmeyer76}. Alternatively, the class
$\Pcomplexity$ can be defined as the \emph{smallest} equivalence class of
problems according to the polynomial equivalence relation described earlier.
Note that for problems in $\Pcomplexity$, there is no distinction between
equivalence according to Karp and Cook. The decision versions of linear
optimization problems (equivalent to the problem of checking whether a system
of inequalities has a solution), the decision versions of minimum cost network
flow problems, and other related problems are all in this class. The
well-known problem of checking whether a system of linear inequalities has a
feasible solution is a prototypical problem in $\Pcomplexity$.

\myparagraph{Class $\NPcomplexity$} $\NPcomplexity$ is the class of problems
that can be solved in polynomial time by a \emph{nondeterministic} Turing
machine (NDTM). Informally, an NDTM is a Turing machine with an infinite
number of parallel processors. With such a machine, if there is a branch in
the algorithm representing two possible execution paths, we can conceptually
follow both branches simultaneously (in parallel), whereas we would need to
explore the branches sequentially in a DTM. A search
algorithm, for example, may be efficiently implemented on an NDTM by following
all possible search paths simultaneously, even if there are exponentially many
of them.

The running time of an algorithm on an NDTM is the number of steps it takes
for \emph{some} execution path to reach an accepting state (a state that
proves the correct output is YES). As a concrete example, consider the problem
of determining whether there exists a binary vector satisfying a system of
linear inequalities. A search algorithm that enumerates the exponentially many
solutions through a simple depth-first recursion would have exponential
running time if implemented using a DTM, while the
running time on an NDTM would be the time to
construct and check the feasibility of one solution.

This last observation leads to an alternative definition of $\NPcomplexity$ as
the class of decision problems for which there exists a certificate for which
there is a DTM that solves the verification problem in time polynomial in the
length of the input when the output is YES. In fact, these two informal
definitions can be formalized and shown to be equivalent.

Intuitively, the idea is that the certificate can be taken to be an encoding
of an execution path that leads to a program state that proves the output is
YES. Thus, the (deterministic) algorithm for verification is similar to the
original nondeterministic algorithm except that it is able to avoid the
``dead ends'' which are explored in parallel in a nondeterministic algorithm.
Formally, if $L \in \NPcomplexity$, then there exists $L^C \in \Pcomplexity$
such that
\begin{align*}
x \in L \Leftrightarrow \;\; & \exists y \in \{0, 1\}^* \text{ such that } (x,
y) \in L^C \text{ and} \\
& \text{$\langle y  \rangle$ is bounded by some polynomial function of
$\langle x \rangle$}.
\end{align*}
In this case, $y$ is the certificate. Because such a certificate has an
encoding length polynomial in the encoding length of $x$ and can be verified
in time polynomial in the encoding length of $x$, such certificates are
sometimes said to be \emph{short} and $\NPcomplexity$ is said to be the class
of decision problems having a ``short certificate.''

In general, problems in $\NPcomplexity$ concern existential questions, such as
whether there exists an element of a set with a given property (alternatively,
whether the set of all elements with a given property is nonempty). Even when
no algorithm for \emph{finding} such an element is known, we may still be able
to efficiently verify that an element given to us has the desired property.
For example, the primal bound verification\footnote{The term ``verification''
is used here in a slightly different way than it is used in the context of
certificates, although the uses are related and the meaning can be generalized
to include both uses.} problem for~\eqref{eq:milp} (usually referred to in the
literature as \emph{the} decision version of MILP) is a prototypical problem
in this class and is defined as follows.
\begin{definition} \label{def:milpd}
  \textbf{MILP Primal Bound Verification Problem (MPVP)}
  \begin{itemize}
  \item INPUT: $\alpha \in \Q$, $d \in \Q^n$,
  $A \in \Q^{m \times n}$, $b \in \Q^m$, and
  $r \in \mathbb{N}$, where $(A, b , r)$ is an encoding of the set
  $\S$ in~\eqref{eq:milp} and $(d, \S)$ is the input
  to~\eqref{eq:milp}.
  \item OUTPUT: YES, if there exists $x \in \S$ such that $d^\top
  x \geq \alpha$, NO otherwise.
  \end{itemize}
\end{definition}
The MPVP is in $\NPcomplexity$ since, 
when the answer is YES, there always exists $x \in \S$ that is itself such a
certificate, i.e., has encoding length polynomially bounded by the encoding
length of the problem input and can be verified in polynomial time to be in
$\S$~\mycitep{MILP-NP}.

The set of problems complete for $\NPcomplexity$ (as defined earlier) is known
simply as $\NPcomplexity$-complete.
The first problem shown to be complete for class $\NPcomplexity$ was the
satisfiability (SAT) problem~\mycitep{Cook:1971:CTP:800157.805047}. It was
proved to be complete by providing a Karp reduction of any problem that
can be solved by an NDTM to the SAT problem.
The MPVP is complete for $\NPcomplexity$ because SAT can be Karp-reduced to it.
It is well-known that the question of whether $\Pcomplexity = \NPcomplexity$
is currently unresolved, though it is widely believed that they are distinct
classes~\mycitep{Aaronson-PvsNP}.

\myparagraph{Class $\coNP$} The class $\coNP$ consists of languages whose
complement is in $\NPcomplexity$. Just as problems in $\NPcomplexity$
typically concern existential questions, problems in $\coNP$ usually concern
the question of whether \emph{all} elements of a set have a given property
(alternatively, whether the set of all elements with a given property is
empty). As such, these problems are not expected to have certificates (for the
YES answer) that can be efficiently verified in the sense defined earlier.
Rather, problems in this class are those for which there exists a string that
can be used to certify the output (in time polynomial in the encoding length
of the input) when the answer is \emph{NO}. For example, we may produce an
element of the set that \emph{doesn't} have the desired property. The MILP
Dual Bound Verification Problem is an example of a prototypical problem in
$\coNP$.
\begin{definition} \textbf{MILP Dual Bound Verification Problem (MDVP)}
  \begin{itemize}
  \item INPUT: $\alpha \in \Q$, $d \in \Q^n$,
  $A \in \Q^{m \times n}$, $b \in \Q^m$, and
  $r \in \mathbb{N}$, where $(A, b , r)$ is an
  encoding of the set $\S$ in~\eqref{eq:milp} and $(d, \S)$
  is the input to~\eqref{eq:milp}.
  \item OUTPUT: YES, if $d^\top x \leq \alpha$ for all $x \in \S$, NO otherwise.
  \end{itemize}
\end{definition}
The input to the MDVP is ($\alpha$, $d$, $A$, $b$, $r$), as in the case of
the MPVP. The MDVP is in $\coNP$ because when the output is NO, there must
be a member of $\S$ with an objective value strictly greater than $\alpha$,
which serves as the certificate. 

\myparagraph{Class $\Dcomplexity^{\Pcomplexity}$} While $\NPcomplexity$ and
$\coNP$ are both well-known classes, the class $\Dcomplexity^{\Pcomplexity}$
introduced by~\mycite{Papadimtriou}{Papadimitriou:1982:CF:800070.802199} is
not as well-known. It is the class of problems associated with languages that
are intersections of a language from $\NPcomplexity$ and a language from
$\coNP$.
A prototypical problem complete for $\Dcomplexity^{\Pcomplexity}$ is the MILP
Optimal Value Verification Problem (MOVP), defined as follows.
\begin{definition}
\label{def:milpv}
\textbf{MILP Optimal Value Verification Problem (MOVP)}
  \begin{itemize}
  \item INPUT: $\alpha \in \Q$, $d \in \Q^n$,
  $A \in \Q^{m \times n}$, $b \in \Q^m$, and
  $r \in \mathbb{N}$, where $(A, b , r)$ is an
  encoding of the set $\S$ in~\eqref{eq:milp} and $(d, \S)$
  is the input to~\eqref{eq:milp}.
  \item OUTPUT: YES, if $\max_{x \in \S} d^\top x := \alpha$, NO otherwise.
  \end{itemize}
\end{definition}
It is easy to see that the language associated with MOVP is the intersection
of the languages of the MPVP and the MDVP, since the output of MOVP is YES
if and only if the outputs of both the MPVP and the MDVP are YES, i.e., $\alpha$
is both a primal and a dual bound for~\eqref{eq:milp}.

In view of our later results, one outcome of this work is the suggestion that
the MOVP is a more natural decision problem to associate with discrete
optimization problems than the more traditional MPVP and should be more widely
adopted in proving complexity results. Most algorithms for discrete
optimization are based on iterative construction of separate certificates for
the validity of the primal and dual bound, which must be equal to certify
optimality. Given certificates for the primal and dual bound verification
problems, a certificate for the optimal value verification problem can thus be
constructed directly. The class $\Dcomplexity^{\Pcomplexity}$ thus contains
the optimal value verification problem associated with~\eqref{eq:milp},
whereas it is the associated MPVP that is contained in the class
$\NPcomplexity$-complete.
We find this is somewhat unsatisfying, since the original
problem~\eqref{eq:milp} is only Cook-reducible to the MPVP.

\myparagraph{The Polynomial Hierarchy} Further classes in the
so-called \emph{polynomial-time hierarchy} ($\ComplexityFont{PH}$), described
in the seminal work of~\mycite{Stockmeyer}{Stockmeyer76}, can be defined
recursively using oracle computation. The notation
$\ComplexityFont{A}^{\ComplexityFont{B}}$ is used to denote the class of
problems that are in $\ComplexityFont{A}$, assuming the existence of an oracle
for problems in class $\ComplexityFont{B}$.

Using this concept, $\Delta_2^p$ is the class of decision problems that can be
solved in polynomial time given an $\NPcomplexity$ oracle, i.e., the class
$\Pcomplexity^\NPcomplexity$. This class is a member of the second level of
$\ComplexityFont{PH}$.
Further levels are defined according to the following recursion.
\begin{align*}
\Delta^p_0 & := \Sigma^p_0 := \Pi^p_0 := \Pcomplexity, \\
\Delta^p_{k+1} & := \Pcomplexity^{\Sigma^p_k}, \\
\Sigma^p_{k+1} & := \NPcomplexity^{\Sigma^p_k}, \text{ and} \\
\Pi^p_{k+1} & := \coNP^{\Sigma^p_k}.
\end{align*}
$\ComplexityFont{PH}$ is the union of all levels of the hierarchy. There is
also an equivalent definition that uses the notion of certificates. Roughly
speaking, each level of the hierarchy consists of problems with certificates
of polynomial size, but whose verification problem is in the class one level
lower in the hierarchy. In other words, the problem of verifying a certificate
for a problem in $\Sigma_i^p$ is a problem in the class
$\Sigma_{i-1}^p$~\mycitep{stockmeyer77}.

\Figurref{fig:classes} illustrates class $\Delta_2^p$
relative to $\Dcomplexity^{\Pcomplexity}$, $\NPcomplexity$, $\coNP$, and
$\Pcomplexity$, assuming $\Pcomplexity \neq \NPcomplexity$. If
$\Pcomplexity=\NPcomplexity$, we conclude that all classes are equivalent,
i.e., $\Delta_2^p = \Dcomplexity^{\Pcomplexity} = \NPcomplexity = \coNP =
\Pcomplexity$. This theoretical possibility is known as the collapse of
$\ComplexityFont{PH}$ to its first
level~\mycitep{Papadimitriou:2003:CC:1074100.1074233} and is thought to be
highly unlikely.
A prototypical problem complete for
$\Delta_2^p$ is the problem of deciding whether a given MILP
has a \emph{unique} solution\mycitep{Papadimitriou:2003:CC:1074100.1074233}.

\begin{figure}
\centering
\includegraphics[scale=0.5]{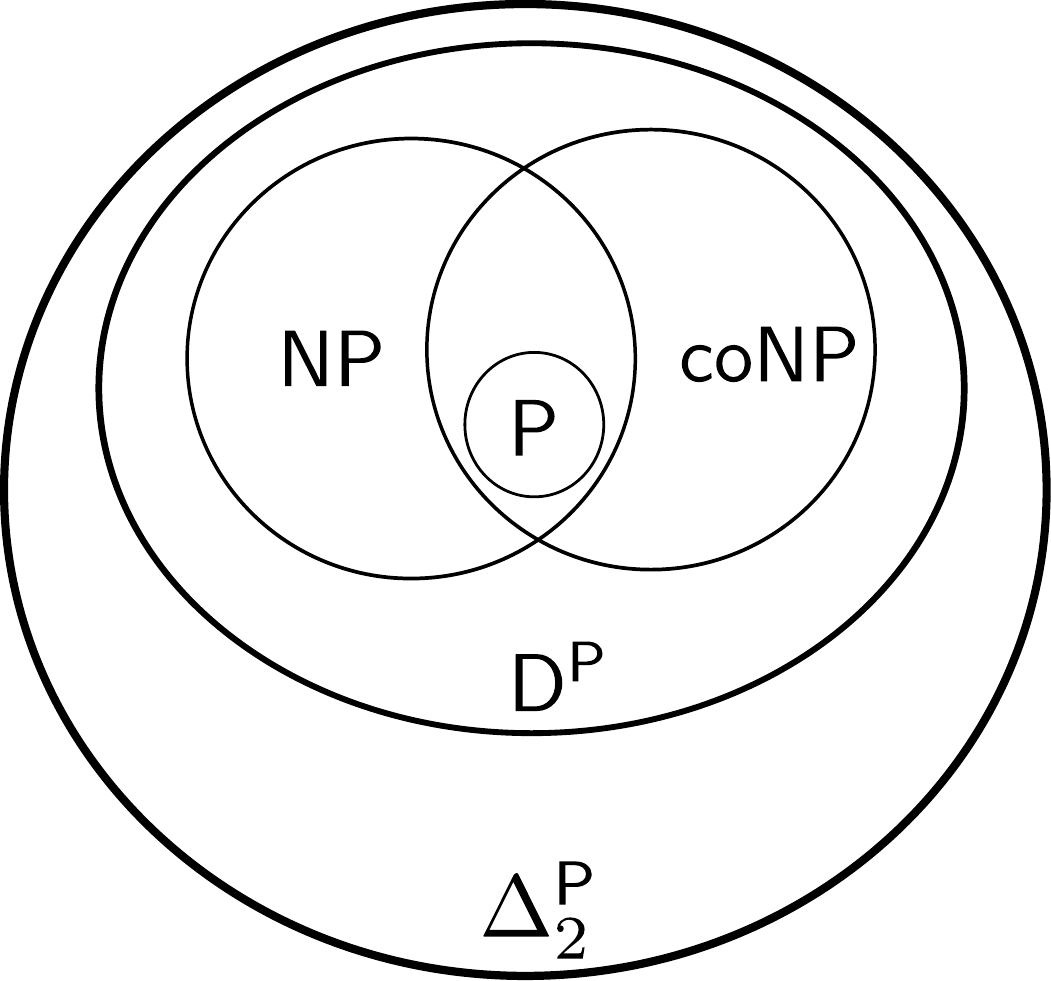}
\caption{Complexity classes $\Delta_2^p$, $\Dcomplexity^{\Pcomplexity}$,
$\NPcomplexity$, $\coNP$, and $\Pcomplexity$, assuming
$\Pcomplexity \neq \NPcomplexity$\label{fig:classes}}
\end{figure}

\subsection{Optimization and Separation\label{sec:opt-sep}}

The concepts of reduction and polynomial equivalence can be extended to
problems other than decision problems, but this requires some additional care
and machinery. Decision problems can be (and often are) reduced to
optimization problems, in a fashion similar to that described earlier, in an
attempt to classify them. When a problem that is complete for a given class in
the polynomial hierarchy can be Karp-reduced to an optimization problem, we
refer to the optimization problem as \emph{hard} for the class. When a
decision problem that is complete for $\NPcomplexity$ can be Karp-reduced to
an optimization problem, for example, the optimization problem is classified
as $\NPcomplexity$-hard. This \emph{does not}, however, require that the
decision version of this optimization problem is a member of the class itself.
In reality, it may be on some higher level of the polynomial hierarchy.
Hardness results can therefore be somewhat misleading in some cases.

In their foundational work, \mycite{Gr\"otschel, Lov{\'a}sz and
Schrijver}{GroetschelLovaszSchrijver1993}
develop a detailed theoretical basis
for the claim that the separation problem for an implicitly defined polyhedron
is polynomially equivalent to the optimization problem over that same
polyhedron.
This is done very carefully, beginning from certain decision problems and
proceeding to show their equivalence to related optimization problems. The
notion of reduction used, however, is Cook reduction, which means that the
results do not directly tell us whether there are decision versions of the
problems discussed that are complete for the same class within
$\ComplexityFont{PH}$. The results discussed below, in contrast, use Karp
reduction to show that there are decision versions of the inverse and forward
optimization problems that are complete for the same complexity class within
$\ComplexityFont{PH}$.

Using an optimization oracle to solve the separation problem (and vice versa)
involves converting between different representations of a given polyhedron.
In particular, we consider (implicit) descriptions in terms of the half-spaces
associated with facet-defining valid inequalities, so-called
\emph{H-representations}, and in terms of generators (vertices and
extreme rays), so-called \emph{V-representations}. There is a formal
mathematical duality relating these two forms of polyhedral representation,
which underlies the notion of polarity mentioned earlier and is at the heart
of the equivalence between optimization and separation. It is this very same
duality that is also at the heart of the equivalence between optimization
problems and their inverse versions. This can be most clearly seen by the fact
that the feasible region $\D(x^0)$ of the inverse problem is the polar of
$\C(x^0)$, a conic set that contains $\conv(\S)$, as we have already
described.

The framework laid out by~\mycite{Gr\"otschel
et al.}{GroetschelLovaszSchrijver1993} emphasizes 
that the efficiency with which the various representations can be manipulated
algorithmically depends inherently and crucially on, among other things, the
encoding length of the elements of these representations. For the purposes of
their analysis,
\mycite{Gr\"otschel et al.}{GroetschelLovaszSchrijver1993} defined the notions of
the \emph{vertex complexity} and \emph{facet complexity} of a polyhedron,
which we repeat here, due to their relevance in the remainder of the paper.
\begin{definition}{(\mycite{Gr\"otschel et al.}{GroetschelLovaszSchrijver1993})}
\begin{enumerate}[label=\emph{(}\roman*\emph{)}]
\item A polyhedron $\P \subseteq \Re^n$
has \emph{facet complexity} of at most $\varphi$ if there exists a rational
system of inequalities (known as an H-representation) describing the
polyhedron in which the encoding length of each inequality is at most
$\varphi$.
\item Similarly, the \emph{vertex complexity} of $\P$ is at most $\nu$ if
there exist finite sets $V, E$ such that $\P = \conv(V) + \cone(E)$ (known as a
V-representation) and such that each of the vectors in $V$ and $E$ has
encoding length at most $\nu$.
\end{enumerate}
\end{definition}
It is important to point out that these definitions are not given in terms of
the encoding length of a full description of $\P$ because $\P$ may be an
implicitly defined polyhedron whose description is never fully constructed.
What \emph{is} explicitly constructed are the components of the description
(extreme points and facet-defining inequalities). The importance of the facet
complexity and vertex complexity in the analysis is primarily that they
provide bounds on the norms of these vectors. The ability to derive such
bounds is a crucial element in the overall framework they present. The
following are relevant results from~\mycite{Gr\"otschel
et al.}{GroetschelLovaszSchrijver1993}.
\begin{proposition}{(\mycite{Gr\"otschel
et al.}{GroetschelLovaszSchrijver1993})} \label{prop:encoding} 
\begin{enumerate}[label=\emph{(}\roman*\emph{)}]
\item (1.3.3) For any $r \in \Q$, $2^{-\langle r \rangle + 2} \leq |r| \leq
2^{\langle r \rangle -1} - 1$. 
\item (1.3.3) For any $x \in \Q^n$, $\|x\|_p < 2^{\langle x \rangle -
n}$ for $p \geq 1$. 
\item (6.2.9) If $\P$ is a polyhedron with vertex complexity at most $\nu$
and $(a, b) \in \Z^{n+1}$ is such that
\begin{equation*}
a^\top x \leq b + 2^{-\nu-1}
\end{equation*}
holds for all $x \in \P$, then $(a, b)$ is valid for $\P$.
\end{enumerate}
\end{proposition}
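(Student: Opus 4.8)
The plan is to treat the three parts separately, since parts~(i) and~(ii) are elementary consequences of the encoding-length definitions whereas part~(iii) is the substantive claim. For~(i), I would begin from the definition $\langle n\rangle = 1 + \lceil \log_2(|n|+1)\rceil$ of the encoding length of an integer, which rearranges at once to $|n| \le 2^{\langle n\rangle - 1}-1$. Writing a nonzero rational in lowest terms as $r = p/q$ with $q \ge 1$ and using $\langle r\rangle = \langle p\rangle + \langle q\rangle$, the upper bound follows from $|r| \le |p| \le 2^{\langle p\rangle -1}-1 \le 2^{\langle r\rangle -1}-1$, while the lower bound follows from $|r| \ge 1/|q| > 2^{-\langle q\rangle + 1} \ge 2^{-\langle r\rangle + 2}$, where the final step uses $\langle p\rangle \ge 1$.

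For~(ii), the key reduction is $\|x\|_p \le \|x\|_1 = \sum_{i=1}^n |x_i|$, valid for every $p \ge 1$. Applying~(i) coordinatewise gives $|x_i| < 2^{\langle x_i\rangle - 1}$, and since each factor $2^{\langle x_i\rangle -1}$ is at least $2$, a short induction yields $\sum_i 2^{\langle x_i\rangle -1} \le \prod_i 2^{\langle x_i\rangle - 1} = 2^{\sum_i \langle x_i\rangle - n} \le 2^{\langle x\rangle - n}$. Chaining these inequalities produces the claimed strict bound.

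The heart of the proposition is~(iii), and I would work from the V-representation $\P = \conv(V) + \cone(E)$ furnished by the hypothesis that $\P$ has vertex complexity at most $\nu$, so that every generator has encoding length at most $\nu$. First I would dispose of the rays: if $a^\top e > 0$ for some $e \in E$, then $a^\top(v + \lambda e) \to \infty$ as $\lambda \to \infty$ for any $v \in V$, violating $a^\top x \le b + 2^{-\nu-1}$ on $\P$; hence $a^\top e \le 0$ for all $e \in E$. For the vertices, fix $v \in V$ and suppose for contradiction that $a^\top v > b$. Writing each coordinate as $v_i = p_i/q_i$ in lowest terms and clearing denominators with $Q = \prod_i q_i$, the number $a^\top v - b$ equals an integer over $Q$ and is positive, hence is at least $1/Q$. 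Bounding $Q$ through~(i) and the vertex-complexity hypothesis gives $Q < 2^{\sum_i \langle q_i\rangle - n} \le 2^{\langle v\rangle - n} \le 2^{\nu - n}$, so $a^\top v - b \ge 1/Q > 2^{\,n-\nu} > 2^{-\nu-1}$, contradicting $a^\top v \le b + 2^{-\nu-1}$. Thus $a^\top v \le b$ for every $v \in V$, and for an arbitrary $x = \sum_v \lambda_v v + \sum_e \mu_e e \in \P$ I conclude $a^\top x = \sum_v \lambda_v\,(a^\top v) + \sum_e \mu_e\,(a^\top e) \le b$, which is exactly validity of $(a,b)$ for $\P$.

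The main obstacle is the denominator estimate in~(iii): the whole argument rests on the observation that once the denominator of the rational $a^\top v - b$ is cleared its numerator is a positive integer, forcing $a^\top v - b \ge 1/Q$, while the vertex-complexity bound keeps $Q$ safely below $2^{\nu+1}$ so that the gap exceeds the slack $2^{-\nu-1}$. The one point demanding care is the exact bookkeeping relating $\sum_i \langle q_i\rangle$ to $\langle v\rangle$ and hence to $\nu$, since it is this accounting that pins down the precise threshold $2^{-\nu-1}$ rather than a cruder constant. As all three statements are standard, in the paper itself I would simply cite~\mycite{Gr\"otschel et al.}{GroetschelLovaszSchrijver1993}.
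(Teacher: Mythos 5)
Your proposal is correct, but note that it proves more than the paper itself does: the paper offers no proof of this proposition at all, importing parts (i)--(ii) from result 1.3.3 and part (iii) from result 6.2.9 of \mycite{Gr\"otschel et al.}{GroetschelLovaszSchrijver1993}, which is exactly what you suggest doing in your closing sentence. Your self-contained derivation is essentially the standard argument underlying the cited results, and it is sound: parts (i) and (ii) unwind the definitions $\langle n\rangle = 1+\lceil\log_2(|n|+1)\rceil$ and $\langle p/q\rangle = \langle p\rangle + \langle q\rangle$ (your sum-versus-product step in (ii) is legitimate because each $\langle x_i\rangle \geq 3$ under the rational encoding, so each factor $2^{\langle x_i\rangle -1}$ is indeed at least $2$), and part (iii) is the usual granularity argument: at a violating vertex $v$, the positive rational $a^\top v - b$ has denominator $Q < 2^{\nu-n}$ once cleared, hence $a^\top v - b \geq 1/Q > 2^{n-\nu} > 2^{-\nu-1}$, contradicting the hypothesis, with recession directions handled separately. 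Two trivial caveats if the proof were ever written out in full: the lower bound in (i) fails for $r = 0$, a defect of the statement as transcribed in the paper rather than of your argument, which your restriction to nonzero rationals silently respects; and in (iii) the degenerate case $\P = \emptyset$ (so $V = \emptyset$) should be dismissed as trivially valid before the ray argument, which presupposes some $v \in V$. What the citation buys the paper is brevity; what your derivation buys is visibility into why the thresholds $2^{-\nu-1}$ and $2^{\langle x\rangle - n}$ are exactly the right bookkeeping rather than magic constants, which is worth having in mind since Lemma~\ref{lem:epsilon} of the paper leans on precisely these constants.
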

In other words, the facet complexity and the encoding lengths of the vectors
involved specify a ``granularity'' that can allow us to, for example, replace
a ``$<$'' with a ``$\leq$'' if we can bound the encoding length of the
numbers involved.

Utilizing the above definitions, the result showing the equivalence of
optimization and separation can be formally stated as follows.
\begin{theorem}{(\mycite{Gr\"otschel et al.}{GroetschelLovaszSchrijver1993})}
Let $\P \subseteq \Re^n$ be a
polyhedron with facet complexity $\varphi$. Given an oracle for any one of
\begin{itemize}
\item the dual bound verification problem over $\P$ with linear objective
coefficient $d \in \Q^n$, 
\item the separation problem for $\P$ with respect to $\hat{x} \in \Q^n$, or
\item the primal bound verification problem for $\P$ with linear objective
coefficient $d \in \Q^n$, 
\end{itemize}
there exists an oracle polynomial-time algorithm for solving either of the
other two problems. Further, all three problems are solvable in time
polynomial in $n$, $\varphi$, and either $\langle d \rangle$ (in the case of
primal or dual bound verification) or $\langle \hat{x} \rangle$ (in the case of
separation).
\label{th:gls}
\end{theorem}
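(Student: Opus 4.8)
The plan is to prove the three-way equivalence using the ellipsoid method as the central engine, with the \emph{a priori} size bounds supplied by Proposition~\ref{prop:encoding}. The cornerstone is the implication \emph{separation} $\Rightarrow$ \emph{optimization}: given a separation oracle for $\P$, I would run the central-cut ellipsoid method to either compute $\max_{x \in \P} d^\top x$ or certify that $\P$ is empty or the objective unbounded. First I would use the facet complexity $\varphi$ to bound the vertex complexity $\nu$ polynomially in $n$ and $\varphi$ (the standard lemma behind Proposition~\ref{prop:encoding}), which yields an initial ellipsoid---a ball of radius $2^{\mathrm{poly}(n,\varphi)}$---guaranteed to contain a vertex of $\P$ whenever $\P \neq \emptyset$, together with a lower bound $2^{-\mathrm{poly}(n,\varphi)}$ on the volume of $\P$ whenever it is full-dimensional. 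At each iteration I would query the oracle at the current center: if the center lies in $\P$ I add the objective cut $d^\top x \geq d^\top(\text{center})$ to search for an improving point, and otherwise I use the returned violated inequality as the cutting hyperplane. After a number of iterations polynomial in $n$, $\varphi$, and $\langle d \rangle$, the ellipsoid volume drops below the volume lower bound, at which point either infeasibility or unboundedness is certified or a near-optimal point has been isolated.

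The second ingredient is a rounding step: the point produced by the ellipsoid method is only approximately a vertex, so I would apply simultaneous Diophantine approximation (continued fractions) to round it to an exact rational vertex of $\P$. The justification that this rounding recovers a true optimum rather than merely a nearby point comes again from the granularity afforded by Proposition~\ref{prop:encoding}(iii), which lets me certify that the rounded objective value is exactly optimal and lets me replace the strict inequalities arising in the analysis with non-strict ones. To close the loop in the reverse direction, \emph{optimization} $\Rightarrow$ \emph{separation}, I would invoke polarity: the separation problem for $\P$ is itself an optimization problem over a polar object whose own separation problem is exactly the optimization problem over $\P$. Thus a second application of the ellipsoid method, with the optimization oracle for $\P$ now playing the role of the separation oracle for the polar, solves the separation problem within the same polynomial resource bounds.

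Finally, I would connect the two verification problems to optimization by binary search, using optimization as a hub. Both the primal and the dual bound verification problems reduce immediately to computing $\max_{x \in \P} d^\top x$ and comparing with $\alpha$, and conversely the optimum can be located by bisection on $\alpha$; the number of bisection steps is controlled by $\varphi$ because, by the encoding bounds of Proposition~\ref{prop:encoding}, two distinct attainable objective values cannot be closer than $2^{-\mathrm{poly}(n,\varphi,\langle d\rangle)}$, so the search terminates after polynomially many queries and the final strict comparison can be replaced by an exact one. Chaining these reductions yields oracle-polynomial-time algorithms among all three problems, with running time polynomial in $n$, $\varphi$, and either $\langle d \rangle$ or $\langle \hat{x} \rangle$.

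I anticipate the main obstacle is the ellipsoid analysis itself: correctly handling the degenerate cases where $\P$ is lower-dimensional, empty, or unbounded (which generally requires either a perturbation to force full dimension or an explicit dimension-reduction argument), and carrying out the precision and rounding analysis so that every approximate quantity the method produces can be certified exactly using the encoding-length bounds. Everything downstream---the polarity argument and the binary searches---is comparatively routine once the separation-to-optimization direction has been established with its quantitative bounds in place.
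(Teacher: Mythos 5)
The paper does not prove this statement at all: it is imported as a known result of Gr\"otschel, Lov\'asz and Schrijver, with the actual proof living in their monograph, so there is no internal proof to compare against. Your outline---central-cut ellipsoid method with volume and initial-ball bounds derived from the facet/vertex complexity, simultaneous Diophantine rounding to recover exact rational optima, polarity to get the optimization-to-separation direction, and binary search with the encoding-length granularity bounds to link the two verification problems to optimization---is precisely the standard argument of that cited source, so your proposal reconstructs, in essentially the same structure, the proof the paper relies on by citation.
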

The problem of verifying a given dual bound was called the \emph{violation
problem} by~\mycite{Gr\"otschel et al.}{GroetschelLovaszSchrijver1993}. The
above result refers only to the facet complexity $\varphi$, but we could also
replace it with the vertex complexity $\nu$, since it is easy to show that
$\nu \leq 4n^2\varphi$. In the remainder of the paper, we refer to the
``polyhedral complexity'' whenever the facet complexity and vertex complexity
can be used interchangeably.

\section{Complexity of Inverse MILP~\label{sec:complexity-IMILP}}

In this section, we apply the framework discussed in~\Sectiref{sec:complexity}
to analyze the complexity of the inverse MILP. We follow the traditional
approach and describe the complexity of the decision versions. In addition to
the standard primal bound verification problem, we also consider the dual
bound and optimal value verification problems. We show that the primal bound,
dual bound, and optimal value verification problems for the inverse MILP are
in the complexity classes $\coNP$-complete, $\NPcomplexity$-complete, and
$\Dcomplexity^{\Pcomplexity}$-complete, respectively. Hence, the classes
associated with primal and dual bounding problems are reversed when going from
MILP to inverse MILP, while the optimal value verification problems are both
contained in the same class. 

\subsection{Polynomially Solvable Cases}

The result of~\mycite{Ahuja and Orlin}{AhujaSeptember2001} can be applied
directly to observe that there are cases of the inverse MILP that are
polynomially solvable. In particular, they showed that the inverse problem can
be solved in polynomial time whenever the forward problem is polynomially
solvable.
\begin{theorem}{(\mycite{Ahuja and Orlin}{AhujaSeptember2001})} \label{th:ahuja}
If an optimization problem is polynomially solvable for each linear cost
function, then the corresponding inverse problems under the $\ell_1$ and
$\ell_{\infty}$ norms are polynomially solvable.
\end{theorem}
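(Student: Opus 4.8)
The plan is to exploit the linear-programming reformulations \eqref{eq:iml1} and \eqref{eq:imli} together with the equivalence of optimization and separation recorded in \Theorref{th:gls}. Under both the $\ell_1$ and the $\ell_\infty$ norm, the inverse problem is, after linearization, a linear program over the feasible-objective cone $\D(x^0)$, augmented only by a handful of variables and the $2n$ inequalities that encode the norm. Since a linear program is nothing but the optimization of a linear functional over a polyhedron, it suffices to exhibit a polynomial-time separation oracle for this feasible region and then invoke \Theorref{th:gls} in the separation-to-optimization direction.

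First I would isolate the two kinds of constraints in \eqref{eq:imli} (the argument for \eqref{eq:iml1} is identical). The $2n$ inequalities \eqref{eq:ineq-inf-a}--\eqref{eq:ineq-inf-b} can be written out explicitly and separated by inspection in time polynomial in $\langle c \rangle$ and $n$. The remaining constraints \eqref{eq:opt-ineq-inf}, namely $d^\top x \le d^\top x^0$ for all $x \in \S$, define exactly the cone $\D(x^0)$, and, as already observed in \Sectiref{sec:algo}, separating a candidate $\hat{d}$ from $\D(x^0)$ reduces to computing $x^\ast \in \argmax_{x \in \S}\hat{d}^\top x$: either $\hat{d}^\top x^\ast \le \hat{d}^\top x^0$, certifying $\hat{d} \in \D(x^0)$, or the inequality $d^\top x^\ast \le d^\top x^0$ is violated and supplies a separating hyperplane. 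By hypothesis this forward problem is solvable in polynomial time for every linear cost vector, so the entire feasible region of \eqref{eq:imli} admits a polynomial-time separation oracle.

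Next I would feed this oracle into \Theorref{th:gls}. The hard part is verifying the quantitative hypotheses: the equivalence is only polynomial-time once one controls the polyhedral complexity of the region and the encoding length of the oracle's output. Concretely, I would bound the facet complexity of $\D(x^0)$ in terms of the encoding of $(A, b, r)$ and $x^0$, using the standard fact that $\conv(\S)$ has vertex and facet complexity polynomial in the MILP input, so that the normal cone $\D(x^0)$ it induces inherits a polynomial bound, and I would appeal to Proposition~\ref{prop:encoding} to ensure the separating inequalities returned by the forward oracle have polynomially bounded encoding length. With these bounds in hand, the ellipsoid-based reduction underlying \Theorref{th:gls} runs in time polynomial in $n$, the polyhedral complexity, and $\langle c \rangle$, so that \eqref{eq:imli}, and hence the inverse problem under the $\ell_\infty$ norm, is solved in polynomial time.

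Finally I would dispose of the boundary cases. When the forward maximization $\max_{x \in \S}\hat{d}^\top x$ is unbounded, $x^0$ lies in the relative interior of $\conv(\S)$ and an optimal objective is attained at the orthogonal projection $c_{\S}^{\perp}$ (equal to $0$ in the full-dimensional case), exactly as described in \Sectiref{sec:algo}, so the oracle need only report unboundedness. Since the $\ell_1$ reformulation \eqref{eq:iml1} differs from \eqref{eq:imli} only by $n$ additional variables $y_i$ and $2n$ further inequalities that are again trivially separable, the identical argument yields polynomial solvability under the $\ell_1$ norm, completing the proof.
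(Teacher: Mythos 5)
Your proposal is correct and follows essentially the same route the paper attributes to Ahuja and Orlin: linearize the $\ell_1$/$\ell_\infty$ inverse problem into an LP whose only nontrivial constraints are those defining $\D(x^0)$, observe that separation over $\D(x^0)$ is exactly the forward problem (as in \Sectiref{sec:algo}), and invoke the Gr\"otschel--Lov\'asz--Schrijver equivalence of \Theorref{th:gls} with polynomial bounds on the polyhedral complexity (cf.\ \Theorref{th:gls-informal}). The paper itself offers no independent proof, only this citation-plus-sketch, and your reconstruction, including the treatment of the unbounded case, matches it.
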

\mycite{Ahuja and Orlin}{AhujaSeptember2001} use Theorem~\ref{th:gls}
from~\mycite{Gr\"otschel et al.}{GroetschelLovaszSchrijver1993} to conclude
that inverse LP, in particular, is polynomially solvable. The separation
problem in this case is an LP of polynomial input size and is hence
polynomially solvable. The theorem of ~\mycite{Gr\"otschel
et al.}{GroetschelLovaszSchrijver1993} shows that both problems are in the
class $\Pcomplexity$, since Karp and Cook reductions are equivalent for
problems in $\Pcomplexity$. \Theorref{th:ahuja} also indicates that if a given
MILP is polynomially solvable, then the associated inverse problem is also
polynomially solvable.

\subsection{General Case}

In the general case, the MILP constituting the forward problem is not known to
be polynomially solvable and so we now consider MILPs whose decision versions
are complete for $\NPcomplexity$.
Applying the results of~\mycite{Gr\"otschel
et al.}{GroetschelLovaszSchrijver1993} straightforwardly, as ~\mycite{Ahuja and
Orlin}{AhujaSeptember2001} did, we can easily show that~\nativeeqref{eq:iml1}
and~\nativeeqref{eq:imli} can be solved in polynomial time, given an oracle
for the MPVP, as stated in the following theorem.
\begin{theorem}
Given an oracle for the MPVP,~\nativeeqref{eq:iml1} and~\nativeeqref{eq:imli} are
solvable in time polynomial in $n$, the vertex complexity of $\conv(\S^+)$,
and $\langle c \rangle$.
\label{th:gls-informal}
\end{theorem}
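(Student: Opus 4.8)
The plan is to recognize~\nativeeqref{eq:imli} (and, identically,~\nativeeqref{eq:iml1}) as a linear program over a polyhedron $Q \subseteq \Re^{n+1}$ in the variables $(d,\theta)$ and to solve it via the equivalence of optimization and separation in \Theorref{th:gls}. The feasible region $Q$ is cut out by the $2n$ explicit ``box'' constraints~\eqref{eq:ineq-inf-a}--\eqref{eq:ineq-inf-b}, which can be checked and separated trivially in time polynomial in $n$ and $\langle c\rangle$, together with the constraints~\eqref{eq:opt-ineq-inf}, which are exactly the defining inequalities of the cone $\D(x^0)$. Hence the entire difficulty of separating a query point $(\hat d,\hat\theta)$ from $Q$ reduces to separating $\hat d$ from $\D(x^0)$, and the objective $\min\theta$ is linear, so the ellipsoid method applies provided such a separation oracle is available.

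Next I would make the reduction from separation over $\D(x^0)$ to the forward problem explicit, exactly as sketched in \Sectiref{sec:algo}: deciding whether $\hat d \in \D(x^0)$ amounts to deciding whether $\hat d^\top x \le \hat d^\top x^0$ for all $x \in \S$, which is answered by computing $\max_{x \in \S}\hat d^\top x$ and an optimal $x^\ast$. If $\hat d^\top x^\ast > \hat d^\top x^0$, then $(x^\ast - x^0)^\top d \le 0$ is an inequality valid for $\D(x^0)$ and violated by $\hat d$; otherwise $\hat d \in \D(x^0)$. By \Theorref{th:gls}, the MPVP oracle for $\conv(\S)$ is polynomially equivalent to the optimization problem over $\conv(\S)$, so this maximum and a vertex maximizer $x^\ast$ can be obtained in time polynomial in $n$, the polyhedral complexity of $\conv(\S)$, and $\langle \hat d\rangle$. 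Composing this with the trivial separation of the box constraints yields a separation oracle for $Q$.

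The step requiring the most care, and the main obstacle, is to control the encoding lengths so that these per-call polynomial bounds compose into a single polynomial bound and so that \Theorref{th:gls} can be invoked a second time, now applied to $Q$ itself. For this I would bound the facet complexity of $Q$: each facet is either a box constraint, of encoding length $O(\langle c\rangle)$, or an inequality $(x - x^0)^\top d \le 0$ arising from a generator $x$ of $\C(x^0) = \cone(\S^+ - \{x^0\})$, whose encoding length is governed by the vertex complexity of $\conv(\S^+)$ (which in particular controls $\langle x^0\rangle$, since $x^0$ is a vertex of $\conv(\S^+)$). With the facet complexity of $Q$ bounded in terms of $n$, the vertex complexity of $\conv(\S^+)$, and $\langle c\rangle$, Proposition~\ref{prop:encoding} guarantees that the query points generated during the ellipsoid run have polynomially bounded encoding length, so every call to the forward oracle stays polynomial, and the equivalence of separation and optimization in \Theorref{th:gls}, applied to $Q$, then solves~\nativeeqref{eq:imli} within the claimed budget. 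Two remaining technical points I would dispatch are the unboundedness of $\D(x^0)$—handled either by intersecting with a box whose radius is governed by the polyhedral complexity, or directly by the unbounded-case detection already described for~\eqref{eq:P_k}, corresponding to $x^0$ lying in the relative interior of $\conv(\S)$—and the $\ell_1$ case~\nativeeqref{eq:iml1}, whose only difference is the extra aggregation variable and analogous box constraints, leaving the separation argument unchanged.
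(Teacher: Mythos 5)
Your proposal is correct and takes essentially the same approach the paper intends: the paper gives no explicit proof of \Theorref{th:gls-informal}, stating only that it follows by applying \Theorref{th:gls} ``straightforwardly'' as Ahuja and Orlin did, which is precisely your route of viewing~\nativeeqref{eq:imli} and~\nativeeqref{eq:iml1} as LPs whose separation problem splits into trivial box constraints plus separation over $\D(x^0)$, the latter being the forward problem reachable from the MPVP oracle, with encoding lengths controlled by the polyhedral complexity. Your write-up simply fills in the details (facet complexity of the feasible region, boundedness, the $\ell_1$ variant) that the paper leaves implicit.
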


The above result directly implies that IMILP under $\ell_1$ and $\ell_\infty$
norms is in fact in the complexity $\Delta_2^{\Pcomplexity}$, but stronger
results are possible, as we show. In the remainder of this section, we assume
the norm used is a $p$-norm, as this is needed for some results (in
particular, Proposition~\ref{prop:encoding} is crucially used).

\myparagraph{Definitions} We next define decision versions of the inverse MILP
analogous to those we defined in the case of MILP. These similarly attempt to
verify that a given bound on the objective value is a primal bound, a dual
bound, or an exact optimal value.
The primal bound verification problem for inverse MILP is as follows.
\begin{definition}
\label{def:invd}
\textbf{Inverse MILP Primal Bound Verification Problem
(IMPVP)}:
  \begin{itemize}
  \item INPUT: $\gamma \in \Q$, 
  $x^0 \in \Q^n$, $A \in \Q^{m \times n}$,
  $c \in \Q^n$, $b \in \Q^m$, and $r \in \mathbb{N}$,
  where $(A, b , r)$ is an encoding of the set $\S$
  in~\eqref{eq:milp} and $(c, \S, x^0)$ are input data for
  problem~\eqref{eq:imilp}.
  \item OUTPUT: YES, if $\exists d\in\D(x^0)$ such that
  $\|c-d\|\leq\gamma$, i.e., $\K(\gamma) \cap \D(x^0) \not= \emptyset$,
  NO otherwise.
  \end{itemize}
\end{definition}
Similarly, we have the dual bound verification problem for inverse MILP.
\begin{definition} \textbf{Inverse MILP Dual Bound Verification Problem
(IMDVP)}:
  \begin{itemize}
  \item INPUT: $\gamma \in \Q$,
  $x^0 \in \Q^n$, $A \in \Q^{m \times n}$,
  $c \in \Q^n$, $b \in \Q^m$, and $r \in \mathbb{N}$,
  where $(A, b , r)$ is an encoding of the set $\S$
  in~\eqref{eq:milp} and $(c, \S, x^0)$ are input data for
  problem~\eqref{eq:imilp}.
  \item OUTPUT: YES, if $\|c-d\| \geq \gamma$ for all $d\in\D(x^0)$.
  Equivalently, $\inter\left(\K(\gamma)\right) \cap \D(x^0) = \emptyset$,
  NO otherwise.
  \end{itemize}
\end{definition}
Finally, we have the optimal value verification problem.
\begin{definition} \textbf{Inverse MILP Optimal Value Verification Problem
(IMOVP)}:
  \begin{itemize}
  \item INPUT: $\gamma \in \Q$,
  $x^0 \in \Q^n$, $A \in \Q^{m \times n}$,
  $c \in \Q^n$, $b \in \Q^m$, and $r \in \mathbb{N}$,
  where $(A, b , r)$ is an encoding of the set $\S$
  in~\eqref{eq:milp} and $(c, \S, x^0)$ are input data for
  problem~\eqref{eq:imilp}.
  \item OUTPUT: YES, if $\min _{d \in \K(y) \cap \D(x^0)} y = \gamma$,
  NO otherwise.
  \end{itemize}
\end{definition}
In the rest of the section, we formally establish the complexity class
membership of each of the above problems and in so doing, illustrate the
relationships of the above problem to each other and to their MILP
counterparts. We assume from here on that $\conv(\S^+)$ full-dimensional (and
that hence, $\D(x^0)$ is also full-dimensional) to simplify the exposition.

\myparagraph{Informal Discussion} Before presenting the formal proofs, which are
somewhat technical, we informally describe the relationship of the inverse
optimization problem to the MILP analogues of the above decision problems,
which are described in Section~\ref{sec:complexity-classes}. Suppose we are
given a value $\alpha$ and we wish to determine whether it is a primal bound,
a dual bound, or the exact optimal value of~\eqref{eq:milp} with objective
function vector $c \in \Q^n$. Roughly speaking, we can utilize an algorithm
for solving~\eqref{eq:imilp} to make the determination, as follows. We first
construct the target vector
\begin{equation*}
x^\alpha := \alpha\frac{c}{\|c\|_2 ^2},
\end{equation*}
which has an objective function value (in the forward problem) of $c^\top
x^\alpha = \alpha$ by construction. Now suppose we solve~\eqref{eq:imilp} with
$x^\alpha$ as the target vector and $c$ as the estimated objective function
coefficient. Note that $x^\alpha$ is not necessarily in $\S$. Solving this
inverse problem will yield one of two results.
\begin{enumerate}
\item The optimal value of the inverse problem is $0$. This means that
$x^\alpha \in \argmax_{x \in \S^+} c^\top x$, which immediately implies that
$\alpha \geq \max_{x \in \S} c^\top x$ and we have that $\alpha$ is a valid dual
bound for~\eqref{eq:milp}.  
\item The optimal value of the inverse problem is strictly positive. Then
there must be a point $x \in \S$ for which $c^\top x > c^\top x^\alpha
= \alpha$, which means that $\alpha < \max_{x \in \S} c^\top x$ and  $\alpha$
is a (strict) primal bound for~\eqref{eq:milp}.
\end{enumerate}
There are a number of challenges to be overcome in moving from this informal
argument to the formal reductions in the completeness proofs. One obstacle is
that the above argument does not precisely establish the status of $\alpha$ as
a bound because we cannot distinguish between when $\alpha$ is a strict dual
bound (and hence not a primal bound) and when $\alpha$ is the exact optimal
value. This can be overcome essentially by appealing to
Proposition~\ref{prop:encoding} to reformulate certain strict inequalities as
standard inequalities (see Lemma~\ref{lem:epsilon}). A second challenge is
that we have only described a reduction of a decision version of MILP to an
optimization version of the inverse problem and have failed to describe any
sort of certificate. The formal proofs provide reductions to decision versions
of the inverse problem along with the required short certificates. Despite the
additional required machinery, however, the principle at the core of these
proofs is the simple one we have just described.

\myparagraph{Formal Proofs} We now present the main results of the paper and
their formal proofs. Two lemmas that characterize precisely when a given value
$\gamma$ is either a dual bound or a strict dual bound, respectively,
for~\eqref{eq:imilp} are presented first. These lemmas are the key element
underlying the proofs that follow so we first explain the intuition. Recall
the instances of~\eqref{eq:imilp} illustrated earlier in
Figure~\ref{fig:sets}. The output to the IMPVP for these four instances would
be NO, NO, YES, and NO, respectively. Note that for the YES instance in
Figure~\ref{fig:sets-3}, $K(\gamma)\cap\D(x^0)$ is nonempty, as one would
expect, while for the NO instance, this intersection is empty. It would appear
that we are thus facing an existential question, so that the YES output would
be the easier of the two to verify by simply producing an element of
$K(\gamma)\cap\D(x^0)$.

As it turns out, the above reasoning, though intuitive, is incorrect. The key
observation we exploit is that whenever $K(\gamma)\cap\D(x^0)$ is empty,
$\conv(\S)\cap\inter\left(\K^*(\gamma)\right)$ is nonempty and vice versa.
Furthermore, there \emph{is} a short certificate for membership in
$\conv(\S)\cap\inter\left(\K^*(\gamma)\right)$. The characterization in terms
of $K(\gamma)\cap\D(x^0)$ can be seen as being in the dual space, whereas the
characterization in terms of $\conv(\S)\cap\inter\left(\K^*(\gamma)\right)$ is
in the primal space. It is highly unlikely that a short certificate for
membership in $\D(x^0)$ exists and the reason can be understood upon closer
examination. It is that we lack an explicit description of $\D(x^0)$ in terms
of \emph{generators} (a V-representation). We only have access to a partial
description of it in terms of valid inequalities (a partial H-representation).
Whereas any convex combination of a subset of generators of a polyhedral set
must be in the set, a point satisfying a subset of the valid inequalities may
not be in the set. Therefore, a partial H-representation is not sufficient for
constructing a certificate. Even if we were able to obtain a set of generators
algorithmically, we have no short certificate of the fact that they are in
fact generators. On the other hand, it is easy to check membership for the
points in $\S$ that generate $\conv(\S)$. This is fundamentally why verifying
a primal bound for~\eqref{eq:milp} is in $\NPcomplexity$, whereas verifying a
primal bound for~\eqref{eq:imilp} is in $\coNP$.

The equivalence of the two characterizations described above is formalized
below and this is what eventually allows us to prove the existence of a short
certificate for $\gamma$ being a dual bound or strict dual bound
for~\eqref{eq:imilp}, respectively.
\begin{lemma}\label{lem:characterize-not-ub}
For $\gamma \in \Q$ such that $0 \leq \gamma < \|c\|$, we have
\begin{align*}
\K(\gamma) \cap \D(x^0) = \emptyset & \Leftrightarrow
\conv(\S^+) \cap \inter\left(\K^*(\gamma)\right) \not= \emptyset \\
& \Leftrightarrow \|c - d\| > \gamma \; \forall d \in \D(x^0) \\ 
& \Leftrightarrow \gamma \textrm{ is a strict dual bound
for~\eqref{eq:imilp}}\\ 
& \Leftrightarrow \gamma \textrm{ is \emph{not} a primal bound
for~\eqref{eq:imilp}}.
\end{align*}
\end{lemma}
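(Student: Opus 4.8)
The plan is to reduce the four displayed equivalences to a single nontrivial geometric statement,
\[
\K(\gamma)\cap\D(x^0)=\emptyset \;\Longleftrightarrow\; \conv(\S^+)\cap\inter\left(\K^*(\gamma)\right)\neq\emptyset,
\]
and to dispatch the remaining three equivalences as essentially definitional. Since $\K(\gamma)=\{d\mid\|c-d\|\le\gamma\}$, the statement $\K(\gamma)\cap\D(x^0)=\emptyset$ says exactly that $\|c-d\|>\gamma$ for every $d\in\D(x^0)$, which is the third line. Because $\D(x^0)$ is a nonempty closed set (it contains $0$) and $\|c-\cdot\|$ is coercive, the optimal value $z^*$ of~\eqref{eq:imilp} is attained, so ``$\|c-d\|>\gamma$ for all feasible $d$'' is the same as $\gamma<z^*$, i.e.\ $\gamma$ is a strict dual (lower) bound; and since a primal bound for the minimization~\eqref{eq:imilp} is any value $\ge z^*$, the condition $\gamma<z^*$ is precisely ``$\gamma$ is \emph{not} a primal bound.'' All three of these steps are immediate, so the entire content lies in the first equivalence.

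The key preliminary step is to describe $\K^*(\gamma)$ through the dual norm $\|\cdot\|_*$. Writing $\K(\gamma)=\{c+u\mid\|u\|\le\gamma\}$ and minimizing a linear form over it gives $\min_{d\in\K(\gamma)}d^\top v=c^\top v-\gamma\|v\|_*$, so that $\K^*(\gamma)=\{x\mid c^\top(x-x^0)\ge\gamma\|x-x^0\|_*\}$, a closed convex cone with apex $x^0$. Here the hypothesis $0\le\gamma<\|c\|$ enters: by duality of $p$-norms it guarantees a direction $v$ with $c^\top v>\gamma\|v\|_*$, so the concave function $x\mapsto c^\top(x-x^0)-\gamma\|x-x^0\|_*$ has a nonempty positivity set and hence $\inter(\K^*(\gamma))=\{x\mid c^\top(x-x^0)>\gamma\|x-x^0\|_*\}$. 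This explicit description makes both implications checkable.

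The direction $\conv(\S^+)\cap\inter(\K^*(\gamma))\neq\emptyset\Rightarrow\K(\gamma)\cap\D(x^0)=\emptyset$ is the easy one and I would prove its contrapositive. If some $d\in\K(\gamma)\cap\D(x^0)$ existed, then for every $x\in\conv(\S^+)$ we would have $d^\top(x-x^0)\le0$ by feasibility, and the Cauchy--Schwarz/H\"older inequality $u^\top v\le\|u\|\,\|v\|_*$ would give $c^\top(x-x^0)=d^\top(x-x^0)+(c-d)^\top(x-x^0)\le\|c-d\|\,\|x-x^0\|_*\le\gamma\|x-x^0\|_*$, so no $x$ lies in $\inter(\K^*(\gamma))$. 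For the reverse direction I would use strict separation: assuming $\K(\gamma)\cap\D(x^0)=\emptyset$, and using that $\K(\gamma)$ is compact while $\D(x^0)=\C(x^0)^\circ$ is a closed convex cone, there is a separator $v$ with $\sup_{d\in\K(\gamma)}v^\top d<0$ and $v^\top d\ge0$ for all $d\in\D(x^0)$, the latter meaning $v\in\D(x^0)^*=-\cl(\C(x^0))$. Setting $w=-v\in\cl(\C(x^0))$ and evaluating the support function $\sup_{d\in\K(\gamma)}v^\top d=c^\top v+\gamma\|v\|_*<0$ yields $c^\top w>\gamma\|w\|_*$; dividing out the scaling then produces a point $x\in\conv(\S^+)$ with $c^\top(x-x^0)>\gamma\|x-x^0\|_*$, i.e.\ $x\in\conv(\S^+)\cap\inter(\K^*(\gamma))$.

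I expect the separation direction to be the main obstacle, specifically the step of converting the dual-space separator $v$ into an honest primal point $x\in\conv(\S^+)$. The cone $\C(x^0)=\{\lambda(x-x^0)\mid\lambda\ge0,\,x\in\conv(\S^+)\}$ need not be closed when $\conv(\S^+)$ is unbounded, so separation only places $w$ in $\cl(\C(x^0))$. What rescues the argument is that $c^\top w>\gamma\|w\|_*$ is a strict, positively homogeneous, hence \emph{open} condition: since $\C(x^0)$ is dense in its closure, I can perturb $w$ to a nearby $w'=\lambda(x-x^0)\in\C(x^0)$ with $\lambda>0$ still satisfying the strict inequality, and then scale by $1/\lambda$. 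This openness argument (and, implicitly, the nonemptiness of $\inter(\K^*(\gamma))$ guaranteed by $\gamma<\|c\|$) is the only delicate point; everything else is routine convex duality.
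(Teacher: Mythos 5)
Your proposal is correct, but it reaches the key equivalence by a genuinely different route than the paper. The paper proves the hard direction by contradiction, separating in the \emph{primal} space: assuming both $\K(\gamma) \cap \D(x^0) = \emptyset$ and $\conv(\S^+) \cap \inter\left(\K^*(\gamma)\right) = \emptyset$, it produces a hyperplane separating $\conv(\S^+)$ from $\inter\left(\K^*(\gamma)\right)$, argues the separator $a$ must lie in $\K(\gamma)$ (else the minimization over the cone $\K^*(\gamma)$ is unbounded), and then contradicts the assumed emptiness since $a \notin \D(x^0)$ forces $a^\top x^0 < a^\top \hat{x} \leq a^\top x^0$. You instead separate in the \emph{dual} space: strict separation of the compact ball $\K(\gamma)$ from the closed cone $\D(x^0)$, followed by the bipolar theorem to place the (negated) separator in $\cl\left(\C(x^0)\right)$, and a density-plus-openness perturbation to land in $\C(x^0)$ itself and rescale to a point of $\conv(\S^+) \cap \inter\left(\K^*(\gamma)\right)$. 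Your approach requires extra machinery the paper avoids --- the dual-norm support function $\sup_{d \in \K(\gamma)} v^\top d = c^\top v + \gamma \|v\|_*$, the explicit description $\inter\left(\K^*(\gamma)\right) = \{x \mid c^\top(x - x^0) > \gamma\|x-x^0\|_*\}$ (whose validity you correctly tie to $\gamma < \|c\|$ via concavity), and the bipolar theorem --- but it buys several things: the argument is direct rather than by contradiction; it makes explicit where $\gamma < \|c\|$ and the possible non-closedness of $\C(x^0)$ matter, a point the paper never has to confront because it never touches $\C(x^0)$; it does not use the paper's standing full-dimensionality assumption on $\conv(\S^+)$; and your handling of the ``definitional'' equivalences is actually more careful than the paper's, since you verify attainment of the minimum (closedness of $\D(x^0)$ plus coercivity of the norm) before identifying ``$\|c-d\| > \gamma$ for all $d$'' with ``$\gamma$ is a strict dual bound,'' whereas the paper dismisses these steps with ``follow by definition.'' Your easy direction (H\"older on the decomposition $c^\top(x-x^0) = d^\top(x-x^0) + (c-d)^\top(x-x^0)$) is the contrapositive of the paper's, which instead writes $\overline{x}$ as a convex combination of points of $\S^+$ and exhibits a violated generator; both are sound.
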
 

\begin{proof}
We prove that $\K(\gamma) \cap \D(x^0) = \emptyset$ if and only if
$\conv(\S^+) \cap \inter\left(\K^*(\gamma)\right) \not= \emptyset$. The
remaining implications follow by definition. 
\begin{itemize}
\item[($\Rightarrow$)] For the sake of contradiction, let us assume that both
$\K(\gamma) \cap \D(x^0) = \emptyset$ and
$\conv(\S^+) \cap \inter\left(\K^*(\gamma)\right) = \emptyset$. Under the
assumptions that $\conv(\S^+)$ is full-dimensional and $0 \leq \gamma
< \|c\|$, $\conv(\S^+)$ and $\inter\left(\K^*(\gamma)\right)$ are both
nonempty convex sets, so there exists a hyperplane separating them. In
particular, there exists $a \in \Re^n$ such that
\begin{equation}
\label{eq:sep} \tag{SEPi}
\max_{x \in \conv(\S^+)} a^\top x \leq \min_{x \in \K^*(\gamma)} a^\top x
= \inf_{x \in \inter\left(\K^*(\gamma)\right)} a^\top x. 
\end{equation}
The problem on the right-hand side is unbounded when
$a \not\in \K(\gamma)$, since then there must exist
$x \in \inter\left(\K^*(\gamma)\right)$ with $a^\top x < a^\top x^0$,
which means that $x - x^0$ is a ray with negative objective value
(recall $\K^*(\gamma)$ is a cone). Therefore, we must have
$a \in \K(\gamma)$ and it follows that $x^0$ is an optimal solution
for the problem on the right-hand side.
Therefore, we have
\begin{equation*}
\max_{x \in \conv(\S^+)} a^\top x \leq a^\top x^0.
\end{equation*}
Since $a \in \K(\gamma)$, then by assumption, $a \not\in \D(x^0)$, so
there exists an $\hat{x} \in \S^+$ such that $a^\top(x^0-\hat{x}) < 
0$. So finally, we have
\begin{equation*}
a^\top x^0 < a^\top \hat{x} \leq \max_{x \in \conv(\S^+)} a^\top x \leq a^\top
x^0,
\end{equation*}
which is a contradiction. This completes the proof of the forward direction.

\item[($\Leftarrow$)] For the reverse direction, we assume there exists
$\overline{x} \in \conv(\S^+) \cap \inter\left(\K^*(\gamma)\right)$. 
Since $\overline{x} \in \conv(\S^+)$, there exists
$\{x^1,x^2,\dots,x^k\} \subseteq \S^+$ and 
$\lambda \in \Q^k_+$ such that $\overline{x} = \sum
_{i=1} ^{k} \lambda_i x^i$, $ \sum _{i=1} ^{k} \lambda_i=1$, and $k \leq n+1$.
Now, let an arbitrary $d\in\K(\gamma)$ be given. Since $\gamma < \|c\|$, we
have $d \not=0$. Then, since
$\overline{x} \in \inter\left(\K^*(\gamma)\right)$, we have that
\begin{align*}
d^\top \left( x^0 - \overline{x} \right) < 0 & \Leftrightarrow
 d^\top x^0 - \left(\sum _{i=1} ^k \lambda_i x^i \right) < 0 \\
 & \Leftrightarrow
 d^\top \left(\sum _{i=1} ^k \lambda_i x^0 - \sum_{i=1}^k \lambda_i
 x^i \right) < 0 \\
 & \Leftrightarrow
 \sum _{i=1} ^k \lambda_i d^\top \left( x^0 -x^i \right) < 0\\
 & \Rightarrow
 \exists j \in \{1, \dots, k\} \textrm{ such that } d^\top (x^0 - x^j) < 0\\
 & \Rightarrow
 d \notin \D(x^0).
\end{align*}
Since $d$ was chosen arbitrarily, we have that $\K(\gamma)\cap\D(x^0)
=\emptyset$. This completes the proof of the reverse direction. 
\end{itemize}
\end{proof}

When $\gamma = 0$, we have that $\K^*(\gamma)$ is the half-space
$\{x \in \Re^n \mid c^\top x \geq c^\top x^0\}$, which further hints at the
relationship between the inverse dual bounding problem and both the MILP
primal bounding problem and the separation problem. A slightly modified
version of Lemma~\ref{lem:characterize-not-ub} characterizes when $\gamma$ is
a dual bound, but not necessarily strict. Note that unlike the
characterization in Lemma~\ref{lem:characterize-not-ub}, this characterization
doesn't hold when $\gamma = 0$, which is perhaps not surprising, since the
specific form of objective function we have chosen ensures that zero is always
a valid lower bound. As such, this information cannot be helpful in
distinguishing outcomes for the purpose of the reductions described shortly.

\begin{lemma} \label{lem:characterize-lb}
For $\gamma \in \Q$ such that $0 < \gamma < \|c\|$, we have
\begin{align*}
\inter\left(K(\gamma)\right) \cap \D(x^0) = \emptyset
& \Leftrightarrow \conv(\S^+) \cap \left(\K^*(\gamma) \setminus \{x^0\}\right)
\not= \emptyset \\ 
& \Leftrightarrow \|c - d\| \geq \gamma \;\forall d \in \D(x^0)  \\
& \Leftrightarrow \gamma \textrm{ is a dual bound 
for~\eqref{eq:imilp}} \\
& \Leftrightarrow \gamma \textrm{ is \emph{not} a strict primal bound 
for~\eqref{eq:imilp}} 
\end{align*}
\end{lemma}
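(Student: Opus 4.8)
The plan is to follow the template of Lemma~\ref{lem:characterize-not-ub} closely: I would prove only the first equivalence, $\inter(\K(\gamma)) \cap \D(x^0) = \emptyset \Leftrightarrow \conv(\S^+) \cap (\K^*(\gamma) \setminus \{x^0\}) \neq \emptyset$, since the last three equivalences again follow directly from the definitions (writing $z^\ast := \min_{d \in \D(x^0)} \|c-d\|$, the condition $\inter(\K(\gamma)) \cap \D(x^0) = \emptyset$ is literally $z^\ast \geq \gamma$, i.e.\ $\gamma$ is a dual bound for \eqref{eq:imilp}). Two changes from Lemma~\ref{lem:characterize-not-ub} drive the whole proof and must be respected: the closed ball $\K(\gamma)$ is replaced by the open ball $\inter(\K(\gamma))$, and the open cone $\inter(\K^*(\gamma))$ is replaced by the punctured cone $\K^*(\gamma)\setminus\{x^0\}$. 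The puncture is essential because $x^0 \in \S^+ \subseteq \conv(\S^+)$ and $x^0$ is the apex of $\K^*(\gamma)$, so $x^0$ always lies in $\conv(\S^+)\cap\K^*(\gamma)$; the substantive question is whether some \emph{other} point does. Throughout I would use the closed-form description $\K^*(\gamma) = \{x \mid c^\top(x^0-x) + \gamma\|x^0-x\|_\ast \leq 0\}$, obtained by maximizing $d^\top(x^0-x)$ over $d \in \K(\gamma)$, where $\|\cdot\|_\ast$ denotes the dual norm.

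For the reverse direction I would argue constructively, exactly parallel to the $(\Leftarrow)$ part of Lemma~\ref{lem:characterize-not-ub}. Given $\overline{x} \in \conv(\S^+)\cap(\K^*(\gamma)\setminus\{x^0\})$, write $\overline{x} = \sum_{i=1}^k \lambda_i x^i$ with $x^i \in \S^+$. For an arbitrary $d \in \inter(\K(\gamma))$ I would show $d^\top(x^0 - \overline{x}) < 0$: since $\overline{x}\in\K^*(\gamma)$ we have $c^\top(x^0-\overline{x}) + \gamma\|x^0-\overline{x}\|_\ast \leq 0$, and since $\|c-d\| < \gamma$ strictly while $\overline{x} \neq x^0$ forces $\|x^0-\overline{x}\|_\ast > 0$, the dual-norm bound gives $(d-c)^\top(x^0-\overline{x}) \leq \|c-d\|\,\|x^0-\overline{x}\|_\ast < \gamma\|x^0-\overline{x}\|_\ast$, whence $d^\top(x^0-\overline{x}) < 0$. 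Expanding the convex combination then yields some $j$ with $d^\top(x^0 - x^j) < 0$, so $x^j \neq x^0$, hence $x^j \in \S$ and $d \notin \D(x^0)$; as $d$ was arbitrary, $\inter(\K(\gamma))\cap\D(x^0) = \emptyset$. The one genuinely new point here, relative to Lemma~\ref{lem:characterize-not-ub}, is that the strict inequality now comes from the \emph{open} ball together with $\overline{x}\neq x^0$, rather than from membership in the \emph{interior} of $\K^*(\gamma)$.

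For the forward direction I would replace the separating-hyperplane argument of Lemma~\ref{lem:characterize-not-ub} with a dual construction, because the hard part is precisely the \emph{touching} case $z^\ast = \gamma$, in which $\K(\gamma)$ meets $\D(x^0)$ only on the boundary sphere; here separation alone produces a boundary point of the ball and does not by itself deliver a punctured-cone witness. Instead, using $\D(x^0) = \C(x^0)^\circ$, I would invoke the standard distance-to-a-polar-cone duality $z^\ast = \max\{c^\top u \mid u \in \C(x^0),\ \|u\|_\ast \leq 1\}$ (strong duality holds with the maximum attained, since $\C(x^0)$ is a finitely generated closed cone and the norm ball is compact). Assuming $\inter(\K(\gamma))\cap\D(x^0)=\emptyset$, i.e.\ $z^\ast \geq \gamma$, let $u^\ast$ attain this maximum with $\|u^\ast\|_\ast = 1$; this is possible because $\gamma > 0$ gives $z^\ast > 0$ and hence $u^\ast \neq 0$. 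Since $u^\ast \in \C(x^0) = \cone(\S^+ - \{x^0\})$, for all sufficiently small $t > 0$ the point $\overline{x} := x^0 + t u^\ast$ lies in $\conv(\S^+)$ and $\overline{x} \neq x^0$. Finally $c^\top(\overline{x}-x^0) = t\,c^\top u^\ast = t z^\ast \geq t\gamma = \gamma\|\overline{x}-x^0\|_\ast$ shows $\overline{x} \in \K^*(\gamma)$, so $\overline{x} \in \conv(\S^+)\cap(\K^*(\gamma)\setminus\{x^0\})$.

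The main obstacle, as flagged above, is the boundary case $z^\ast = \gamma$, which is exactly what forces the relaxation of Lemma~\ref{lem:characterize-not-ub}'s characterization from $\inter(\K^*(\gamma))$ to $\K^*(\gamma)\setminus\{x^0\}$ and from the closed to the open ball; the dual construction handles $z^\ast \geq \gamma$ uniformly and sidesteps it. Two hypotheses earn their keep and I would point them out: $\gamma > 0$ is what guarantees $u^\ast \neq 0$ and therefore $\overline{x}\neq x^0$ (this is why, as the surrounding text notes, the characterization fails at $\gamma = 0$), and removing $x^0$ is indispensable since otherwise the intersection is nonempty for trivial reasons.
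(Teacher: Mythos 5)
Your proof is correct, and the comparison here is somewhat unusual: the paper gives \emph{no} proof of Lemma~\ref{lem:characterize-lb}, presenting it only as a ``slightly modified version'' of Lemma~\ref{lem:characterize-not-ub}, so your argument fills a real gap rather than shadowing an existing one. Your reverse direction is indeed the paper's own template for Lemma~\ref{lem:characterize-not-ub} (expand the convex combination, apply the H\"older bound), with the strictness correctly resourced from the open ball together with $\overline{x} \neq x^0$ rather than from $\inter\left(\K^*(\gamma)\right)$. Your forward direction, by contrast, genuinely departs from the paper's method, and your reason for departing is sound: a direct adaptation of the separating-hyperplane contradiction would, in the touching case, only produce a vector $a \in \K(\gamma) \cap \D(x^0)$ lying on the boundary sphere, and this does not contradict $\inter\left(\K(\gamma)\right) \cap \D(x^0) = \emptyset$, so the contradiction collapses exactly where the lemma differs from Lemma~\ref{lem:characterize-not-ub}. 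Your substitute---the identity $\min_{d \in \D(x^0)} \|c-d\| = \max\{c^\top u \mid u \in \C(x^0),\ \|u\|_* \leq 1\}$ followed by the explicit witness $\overline{x} = x^0 + t u^*$---handles $z^* \geq \gamma$ uniformly, is constructive, and correctly isolates where $\gamma > 0$ and the puncture at $x^0$ are used, in agreement with the paper's remarks after the lemma. Two supporting facts should be made explicit rather than asserted: (i) the step $\D(x^0)^\circ = \C(x^0)$, and the subsequent claim that $x^0 + t u^* \in \conv(\S^+)$ for small $t > 0$, both require $\C(x^0)$ to be a \emph{closed} convex cone; this holds because $\conv(\S)$ is a rational polyhedron (Meyer's theorem), hence $\C(x^0)$ is the polyhedral radial cone of $\conv(\S^+)$ at $x^0$---the paper states this in Section~\ref{sec:introduction}, so it is fair background, but your proof silently breaks if $u^*$ were merely in $\cl\left(\C(x^0)\right)$; (ii) the distance-to-polar-cone duality itself rests on a minimax exchange (Sion's theorem with the compact dual-norm ball, or support-function calculus), which is standard but worth a citation. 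With those two points referenced, the proof is complete; as a trade-off, the paper's separation route is more elementary and self-contained, while yours buys a clean, constructive treatment of the boundary case $z^* = \gamma$ that the separation route cannot reach without additional limiting arguments.
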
 

We now present the main theorems. 

\begin{theorem}
The IMPVP is in $\coNP$.
\label{th:conp}
\end{theorem}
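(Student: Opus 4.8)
The plan is to prove the equivalent statement that the \emph{complement} of the IMPVP lies in $\NPcomplexity$, from which membership of the IMPVP in $\coNP$ is immediate. A NO answer to the IMPVP means exactly that $\K(\gamma) \cap \D(x^0) = \emptyset$, so the task is to exhibit a short, polynomial-time-checkable certificate for this event. The degenerate ranges of $\gamma$ are disposed of first and require no certificate: if $\gamma < 0$ then $\K(\gamma) = \emptyset$ and the answer is always NO, while if $\gamma \geq \|c\|$ then $0 \in \D(x^0)$ (trivially, since $0^\top(x-x^0) \leq 0$ for all $x$) and $\|c-0\| = \|c\| \leq \gamma$, so $\K(\gamma) \cap \D(x^0) \neq \emptyset$ and the answer is always YES. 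The substance of the proof concerns the nondegenerate range $0 \leq \gamma < \|c\|$ covered by Lemma~\ref{lem:characterize-not-ub}.

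For that range, Lemma~\ref{lem:characterize-not-ub} supplies the crucial reformulation
\[
\K(\gamma) \cap \D(x^0) = \emptyset \iff \conv(\S^+) \cap \inter\left(\K^*(\gamma)\right) \neq \emptyset,
\]
which turns the NO instance into an \emph{existential} statement in the primal space and thereby points directly to the form of an $\NPcomplexity$ certificate. I would take as certificate a witness point $\overline{x} \in \conv(\S^+) \cap \inter\left(\K^*(\gamma)\right)$, presented in the form guaranteed by the reverse direction of Lemma~\ref{lem:characterize-not-ub} via Carath\'eodory's theorem: a set $\{x^1,\dots,x^k\} \subseteq \S^+$ with $k \leq n+1$ together with rational multipliers $\lambda \in \Q^k_+$, $\sum_i \lambda_i = 1$, and $\overline{x} = \sum_i \lambda_i x^i$. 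The $x^i$ may be taken to be vertices of $\conv(\S^+) = \conv(\S \cup \{x^0\})$, each of which is either $x^0$ or a vertex of $\conv(\S)$; the latter have encoding length bounded by the polyhedral complexity of $\conv(\S)$ (using the classical bound on the vertex complexity of the integer hull of a rational polyhedron), so the $x^i$ are short.

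Verification is then polynomial time: each $x^i$ is checked to lie in $\S^+$ (either $x^i = x^0$, or $Ax^i \leq b$ with the first $r$ coordinates integral), the point $\overline{x} = \sum_i \lambda_i x^i$ is formed, and membership in $\inter\left(\K^*(\gamma)\right)$ is tested through the single inequality $c^\top(\overline{x} - x^0) > \gamma\,\|\overline{x} - x^0\|_*$, where $\|\cdot\|_*$ is the dual of the chosen $p$-norm (this follows from maximizing $d^\top(x^0 - \overline{x})$ over $d \in \K(\gamma)$). For $\ell_1$ and $\ell_\infty$ this dual norm is rational-valued and the test is exact; for a general $p$-norm it reduces to comparing a rational with an algebraic number, still polynomial time. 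The step I expect to be the main obstacle is bounding the encoding length of the multipliers $\lambda$: because membership in $\inter\left(\K^*(\gamma)\right)$ is an \emph{open} (strict) condition, the admissible set of $\lambda$ in the simplex is open, and I must guarantee it contains a rational point of polynomially bounded encoding length whenever it is nonempty. I would establish this by a granularity argument in the spirit of Proposition~\ref{prop:encoding}, showing that the region where $c^\top(\overline{x}-x^0) - \gamma\,\|\overline{x}-x^0\|_* > 0$ has a width bounded below in terms of the polyhedral complexity, so that a suitably rounded $\lambda$ of bounded encoding still lies strictly inside. This is precisely where the $p$-norm hypothesis and Proposition~\ref{prop:encoding} are indispensable.
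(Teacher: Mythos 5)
Your proposal is correct and takes essentially the same route as the paper: the NO answer is reformulated via Lemma~\ref{lem:characterize-not-ub} as the existence of $\overline{x} \in \conv(\S^+) \cap \inter\left(\K^*(\gamma)\right)$, which is then certified by at most $n+1$ points of $\S^+$ via Carath\'eodory's theorem and checked directly against $\inter\left(\K^*(\gamma)\right)$. If anything, you are more careful than the paper, which dismisses the verification step with ``membership in $\inter\left(\K^*(\gamma)\right)$ is easily verified directly'' and never discusses the encoding length of the convex multipliers $\lambda$ or the explicit dual-norm test $c^\top(\overline{x}-x^0) > \gamma\,\|\overline{x}-x^0\|_*$ --- the granularity issue you flag is genuine, but it is a gap the paper's own proof leaves implicit as well.
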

\begin{proof}

We prove the theorem by showing the existence of a short certificate when the
output to the problem is NO. Therefore, let an instance $(\gamma,x^0,c,A,b,r)$
of the IMPVP for which the output is NO be given along with relevant input
data. Since the output is NO, we must have that $\gamma < \|c\|$, since $d
= 0$ is a valid solution otherwise.
By the characterization of Lemma~\ref{lem:characterize-not-ub}, the NO answer
is equivalent to the condition $\K(\gamma) \cap \D(x^0) = \emptyset$,
as well as to the existence of
$\overline{x} \in \conv(\S^+) \cap \inter\left(\K^*(\gamma)\right)$. We derive
our certificate from the latter, since this is an existence criterion. We have
noted earlier that simply providing such $\overline{x}$ is not itself a short
certificate, since we cannot verify membership in $\conv(\S^+)$ in polynomial
time. Fortunately, Carath\'eodory's Theorem provides that when
$\overline{x} \in \conv(\S^+)$, there exists a set of at most $n+1$ extreme
points of $\conv(S^+)$ whose convex combination yields $\overline{x}$.
Membership in $\inter\left(\K^*(\gamma)\right)$ is easily verified directly,
so the set of points serves as the certificate and is short, since only $n+1$
extreme points are needed.
\end{proof}
We next show that not only is IMPVP in $\coNP$, but it is also complete for it. 
\begin{theorem}
The IMPVP is complete for $\coNP$.
\label{th:conp-complete}
\end{theorem}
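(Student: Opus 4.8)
The plan is to establish $\coNP$-hardness by exhibiting a Karp reduction from a $\coNP$-complete problem to the IMPVP; combined with Theorem~\ref{th:conp}, this yields completeness. The natural source is the MILP Dual Bound Verification Problem (MDVP), which I would first argue is itself $\coNP$-complete. Its complement is, up to a strict-versus-nonstrict inequality, the MPVP of Definition~\ref{def:milpd}, which is $\NPcomplexity$-complete. The gap between ``$d^\top x < \alpha$ for all $x \in \S$'' and ``$d^\top x \leq \alpha$ for all $x \in \S$'' is bridged by the granularity estimate of Proposition~\ref{prop:encoding}(iii): perturbing the threshold $\alpha$ by a quantity of the form $2^{-\mathrm{poly}}$ (polynomial in the input encoding length) converts one condition into the other without changing the answer, so MPVP Karp-reduces to the complement of MDVP and MDVP is $\coNP$-complete.

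The reduction itself mirrors the informal target-vector construction discussed earlier. Given an MDVP instance with forward objective $c \in \Q^n$, bound $\alpha \in \Q$, and feasible set $\S$ encoded by $(A,b,r)$, I would build the IMPVP instance with the same $(A,b,r)$, the same estimate $c$, threshold $\gamma := 0$, and target vector $x^\alpha := \alpha\, c / \|c\|_2^2$. By construction $c^\top x^\alpha = \alpha$, and $x^\alpha$ is a rational vector whose encoding length is polynomial in the input, so the map is computable in polynomial time. The degenerate case $c = 0$ is decided directly, since MDVP then reduces to checking whether $\alpha \geq 0$.

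Correctness is immediate from the definition of $\D(x^0)$. With $\gamma = 0$ the ball $\K(0)$ is the single point $\{c\}$, so the IMPVP answer is YES if and only if $c \in \D(x^\alpha)$, that is, if and only if $c^\top(x - x^\alpha) \leq 0$ for all $x \in \S$. Since $c^\top x^\alpha = \alpha$, this is exactly the condition $c^\top x \leq \alpha$ for all $x \in \S$ that defines a YES answer to MDVP, so the two answers coincide. Note that this argument invokes only the trivial primal-feasibility direction of the correspondence and in particular relies neither on Lemma~\ref{lem:characterize-not-ub} nor on full-dimensionality for its correctness.

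The step I expect to be the main obstacle is reconciling the construction with the standing assumption that $\conv(\S^+)$ is full-dimensional, since an arbitrary MDVP instance need not satisfy it. I would address this either by observing that the assumption was adopted only to streamline the exposition and plays no role in the correctness argument above, or, to remain strictly within the full-dimensional regime, by a padding argument that augments the instance with additional bounded variables carrying zero objective coefficient, chosen so that the augmented convex hull is full-dimensional while every added feasible point still satisfies $c^\top x \leq \alpha$, leaving the MDVP answer---and hence the reduction---intact. The genuinely delicate part is realizing such a gadget as a bona fide MILP feasible set of the prescribed form $\{x \mid Ax \leq b\} \cap (\Z^r \times \Re^{n-r})$ and verifying that it introduces no violating point and inflates the encoding length by at most a polynomial factor.
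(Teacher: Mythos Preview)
Your proposal is correct and follows essentially the same route as the paper: reduce MDVP to IMPVP via the target vector $x^\alpha = \alpha c/\|c\|_2^2$ and threshold $\gamma = 0$, then observe that $\K(0)=\{c\}$ so the IMPVP answer is YES iff $c\in\D(x^\alpha)$ iff $c^\top x\le\alpha$ for all $x\in\S$. You are in fact more careful than the paper in two respects: you explicitly justify that MDVP is $\coNP$-complete (handling the strict-versus-nonstrict gap between MDVP and the complement of MPVP via the granularity of Proposition~\ref{prop:encoding}), and you flag and address the $c=0$ and full-dimensionality edge cases, neither of which the paper's proof treats.
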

\begin{proof} We show that the MDVP can be Karp-reduced to the IMPVP. Let an
instance $(\alpha,c,A,b,r)$ of the MDVP be given. Then we claim this MDVP
can be decided by deciding an instance of the IMPVP with inputs $(0,
x^\alpha, c, A, b, r)$, where $x^\alpha = \alpha \frac{c}{\|c\|_2 ^2}$. By the
characterization of Lemma~\ref{lem:characterize-not-ub}, the IMPVP with this
input asks whether $\K(0) \cap \D(x^\alpha)$ is nonempty.
The first set contains a single point, $d=c$. The intersection is nonempty if
and only if $c$ is in $\D(x^\alpha)$. $c$ is in this cone
if and only if
\begin{equation*}
\begin{aligned}
c^\top \left(x - x^\alpha \right) \leq 0
&& \forall x \in \S & \Leftrightarrow
&c^\top x - \alpha &\leq 0 & \forall x \in \S,\\
&&& \Leftrightarrow &c^\top x &\leq \alpha & \forall x \in \S.
\end{aligned}
\end{equation*}
The last line above means the output of the MDVP is YES. This indicates that
the output of the original instance of the MDVP is YES if and only if the
output of the constructed instance of the IMPVP is YES.
\end{proof}
\Figurref{fig:conp} illustrates the reduction from the MDVP to the IMPVP for
three different $\alpha$ values. In the proof, the point $\alpha c$ is
constructed to have objective function value $\alpha$. The inverse
problem is then just to determine whether the inequality $c^\top
x \leq \alpha$ is valid for $\conv(\S)$, so the equivalence to the MDVP
follows straightforwardly. The output to both the MDVP and the IMPVP is NO
for $\alpha_1$ and YES for both $\alpha_2$ and $\alpha_3$. Note that this
proof crucially depends on the fact that we do not assume $x^0 \in \S$ for the
IMPVP.
\begin{figure}
\centering
\includegraphics[scale=1.0]{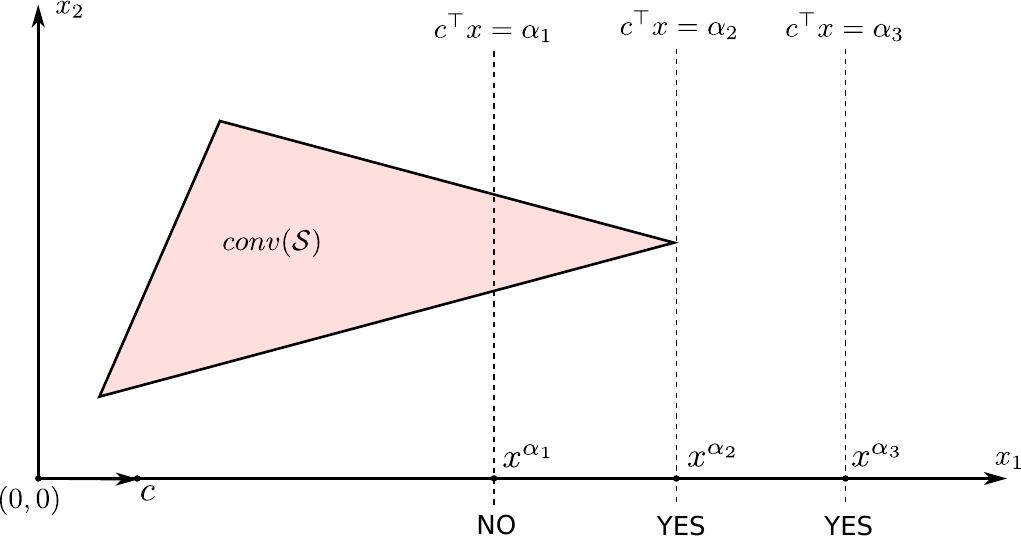}
\caption{Reduction Example \label{fig:conp}}
\end{figure}
Now, we consider IMDVP, for which the proofs take a similar approach. 
\begin{theorem}
The IMDVP is in $\NPcomplexity$.
\label{th:np}
\end{theorem}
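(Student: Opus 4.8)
The plan is to establish membership in $\NPcomplexity$ by exhibiting a short certificate for every YES instance, mirroring the argument used for the IMPVP in Theorem~\ref{th:conp} but with Lemma~\ref{lem:characterize-lb} playing the role that Lemma~\ref{lem:characterize-not-ub} played there. By definition the YES answer is the condition $\inter(\K(\gamma)) \cap \D(x^0) = \emptyset$, which on its face is a universal (``for all $d$'') statement and so does not obviously admit a short certificate. The crux is that Lemma~\ref{lem:characterize-lb} converts this into the \emph{existential} statement $\conv(\S^+) \cap (\K^*(\gamma) \setminus \{x^0\}) \not= \emptyset$, a primal-space condition from which the certificate can be extracted, just as membership in $\conv(\S^+) \cap \inter(\K^*(\gamma))$ supplied the NO-certificate in Theorem~\ref{th:conp}.

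First I would dispose of the boundary values of $\gamma$ directly: if $\gamma \leq 0$ the answer is trivially YES (since $\|c-d\| \geq 0 \geq \gamma$ for all $d$) and needs no certificate, while if $\gamma > \|c\|$ the answer is forced to NO, because $0 \in \D(x^0)$ and $\|c-0\| = \|c\| < \gamma$ exhibits a violating $d$. This leaves $0 < \gamma \leq \|c\|$, on which (for $0 < \gamma < \|c\|$) Lemma~\ref{lem:characterize-lb} gives that a YES answer is equivalent to the existence of some $\overline{x} \in \conv(\S^+) \cap (\K^*(\gamma) \setminus \{x^0\})$. As in Theorem~\ref{th:conp}, simply presenting $\overline{x}$ is insufficient, since membership in $\conv(\S^+)$ cannot be verified in polynomial time; instead, Carath\'eodory's Theorem guarantees that $\overline{x}$ is a convex combination of at most $n+1$ extreme points $x^1,\dots,x^k$ of $\conv(\S^+)$, and this finite set together with the multipliers $\lambda \in \Q^k_+$ is the certificate. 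The verifier then (i) checks each $x^i \in \S^+$ in polynomial time (verifying $Ax^i \leq b$ and integrality of the first $r$ coordinates, or $x^i = x^0$), (ii) forms $\overline{x} = \sum_i \lambda_i x^i$, and (iii) checks $\overline{x} \in \K^*(\gamma) \setminus \{x^0\}$, which reduces to verifying the single inequality $c^\top(\overline{x} - x^0) \geq \gamma \|\overline{x} - x^0\|_*$ in the dual norm $\|\cdot\|_*$ together with $\overline{x} \not= x^0$.

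I expect the main obstacle to be controlling the encoding length of the certificate so that it is genuinely \emph{short}. This requires the extreme points $x^i$ of $\conv(\S^+)$ to have encoding length bounded by the vertex complexity of $\conv(\S^+)$, which is polynomial in the input, and correspondingly that the Carath\'eodory multipliers $\lambda_i$ are polynomially bounded rationals; Proposition~\ref{prop:encoding} supplies the relevant bounds on norms and encoding lengths. A secondary technicality is that the membership test in step (iii) involves a $p$-norm and its dual, so the inequality must be verified without computing irrational roots, which is done in the standard way by fixing the sign of $c^\top(\overline{x}-x^0)$ and clearing exponents to compare rational quantities. Finally, the degenerate value $\gamma = \|c\|$ (which occurs exactly when the optimal value equals $\|c\|$, i.e.\ when $c \in \C(x^0)$) lies at the edge of the range covered by Lemma~\ref{lem:characterize-lb} and would need either a short separate argument or a limiting version of the same certificate construction.
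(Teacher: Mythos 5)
Your proposal follows essentially the same route as the paper's proof: a YES answer is converted, via Lemma~\ref{lem:characterize-lb}, into the existential statement that some $\overline{x} \in \conv(\S^+) \cap \left(\K^*(\gamma) \setminus \{x^0\}\right)$ exists, and Carath\'eodory's Theorem then supplies a certificate of at most $n+1$ points of $\S^+$ whose membership, convex combination, and dual-norm inequality can be checked in polynomial time. If anything, your treatment is more careful than the paper's, which restricts to $\gamma < \|c\|$ by analogy with Theorem~\ref{th:conp} without addressing the possible YES instance $\gamma = \|c\|$ (attainable exactly when the inverse optimal value equals $\|c\|$, i.e.\ when $c \in \C(x^0)$) that you correctly flag as lying outside the range of Lemma~\ref{lem:characterize-lb} and needing a separate short argument.
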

\begin{proof}
The proof is almost identical to the proof that IMPVP is in $\coNP$. We show
that there exists a short certificate that can validate the output YES. Let an
instance $(\gamma,x^0,c,A,b,r)$ of the IMDVP be given such that the output is
YES. As in the proof of~\Theorref{th:conp}, it is enough to consider the case
where $\gamma < \|c\|$ and we can assume without loss of generality that
$\gamma > 0$ (otherwise, the output is trivially YES). When the output is YES,
by the characterization of Lemma~\ref{lem:characterize-lb},
$\inter\left(K(\gamma)\right)\cap\D(x^0)=\emptyset$ holds and there exists
$\overline{x} \in \conv(\S^+) \cap \K^*(\gamma)$ with 
$\overline{x} \not= x^0$. As in the proof
of~Theorem~\ref{th:conp}, Carath\'eodory's Theorem ensures there is a set of
at most $n+1$ members of $\S^+$ whose convex combination yields $\overline{x}$
and this set of points is the short certificate.
\end{proof}

To show that IMDVP is complete for $\NPcomplexity$, we need one more
technical lemma, which exploits Proposition~\ref{prop:encoding} to enable us
to Karp-reduce MPVP to IMDVP. The basic idea is to exploit the property that
the norm of the difference between two numbers can be bounded by a function of
their encoding lengths. This allows us to cleanly distinguish the cases
in which a given value is a primal bound for ~\eqref{eq:milp} from that in
which the given value is a strict dual bound by solving an IMDVP that we can
can easily construct.

\begin{lemma} \label{lem:epsilon}
Let $\alpha \in \Q$ and let $\epsilon := 2^{-\max\{\langle c \rangle
+ \nu, \langle \alpha \rangle\} - 1}$ and $\delta := 2^{-\langle x^0\rangle -
\nu - 1}$.
\begin{enumerate}[label=(\roman*)]
\item \label{lem:epsilon-1} If $\alpha \leq \max_{x \in \S} c^\top x +
\epsilon$, then $\alpha \leq \max_{x \in \S} c^\top x$.
\item  \label{lem:epsilon-2} 
If $\alpha \leq \max_{x \in \S} c^\top x$, then $\|c-d\| > \epsilon\delta > 0$ 
for all $d \in \D(x^{\alpha-\epsilon})$, where $x^{\alpha-\epsilon} =
(\alpha-\epsilon)\frac{c}{\|c\|^2_2}$.
\end{enumerate}
\end{lemma}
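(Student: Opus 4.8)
Proof plan for Lemma~\ref{lem:epsilon}. Both parts are ultimately ``granularity'' arguments in the spirit of Proposition~\ref{prop:encoding}: a nonzero rational quantity assembled from data of bounded encoding length cannot be arbitrarily small, so a strict inequality can be promoted to a non-strict one once the relevant numbers are pinned down to enough precision. The plan is to treat \ref{lem:epsilon-1} as a pure number-theoretic estimate and \ref{lem:epsilon-2} as a geometric (separation plus H\"older) estimate, in each case reading off $\epsilon$ and $\delta$ as the calibrated thresholds.

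For \ref{lem:epsilon-1} I would argue by contraposition. Suppose $\alpha > \max_{x \in \S} c^\top x =: M$. Since $\conv(\S)$ has vertex complexity $\nu$, the maximum is attained at an extreme point $v$ with $\langle v \rangle \leq \nu$, so $M = c^\top v$ is a rational whose numerator and denominator are controlled by $\langle c \rangle$ and $\nu$. The gap $\alpha - M$ is then a \emph{nonzero} rational, and Proposition~\ref{prop:encoding}(i) bounds it away from $0$ by a quantity determined by the encoding lengths of $\alpha$ and of $c^\top v$. The definition of $\epsilon$ is chosen to lie strictly below this threshold, so that $\alpha - M > \epsilon$, which is exactly the contrapositive of the claim. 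I expect the main obstacle to be the bookkeeping in this step: one must bound the denominator of $c^\top v$ (equivalently $\langle c^\top v\rangle$) in terms of $\langle c \rangle$ and $\nu$ and then combine it with $\langle \alpha \rangle$, and since the denominators of $\alpha$ and $c^\top v$ multiply in the worst case, the estimate matching the $\max\{\langle c\rangle + \nu, \langle\alpha\rangle\}$ form of $\epsilon$ must be carried out with care.

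For \ref{lem:epsilon-2} I would proceed geometrically. Assume $\alpha \leq M := \max_{x \in \S} c^\top x$ and let $x^*$ be an extreme point of $\conv(\S)$ attaining $M$, so that $\langle x^* \rangle \leq \nu$. By construction $c^\top x^{\alpha-\epsilon} = \alpha - \epsilon$, whence $c^\top(x^* - x^{\alpha-\epsilon}) = M - (\alpha - \epsilon) \geq \epsilon > 0$. Now fix any $d \in \D(x^{\alpha-\epsilon})$; by the definition of $\D$ we have $d^\top(x^* - x^{\alpha-\epsilon}) \leq 0$, and subtracting gives $(c-d)^\top(x^* - x^{\alpha-\epsilon}) \geq \epsilon$. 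Applying H\"older's inequality to the left-hand side yields $\epsilon \leq \|c - d\| \cdot \|x^* - x^{\alpha-\epsilon}\|_*$, where $\|\cdot\|_*$ is the norm dual to the chosen $p$-norm, so that $\|c - d\| \geq \epsilon / \|x^* - x^{\alpha-\epsilon}\|_*$.

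It then remains to bound $\|x^* - x^{\alpha-\epsilon}\|_*$ from above by $1/\delta$. Using the triangle inequality with Proposition~\ref{prop:encoding}(ii), the term $\|x^*\|_*$ is controlled via $\langle x^* \rangle \leq \nu$ and $\|x^{\alpha-\epsilon}\|_*$ via its own encoding length; because the lemma will be invoked in the subsequent reduction with $x^0 = x^{\alpha-\epsilon}$, the exponent $\langle x^0 \rangle + \nu + 1$ in $\delta$ is exactly the bound needed to guarantee the strict inequality $\|x^* - x^{\alpha-\epsilon}\|_* < 2^{\langle x^0\rangle + \nu + 1} = 1/\delta$. Combining the two displays gives $\|c - d\| > \epsilon\delta$, and since $\epsilon, \delta > 0$ the same chain yields $\epsilon\delta > 0$, which establishes \ref{lem:epsilon-2}. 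As in part (i), the only genuinely delicate point is the encoding-length estimate ensuring the norm bound is consistent with the stated definition of $\delta$; the geometric content is the elementary observation that $c$ strictly separates $x^*$ from the target in the direction certifying infeasibility, and H\"older converts this strict separation into a quantitative distance bound.
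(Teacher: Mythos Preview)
Your proposal is correct and follows essentially the same route as the paper: part~(i) is the contrapositive granularity argument via Proposition~\ref{prop:encoding}, and part~(ii) picks an optimal extreme point $x^*$, uses $d\in\D(x^{\alpha-\epsilon})$ together with $c^\top(x^*-x^{\alpha-\epsilon})\geq\epsilon$ to get $(c-d)^\top(x^*-x^{\alpha-\epsilon})\geq\epsilon$, and then bounds $\|x^*-x^{\alpha-\epsilon}\|$ via Proposition~\ref{prop:encoding}(ii). The only cosmetic differences are that the paper invokes Cauchy--Schwarz where you (more appropriately for a general $p$-norm) invoke H\"older with the dual norm, and the paper bounds $\|x^{\alpha-\epsilon}-\overline{x}\|$ by directly estimating the encoding length of the difference rather than via the triangle inequality.
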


\begin{proof}
\begin{enumerate}[label=(\roman*)]

\item First, we have that the encoding length of $\max_{x \in \S} c^\top x$ is
bounded by $\langle c \rangle + \nu$. Then if $\alpha > \max_{x \in \S} c^\top
x$, $\alpha > \max_{x \in \S} c^\top x + \epsilon$ by
Proposition~\ref{prop:encoding}. Therefore, $\alpha \leq \max_{x \in \S}
c^\top x$.

\item Let $\alpha \leq \max_{x \in \S} c^\top x$ and
$\overline{x} \in \argmax_{x\in\S} c^\top x$ be given such 
that $\overline{x}$ is an extreme point of $\conv(\S)$ and let
$x^{\alpha-\epsilon} := (\alpha - \epsilon) \frac{c}{\|c\|^2_2}$. Since
$c^\top \overline{x} \geq \alpha$ and $c^\top x^{\alpha-\epsilon} = \alpha - \epsilon
< \alpha$, it follows that $\overline{x} \not= x^{\alpha-\epsilon}$. For an arbitrary
$d\in\D(x^{\alpha-\epsilon})$, we have that 
\begin{align}
d^\top (x^{\alpha-\epsilon} - \overline{x}) \geq 0
& \Leftrightarrow d^\top (x^{\alpha-\epsilon} - \overline{x}) - c^\top
(x^{\alpha-\epsilon} - \overline{x}) + c^\top 
(x^{\alpha-\epsilon} - \overline{x}) \geq 0 \notag \\
& \Leftrightarrow (d-c)^\top (x^{\alpha-\epsilon} - \overline{x})
\geq c^\top \overline{x} - c^\top x^{\alpha-\epsilon} \notag \\
& \Leftrightarrow (d-c)^\top (x^{\alpha-\epsilon} - \overline{x})
\geq c^\top \overline{x} - \alpha + \epsilon \label{eq:insertx0}\\
& \Leftrightarrow (d-c)^\top (x^{\alpha-\epsilon} - \overline{x})
\geq \epsilon \label{eq:use-existence-of-x}\\
& \Leftrightarrow \|c-d\| \|x^{\alpha-\epsilon} - \overline{x}\| \geq \epsilon
\label{eq:cauchy-schwarz}\\
& \Leftrightarrow \|c-d\| \geq \frac{\epsilon}{\|x^{\alpha-\epsilon}
- \overline{x}\|} \label{eq:div-allowed} \\
& \Leftrightarrow \|c-d\| > \epsilon\delta \label{eq:what}
\end{align}
Equation~\eqref{eq:insertx0} follows by substituting
$(\alpha-\epsilon)\frac{c}{\|c\|_2 ^2}$ for $x^{\alpha-\epsilon}$;
\eqref{eq:use-existence-of-x} follows from the nonnegativity of
$c^\top\overline{x}-\alpha$; and~\eqref{eq:cauchy-schwarz} follows from the
Cauchy--Schwarz inequality. Equation~\eqref{eq:div-allowed} follows
from~\eqref{eq:cauchy-schwarz} because $\|\overline{x} - x^{\alpha-\epsilon}\| > 0$.
Equation~\eqref{eq:what} follows from the fact that ${\|x^{\alpha-\epsilon} - \overline{x}\|}
< 2^{\langle x^{\alpha-\epsilon} \rangle + \nu + 1}$, again by
Proposition~\ref{prop:encoding}, since the 
encoding length of $x^{\alpha-\epsilon} - \overline{x}$ is bounded by the vertex complexity
$\nu$ of $\conv(\S)$ and $\langle x^{\alpha-\epsilon} \rangle$.
\end{enumerate}
\end{proof}

\begin{theorem}
The IMDVP is complete for $\NPcomplexity$.
\label{th:np-complete}
\end{theorem}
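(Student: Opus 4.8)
The plan is to prove hardness by exhibiting a Karp reduction from the MPVP, which is $\NPcomplexity$-complete; combined with Theorem~\ref{th:np} (membership of the IMDVP in $\NPcomplexity$), this gives completeness. Given an MPVP instance $(\alpha, c, A, b, r)$, which asks whether $\alpha \leq \max_{x \in \S} c^\top x$, I would construct the IMDVP instance $(\gamma, x^0, c, A, b, r)$ that carries $A$, $b$, $r$, and the estimated objective $c$ over unchanged, sets the target vector to $x^0 := x^{\alpha-\epsilon} = (\alpha-\epsilon)\frac{c}{\|c\|_2^2}$, and sets the threshold to $\gamma := \epsilon\delta$, using exactly the quantities $\epsilon$ and $\delta$ from Lemma~\ref{lem:epsilon}. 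The design principle is that $c^\top x^{\alpha-\epsilon} = \alpha - \epsilon$ by construction, so the target lies on the hyperplane $\{x \mid c^\top x = \alpha-\epsilon\}$, just below the value $\alpha$ being tested. Note that $\gamma = \epsilon\delta > 0$, and $x^0$ need not lie in $\S$, which is exactly why the reduction can work.

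The correctness argument splits on the MPVP output. If the MPVP answer is YES, so $\alpha \leq \max_{x\in\S} c^\top x$, then Lemma~\ref{lem:epsilon}\ref{lem:epsilon-2} gives $\|c-d\| > \epsilon\delta = \gamma$ for every $d \in \D(x^{\alpha-\epsilon})$; in particular $\|c-d\| \geq \gamma$ for all such $d$, so $\gamma$ is a dual bound for~\eqref{eq:imilp} and the IMDVP output is YES. If the MPVP answer is NO, so $\alpha > \max_{x\in\S} c^\top x$, then the contrapositive of Lemma~\ref{lem:epsilon}\ref{lem:epsilon-1} yields $\alpha - \epsilon > \max_{x\in\S} c^\top x$, whence $c^\top x^{\alpha-\epsilon} = \alpha - \epsilon > c^\top x$ for every $x \in \S$. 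Consequently $c^\top(x - x^{\alpha-\epsilon}) < 0$ for all $x \in \S$, i.e.\ $c$ is \emph{itself} a feasible objective, $c \in \D(x^{\alpha-\epsilon})$, and since $\|c - c\| = 0 < \gamma$, the value $\gamma$ fails to be a dual bound, so the IMDVP output is NO. The two outputs therefore coincide, giving the required many-to-one reduction.

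The step I expect to demand the most care is verifying that this map is genuinely polynomial, so that it qualifies as a Karp reduction rather than merely a Cook reduction. This amounts to checking that $\epsilon$, $x^0$, $\delta$, and $\gamma$ all have encoding length bounded by a polynomial in the input size and can be produced in polynomial time. The dependencies are acyclic and must be evaluated in the order $\epsilon \to x^0 \to \delta \to \gamma$: $\epsilon$ depends only on $\langle c \rangle$, $\langle \alpha \rangle$, and the vertex complexity $\nu$ of $\conv(\S)$; then $x^0 = (\alpha-\epsilon)c/\|c\|_2^2$ is an explicit rational vector; then $\delta$ depends on $\langle x^0 \rangle$ and $\nu$; and finally $\gamma = \epsilon\delta$. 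The one nontrivial ingredient is $\nu$, for which I would invoke the standard polynomial bound on vertex complexity in terms of the facet complexity of $\conv(\S)$ (itself bounded by the input size), as recorded after Theorem~\ref{th:gls}. Since replacing $\nu$ by any larger value only shrinks $\epsilon$, and the inequalities in Lemma~\ref{lem:epsilon} remain valid whenever $\epsilon$ and $\delta$ are defined from that same upper bound, a polynomially computable upper bound on $\nu$ is enough for both parts of the argument above.

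Finally, I would record the two harmless normalizing assumptions that make the construction well-posed while keeping the source problem $\NPcomplexity$-complete: that $c \neq 0$ (needed for $x^0$ to be defined), and that $\S$ is bounded so that $\max_{x\in\S} c^\top x$ is finite. With these in place the reduction is complete and, together with Theorem~\ref{th:np}, establishes that the IMDVP is complete for $\NPcomplexity$.
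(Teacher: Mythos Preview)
Your proposal is correct and uses essentially the same reduction as the paper: the same target vector $x^{\alpha-\epsilon}$, the same threshold $\gamma=\epsilon\delta$, and the same appeal to Lemma~\ref{lem:epsilon} for both directions. The only cosmetic difference is that you case-split on the MPVP output while the paper case-splits on the IMDVP output; you are also more explicit than the paper about the polynomial encoding-length bookkeeping (the $\epsilon \to x^0 \to \delta \to \gamma$ dependency chain and the use of an upper bound on $\nu$), which is a welcome addition.
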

\begin{proof}
We show that the MPVP can be Karp-reduced to the IMDVP. Therefore, let an
instance $(\alpha,c,A,b,r)$ of the MPVP be given and let $\epsilon$ and
$\delta$ be given as in Lemma~\ref{lem:epsilon}.
Then we claim that the MPVP can be resolved by deciding the IMDVP with
inputs $(\epsilon\delta, x^{\alpha-\epsilon}, A, b, r)$, where
$x^{\alpha-\epsilon} = (\alpha - \epsilon) \frac{c}{\|c\|^2_2}$. The
IMDVP asks whether the set
$\inter\left(\K(\epsilon\delta)\right) \cap
\D(x^{\alpha-\epsilon})$
is empty. Note that $c \in \inter \left(\K(\epsilon\delta)\right)$. If the
output to the IMDVP is YES, then
$\inter\left(\K(\epsilon\delta)\right) \cap
\D(x^{\alpha-\epsilon})$
is empty. This means $c \not\in \D(x^{\alpha-\epsilon})$,
i.e.,
\begin{equation*}
c \notin \left\{d\in\Re ^n \mid
d^\top\left(x - x^{\alpha-\epsilon}\right) \leq 0 \; \forall x \in \S\right\}.
\end{equation*}
This in turn means that there exists an $\overline{x} \in \S$ such that
\begin{align*}
c^\top \left(\overline{x} - x^{\alpha-\epsilon} \right) > 0
& \Leftrightarrow c^\top \overline{x} -\alpha + \epsilon > 0 \\
& \Leftrightarrow c^\top \overline{x} + \epsilon > \alpha \\
& \Rightarrow c^\top \overline{x} \geq \alpha.
\end{align*}
The last implication is by Lemma~\ref{lem:epsilon}, part~\ref{lem:epsilon-1}.
Hence, the output to the MPVP is YES.

When the output to the IMDVP is NO, there exists
$\hat{d}\in\inter\left(\K(\epsilon\delta)\right)\cap\D(x^{\alpha-\epsilon})$.
Then $\hat{d}\in\D(x^{\alpha-\epsilon})$ and $\|c-\hat{d}\|
\leq \epsilon\delta$, so by the contraposition of Lemma~\ref{lem:epsilon},
part~\ref{lem:epsilon-2}, there is no $\overline{x}\in\S$ such that
$c^\top \overline{x} \geq \alpha$, i.e., $c^\top x < \alpha$ for all $x\in\S$.
This means the output to the MPVP is NO.
\end{proof}

Figure \ref{fig:np1} illustrates the case
in which $c^\top x < \alpha_3$ holds for all $x \in \S$, so that the output of
MPVP is NO. In this case, there exists
$\hat{d}\in\inter\left(\K(\epsilon\delta)\right)\cap\D(x^{\alpha-\epsilon})$.
In fact, $c$ is itself
in $\inter\left(\K(\epsilon\delta)\right)\cap\D(x^{\alpha-\epsilon})$, which
means the optimal value of the associated IMILP is 0. The instance of IMDVP
specified in the reduction also has output NO, since we are
checking the validity of a nonzero dual bound. 

\begin{figure}[tb]
\begin{center}
\subfloat[\label{fig:np1}]{\includegraphics[width=0.5\textwidth]
{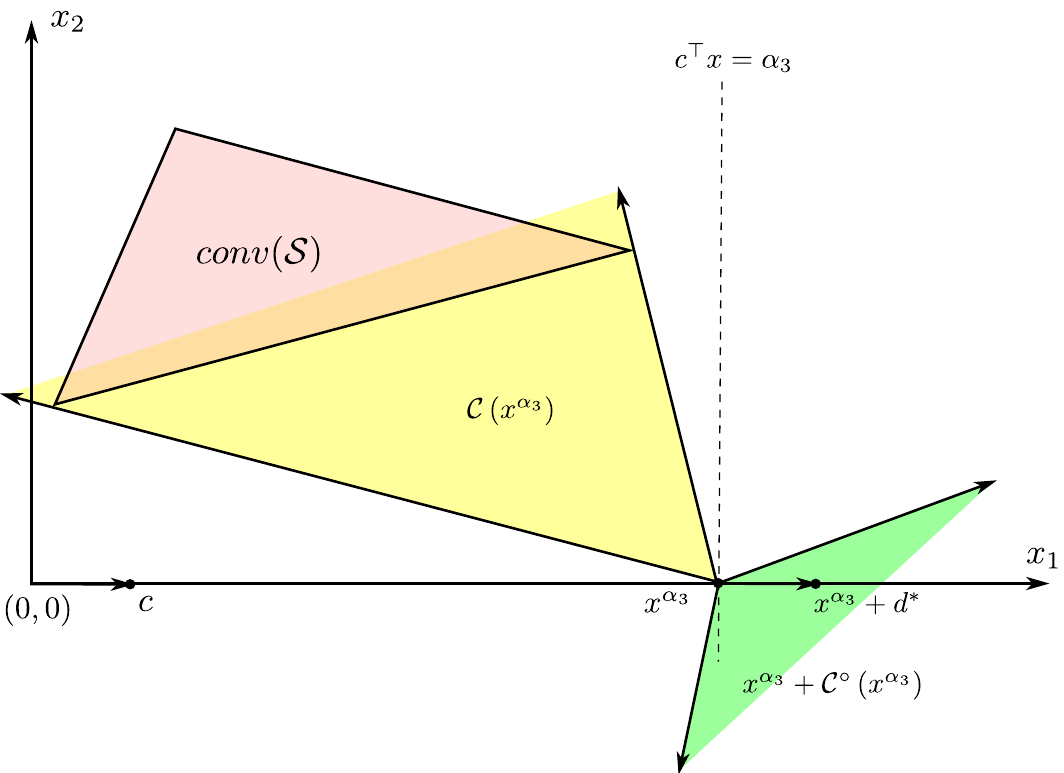}}
\subfloat[\label{fig:np2}]{\includegraphics[width=0.5\textwidth]
{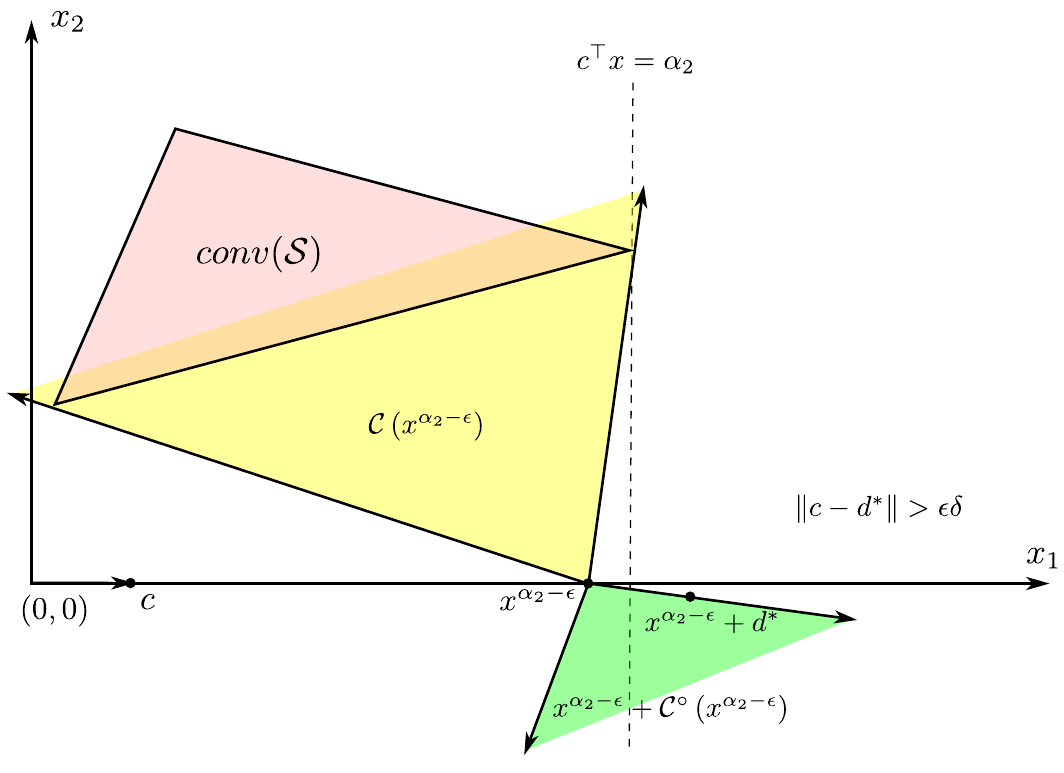}}
\end{center}
\caption{Lemma~\ref{lem:epsilon} on two simple examples \label{fig:np}}
\end{figure}

Figure \ref{fig:np2} illustrates the case in which the optimal value to the
MILP is exactly $\alpha_2$, so that the output of the MPVP is YES. In this
case, Lemma 3 tells us that the optimal value for the associated IMILP must be
greater than $\epsilon\delta$ and that the instance of IMDVP that we solve in
the reduction must have output YES as well. Note the essential role of
$\epsilon\delta$ as a number strictly between zero and the smallest possible
positive optimal value that the associated instance of IMILP can have. This is
necessary precisely because Lemma 2 does not hold for $\gamma = 0$. 

\begin{theorem}
The IMOVP is complete for $\Dcomplexity ^{\Pcomplexity}$.
\label{th:dp}
\end{theorem}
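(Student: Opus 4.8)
The plan has two parts: membership in $\Dcomplexity^{\Pcomplexity}$ and hardness. For membership, let $z^*$ denote the optimal value of~\eqref{eq:imilp3} on the given input. Then $z^*=\gamma$ if and only if $z^*\le\gamma$ \emph{and} $z^*\ge\gamma$. The condition $z^*\le\gamma$ is exactly the YES condition of the IMPVP on the \emph{same} input $(\gamma,x^0,c,A,b,r)$, which is in $\coNP$ by Theorem~\ref{th:conp}; the condition $z^*\ge\gamma$ is exactly the YES condition of the IMDVP on that input, which is in $\NPcomplexity$ by Theorem~\ref{th:np}. Hence the language of the IMOVP is the intersection of a $\NPcomplexity$ language and a $\coNP$ language over identical inputs, which is by definition in $\Dcomplexity^{\Pcomplexity}$.

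For hardness I would reduce from the MOVP (Definition~\ref{def:milpv}), which is $\Dcomplexity^{\Pcomplexity}$-complete. The hard part will be the following: the IMOVP asks for an \emph{exact} optimal value, whereas the reductions of Theorems~\ref{th:conp-complete} and~\ref{th:np-complete} only produce \emph{thresholds}. With the single target $x^\alpha$ one gets $z^*=0$ precisely when $\max_{x\in\S}c^\top x\le\alpha$, so the boundary case $\max=\alpha$ is invisible; with $x^{\alpha-\epsilon}$ one gets $z^*>\epsilon\delta$ precisely when $\max\ge\alpha$, but the exact value of $z^*$ depends on the (NP-hard to locate) maximizer and is not a computable constant. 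A single target therefore cannot separate $\max=\alpha$ from the two strict alternatives.

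To overcome this I would combine the two tests into one instance whose feasible-objective cone decouples \emph{exactly}, and pin each test's contribution to a constant by a symmetrized lift. First lift $\S$ to $\widehat\S=\{(x,c^\top x):x\in\S\}\cup\{(-x,c^\top x):x\in\S\}\subseteq\Re^{n+1}$ (a MILP via a binary selector and big-$M$), which is symmetric under negating the first $n$ coordinates. Then form the \emph{wedge} $\widetilde\S=(\widehat\S\times\{x_B^0\})\cup(\{x_A^0\}\times\widehat\S)$ with $x_A^0=(0,\alpha)$, $x_B^0=(0,\alpha-\epsilon)$, and $\epsilon$ as in Lemma~\ref{lem:epsilon}. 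Because one block is pinned at its target while the other ranges, $\D(\widetilde x^0)=\D_A\times\D_B$ holds exactly, so for any $p$-norm the optimal value is the $\|\cdot\|_p$-combination of the two block distances $v_A,v_B$. Taking estimates $(0,2)$ in block $A$ and $(0,1)$ in block $B$, the symmetry forces, in each block, the nearest feasible objective onto the last coordinate axis: summing the constraints from $(x,c^\top x)$ and $(-x,c^\top x)$ gives $d_{n+1}(c^\top x-t_0)\le 0$, so a single point with $c^\top x>t_0$ drives $d_{n+1}\le 0$ and pins the distance to the estimate exactly. This yields $v_A\in\{0,2\}$ with $v_A=2$ iff $\max>\alpha$, and $v_B\in\{0,1\}$ with $v_B=1$ iff $\max\ge\alpha$, where Lemma~\ref{lem:epsilon} and Proposition~\ref{prop:encoding} convert $\max>\alpha-\epsilon$ into $\max\ge\alpha$.

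Putting the pieces together, the optimal value is $0$ when $\max<\alpha$ (both blocks inactive), exactly $\|(0,1)\|_p=1$ when $\max=\alpha$ (only block $B$ active), and $\|(2,1)\|_p\ge 2>1$ when $\max>\alpha$ (both active); the scaling $2$ versus $1$ is what keeps these values distinct even for $p=\infty$. Setting $\gamma=1$ then gives a Karp reduction whose IMOVP instance answers YES iff $\max_{x\in\S}c^\top x=\alpha$, which together with membership proves the theorem. The remaining points are routine: checking that the reduction and the numbers $\gamma=1,\epsilon,\delta$ have polynomial encoding length, and arranging full-dimensionality of $\conv(\widetilde\S^+)$ by thickening the lifted graph into a thin full-dimensional slab, perturbing the pinned distances by less than the granularity of Proposition~\ref{prop:encoding}.
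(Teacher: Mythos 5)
Your membership argument is the paper's own: the IMOVP language is the intersection, over a common input, of the IMPVP language (in $\coNP$, Theorem~\ref{th:conp}) and the IMDVP language (in $\NPcomplexity$, Theorem~\ref{th:np}), which is the defining form of a $\Dcomplexity^{\Pcomplexity}$ language. For hardness you depart from the paper, and the departure is substantive. The paper disposes of hardness in one sentence, asserting that the MOVP reduces to the IMOVP ``using the same reduction'' as Theorem~\ref{th:np-complete}, i.e., the single-target instance with target $x^{\alpha-\epsilon}$ and bound $\gamma=\epsilon\delta$. Your objection to single-target constructions applies verbatim to that claim: on that instance the inverse optimal value is $0$ when $\max_{x\in\S}c^\top x<\alpha$, while by part~\ref{lem:epsilon-2} of Lemma~\ref{lem:epsilon} it is strictly larger than $\epsilon\delta$---by a geometry-dependent amount that no polynomial-time map can predict---whenever $\max_{x\in\S}c^\top x\geq\alpha$; so that IMOVP instance answers NO in the boundary case $\max_{x\in\S}c^\top x=\alpha$, precisely when the MOVP answers YES. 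Your two-block wedge fixes this: I verified that $\D(\widetilde{x}^0)$ decouples exactly into $\D_A\times\D_B$, that the $\pm x$ symmetrization works as claimed (summing the paired constraints forces the last component of $d$ to be nonpositive once a block's threshold is exceeded, and $d=0$ then attains the bound), so that $v_A\in\{0,2\}$ tests $\max_{x\in\S}c^\top x>\alpha$ and $v_B\in\{0,1\}$ tests $\max_{x\in\S}c^\top x\geq\alpha$ via part~\ref{lem:epsilon-1} of Lemma~\ref{lem:epsilon}, and that the three cases $\max<\alpha$, $\max=\alpha$, $\max>\alpha$ give optimal values $0$, $1$, and at least $2$ under every $p$-norm, making $\gamma=1$ a correct Karp reduction. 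What remains is routine but should be written out: the two unions need explicit MILP descriptions (big-$M$ requires bounds, so reduce from bounded, e.g.\ $0$--$1$, MOVP instances, which are still $\Dcomplexity^{\Pcomplexity}$-complete), and the paper's standing full-dimensionality assumption on $\conv(\S^+)$ must be restored by your thickening step without moving the pinned values $0$, $1$, $2$ by more than the granularity allowed by Proposition~\ref{prop:encoding}. The trade is clear: your argument is longer and more technical, but unlike the paper's one-line hardness claim it survives the exact-value boundary case that the IMOVP is designed to detect.
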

\begin{proof}
As noted before, the reduction presented in Theorem~\ref{th:np-complete} can
be used to reduce both the MDVP and the MPVP to the IMPVP and the IMDVP,
respectively. Using this reduction, the language of the IMOVP can then be
expressed as the intersection of the languages of the IMPVP and the IMDVP
that are in $\coNP$ and $\NPcomplexity$, respectively. This proves that the
IMOVP is in class $\Dcomplexity ^{\Pcomplexity}$. The IMOVP is complete for
$\Dcomplexity^{\Pcomplexity}$, since the MOVP can be reduced to the IMOVP using the
same reduction.
\end{proof}

Note that the optimal value verification problems associated with both the
inverse and forward problems are complete for $\Dcomplexity^{\Pcomplexity}$,
placing this decision version of both the forward and inverse problems in the
same complexity class.

\section{Conclusion and Future Directions}

In this paper, we discussed the relationship of the inverse mixed integer
linear optimization problem to both the associated forward and separation
problems.
We showed that the inverse problem can be seen as an optimization problem over
the cone described by all inequalities valid for $\S$ and satisfied at
equality by $x^0$ (the normal cone at $x^0$). Alternatively, it can also be
seen as an optimization problem over the 1-polar under some additional
assumptions. Both these characterizations make the connection with the
separation problem associated with $\S$ evident.

By observing that the separation problem for the feasible region of the
inverse problem under the $\ell_1$ and $\ell_{\infty}$ norms is an instance of
the forward problem, it can be shown via a straightforward cutting-plane
algorithm that the inverse problem can be solved by solving polynomially many
instances of the forward problem with different objective functions. This is
done by invoking the result of~\mycite{Gr\"otchel
et al.}{GroetschelLovaszSchrijver1993}, which shows that optimization and
separation are polynomially equivalent, i.e., each can be Cook-reduced to the
other. This in turn places the decision version of inverse MILP in the
complexity class $\Delta_2^p$.

The main result of the paper is that a stronger analysis is possible. We first
show that verification of a primal and dual bound for the inverse problem is in
$\coNP$ and $\NPcomplexity$ by providing short certificates for the NO and YES
outputs, respectively. We show that verification of a dual bound for the forward
problem can be Karp-reduced to verification of a primal bound for the
associated inverse problem. Thus, both problems are complete for the class
$\coNP$ and are on the same level of the polynomial-time hierarchy. Similarly,
verification of a primal bound for the forward problem can be Karp-reduced to
verification of a dual bound for the associated inverse problem. Thus, both of
those problems are complete for the class $\NPcomplexity$ and are also on the
same level of the polynomial-time hierarchy. Finally, we use these two results
to show that the verification problem for the optimal value of IMILP is
complete for the class $\Dcomplexity^{\Pcomplexity}$, which is the same class
that~\mycite{Papadimtriou}{Papadimitriou:1982:CF:800070.802199} showed
contains the MILP optimal value verification problem.

Although we have not done so formally, we believe the results in this paper
lay the groundwork for stating some version of Theorem~\ref{th:gls} that
incorporates the equivalence of the inverse problem and is stated in terms of
Karp reduction. The form such a result would take is not entirely obvious.
Whereas the essence of the separation problem is to determine whether or not a
given point is a member of a given convex set, the inverse problem implicitly
demands that we determine which of three sets contains a given point: the
relative interior of a given convex set, the boundary of that convex set, or
the complement of the set. Whereas it is known that the problem of determining
whether or not a given point is contained in the convex hull of solutions is
complete for $\NPcomplexity$, our results indicate that the related problem of
determining if a given point is on the boundary of the feasible set of an MILP
is a complete problem for $\Dcomplexity^{\Pcomplexity}$, while the problems of
determining whether a given point is contained in the convex hull of solutions
or determining whether a given point is contained in the complement of the
relative interior are in $\NPcomplexity$ and $\coNP$, respectively. Given all
of this, it seems clear that a unified result integrating all of the various
problems we have introduced and stating their equivalence in terms of Karp
reduction should be possible.

Finally, while we have implemented the cutting-plane algorithm in this paper,
it is clear that more work must be done to develop computationally efficient
algorithms for solving the inverse versions of difficult combinatorial
optimization problems. Development of customized algorithms for which the
number of oracle calls required in practice can be reduced is a next natural
step. Given our results, the existence of a direct algorithm for solving
inverse problems can also not be ruled out.

\bibliographystyle{plainnat}
\bibliography{complexity.bib}

\end{document}